\documentclass[11pt,reqno]{article}
\usepackage[round]{natbib}

\usepackage{amsmath,amsfonts,amsthm}
\usepackage[margin=1in]{geometry}
\usepackage[hyphens]{url}
\usepackage{hyperref}
\usepackage{graphicx}
\usepackage{verbatim}
\usepackage{enumitem} 
\usepackage{color}

\numberwithin{equation}{section}
\theoremstyle{plain}
\newtheorem{Definition}{Definition}[section]
\newtheorem{Remark}{Remark}[section]
\newtheorem{Theorem}{Theorem}[section]
\newtheorem{Lemma}{Lemma}[section]
\newtheorem{Proposition}{Proposition}[section]

\newtheorem{Assumption}{Assumption}[section]

\theoremstyle{definition} 
\newtheorem{Example}{Example}[section]

\newenvironment{examcont}[1]
  {\newcommand{\continuedexref}{\ref*{#1}}\continuedex}
  {\endcontinuedex}

\def \E{\mathbb{E}}
\def \F{\mathbb{F}}
\def \N{\mathbb{N}}
\def \P{\mathbb{P}}
\def \Q{\mathbb{Q}}
\def \R{\mathbb{R}}
\def \X{\mathbb{X}}

\def \Bc{{\mathcal B}}

\def \Ec{{\mathcal E}}
\def \Fc{{\mathcal F}}

\def \Lc{{\mathcal L}}

\def \Tc{{\mathcal T}}
\def \Qc{{\mathcal Q}}
\def \eps{\varepsilon}

\def \1{\mathds{1}}

\def \Pro#1{\mathbb{P}\left[{#1}\right]}

\begin{document}
\title{General Stopping Behaviors of Na\"{i}ve and Non-Committed Sophisticated Agents,
	with Application to Probability Distortion\thanks{
Y.-J. Huang gratefully acknowledges financial support from National Science Foundation (DMS-1715439)
and a start-up grant at the University of Colorado (11003573).
A. Nguyen-Huu has received support from the Energy \& Prosperity Chair, Risk Foundation, and Labex Entreprendre.
X. Y. Zhou gratefully acknowledges financial support through a start-up grant at Columbia University and through Oxford--Nie Financial Big Data Lab.}
\thanks{The authors would like to thank the two anonymous referees, whose careful reading and thoughtful suggestions substantially enhanced the quality of this paper.
}}
\author{
Yu-Jui Huang\thanks{
Department of Applied Mathematics, University of Colorado, Boulder, CO 80309, USA, \texttt{yujui.huang@colorado.edu}}
\and
Adrien Nguyen-Huu\thanks{
CEE-M, Univ Montpellier, CNRS , INRA, Montpellier SupAgro, Montpellier, France, \texttt{adrien.nguyen-huu@umontpellier.fr}
 }
\and
Xun Yu Zhou\thanks{
Department of Industrial Engineering and Operations Research, Columbia University, New York, New York 10027, USA, \texttt{xz2574@columbia.edu}}
}

\maketitle

\begin{abstract}
We consider the problem of stopping a diffusion process with a payoff functional 
that renders the problem time-inconsistent.
We study stopping decisions  of na\"{i}ve agents who reoptimize continuously in time, as well as equilibrium strategies of sophisticated agents who anticipate but lack control over their future selves' behaviors. When the state process is one dimensional and the payoff functional satisfies some regularity conditions, we prove that any equilibrium can be obtained as a fixed point of an operator. This operator represents strategic reasoning that takes the future selves' behaviors into account.
We then apply the general results to the case when the agents distort probability
and the diffusion process is a geometric Brownian motion. The problem is
inherently time-inconsistent as the level of distortion of a same event changes over time. We show how the strategic reasoning may turn a na\"{i}ve agent into a sophisticated one. Moreover, we derive stopping strategies of the two types of agent for various parameter specifications of the problem, illustrating rich behaviors beyond the extreme ones such as ``never-stopping" or ``never-starting".
\end{abstract}

\textbf{JEL:} G11, I12
\smallskip

\textbf{MSC (2010):} 60G40, 91B06
\smallskip

\textbf{Keywords:} Optimal stopping, probability distortion, time inconsistency,  na\"{i}ve and sophisticated agents, equilibrium stopping law

\section{Introduction}
\label{sec:introduction}

Optimal stopping is to determine the best (random) time to stop a stochastic process
so as to maximize a given payoff arising from the stopped state of the process.
Applications of such a timing problem are abundant including stock trading (e.g.,  the best time to sell a stock),
option pricing (e.g.,  American options) and real options (e.g.,  the best time to invest in a project).
Two main classical approaches to solving  optimal stopping problems are dynamic programming and martingale theory,
which are both based foundationally on the assumption of time-consistency, namely,
any optimal stopping rule planned today remains optimal tomorrow.

The assumption of time-consistency (also called dynamic-consistency) is rooted in the premise that
an individual's preferences are consistent over time and will not change as time goes by or circumstances evolve.
However, this premise is all too vulnerable to stand the test of reality.
A gambler may have planned initially to leave the casino after having made \$1000,
but then decides to go on gambling when this target has been reached.
This is because the gambler's risk preference has changed after winning \$1000;
he has become more risk-seeking as the result of the triumph.
Indeed, extensive empirical and experimental studies all point to the fact that
time-inconsistency is the rule rather than the exception.\footnote{Refer to the classical papers \cite{S55}, \cite{KP77} for detailed discussions on time-inconsistency, as well as \cite{Thaler81}, \cite{LT89} for empirical and experimental evidence.}

In the absence of time-consistency,  whatever an ``optimal" plan derived at this moment is generally not optimal evaluated at the next moment;
hence there is no such notion of  a ``dynamically optimal strategy" good for the entire planning horizon as is the case with a time-consistent model.
Economists, starting from \cite{S55},
instead have described the responses or behaviors of three types of individuals when facing a time-inconsistent situation.
Type 1, the {\it na\"{i}ve agent}, chooses whatever seems to be optimal with respect to his current preference,
without knowing the fact that his preferences may change over time.
Type 2 is the {\it pre-committed agent}, who optimizes only once at the initial time and then sticks to the resulting  plan in the future.
Type 3, the {\it sophisticated agent}, is aware of the fact that his ``future selves" will overturn his current plan (due to the lack of commitment)
and selects the best present action taking the future disobedience as a constraint.
The resulting strategy is called a (subgame perfect) {\it equilibrium} from which no incarnation of the agent has incentive to deviate.

In our view, a time-inconsistent model is largely {\it descriptive} as opposed to its time-consistent counterpart that is mainly {\it prescriptive}. The objective of the former is to describe behaviors of the different types of agents, rather than to advise on the best course of action\footnote{The strategies of the three types of agents coincide if the problem is time consistent.}.

The first objective of this paper is to study the stopping strategies of Type 1 (na\"{i}ve) and Type 3 (sophisticated) agents for a general time-inconsistent problem.\footnote{Pre-committed strategies (i.e., those of Type 2's) have been investigated for specific class of problems, e.g., ones involving probability distortion in \cite{XZ13}. See discussions below.}
A na\"{i}ve stopping law is one such that, at any given time and state,
it coincides with the optimal stopping law at {\it that} particular pair of time and state.
Definition of equilibrium strategies in the continuous-time setting, on the other hand, is more subtle.
Starting with \cite{EL06}, and followed among other by \cite{EL10}, \cite{yong2012time},  \cite{HJZ12}, \cite{BKM17}, and \cite{HJZ17},
an equilibrium for a control problem is defined as a control process that satisfies a first-order inequality condition on some spike variation of the control.
\cite{EWZ2017} apply this definition to a stopping problem by turning the latter into a control problem.
However, it remains a problem to rigorously establish the equivalence between  this first-order condition and
the zeroth-order condition in the original definition of a subgame perfect equilibrium.

In this paper we follow the formulation of \cite{HN16} to define an equilibrium stopping law
(although therein, a special stopping problem with a non-exponential discount factor featuring decreasing impatience is considered).
The idea of this formulation is that, for any given stopping law,
the sophisticated agent improves it by a level of strategic reasoning through anticipating his future selves' behaviors.
The agent then performs additional levels of similar reasoning until he cannot further improve it, which is an equilibrium.
Mathematically, an equilibrium is a fixed-point of an operator that represents one level of this strategic thinking. This in particular coincides with the zeroth-order condition in the original definition of a subgame perfect equilibrium.

The general existence of such a fixed-point is a largely open question.
The first contribution of this paper is to prove that,
assuming that the state follows a one-dimensional diffusion process
and the payoff functional satisfies some (very mild) measurable and Fatou type conditions,
any equilibrium strategy can be obtained by applying the aforementioned operator
repeatedly on an initial stopping law and then taking the limit.

The second objective of the paper is to apply the general results
to the problem when probability is distorted.
Experimental evidence supports the phenomenon of probability distortion (or weighting);
in particular that people inflate the small probabilities of occurrence of both extremely favorable and extremely unfavorable events.
Behavioral theories such as rank-dependent utility (RDU; \cite{Q82,S89}) and cumulative prospect theory (CPT; \cite{TK92})
include probability distortion as a key component and  consider it as a part of the risk preference.\footnote{At the outset the main difference between RDU and CPT is that the outcome utility function is concave in the former while $S$-shaped in the latter. The difference is more fundamental in economic interpretations. In the CPT,
there is a reference point in wealth and the input 
of the utility function (called {\it value function} in CPT) is the deviation of wealth from the reference point. The $S$-shaped value function captures the common psychological bias that people are risk-averse for gains (when the aforementioned deviation is positive) and risk-seeking for losses (when the deviation is negative). Both theories feature
probability distortion which is the cause of time-inconsistency. So  some of the discussions in this paper will be on the two theories simultaneously.}
On the other hand, the level of probability distortion associated with a same future event changes over time
because the conditional probability of that event evolves dynamically and the distortion is nonlinear.
Therefore time-inconsistency is inherent in the presence of  (non-trivial) probability distortion.

We study the stopping problem of
 a geometric Brownian motion and a RDU type of payoff functional involving probability distortion,
 whose pre-committed stopping strategies have been studied thoroughly by \cite{XZ13}
 and are conceptually and technically useful for the present paper.\footnote{Due to time-inconsistency,
	the dynamic programming principle or the martingale approach does not work for deriving pre-committed strategies.
	\cite{XZ13} develop a new approach with a combination of the so-called quantile/distribution formulation
	and Skorokhod's embedding theory to solve the problem.}
  Besides characterizing  the stopping behaviors of the na\"{i}ve and the sophisticated,
 we are particularly interested in the connection and potential transformation among the three type of agents:
 the na\"{i}ve agent in effect solves  a pre-committed problem at every state and time who, interestingly,
 may turn himself into a  sophisticate if he carries out several levels (sometimes just one level) of the strategic reasoning.
 In this regard, the fixed-point characterization of equilibrium strategies brings about a significant advantage over the first-order characterization,
 as the former describes the ``reasoning process" with which an initial na\"{i}ve strategy is changed into a sophisticated one.
This change is demonstrated explicitly in this paper through practical examples motivated by utility maximization and real options valuation.
A related advantage of our fixed-point characterization is that one can easily generate a large class of equilibria through fixed-point iterations. This is in contrast to the first-order characterization in the literature, where finding multiple equilibria is usually a challenge.

\cite{ebert2015until} show that under the assumptions leading to ``skewness preference in the small",
a na\"{i}ve agent following CPT will never stop ``until the bitter end".
These assumptions essentially ensure that probability distortion on small probabilities of big gains outweighs loss aversion;
hence at any wealth level the agent always favors a small,
right-skewed risk over the deterministic payoff resulting from the immediate stop.
The antithesis of this result is presented in \cite{ESnever2017}:
assuming that there is an arbitrarily strong probability distortion on small probabilities of big losses
and considering only two-threshold type of strategies\footnote{With a two-threshold type of strategy,
	the agent stops whenever the state process reaches $a$ or $b$ where $a<b$ are the two prescribed thresholds.
	\cite{ESnever2017} consider pure Markovian strategies,
	which are essentially two-threshold strategies.},
a sophisticate
will stop immediately or, equivalently, ``never, ever getting started".
This is because when the process is sufficiently close to the upper threshold the stopped state under the
two-threshold strategy becomes highly left-skewed,
which is unattractive to a CPT agent with a typical inverse-S shaped distortion function.
Knowing that the future selves who are close to the upper threshold will not carry out this strategy,
in equilibrium each  current self will simply not start it.

The authors of these two papers acknowledge that both the ``never stop" and ``never start"
behaviors are ``extreme" representing ``unrealistic predictions" under the CPT.
In this paper, we complement their results by studying the cases in which the assumptions of \cite{ebert2015until} and \cite{ESnever2017} do not hold, and
showing  that the behaviors of the na\"{i}ve 
and sophisticated agents are far richer than the mere extreme ones. Indeed, depending on the parameters of the geometric Brownian motion and the payoff functional, both types of agents may {\it want} to start and end,
relying on a (one) threshold-type stopping strategy. This suggests that the RDU model may offer
more realistic prediction if we allow the distortion functions to take various shapes and, meanwhile, we take
into consideration the intertwining relationship between the preference of the agent and the characteristics of the process he monitors.

The rest of the paper is organized as follows. In Section 2 we formulate a general time-inconsistent stopping problem, define the na\"ive and sophisticated equilibrium stopping laws, and introduce the operator that represents a strategic reasoning of a sophisticated agent.  Section 3 characterizes the set of equilibrium stopping laws when the state process is one dimensional. In Section 4 we apply the general results to the problem with probability distortion.
We demonstrate how time-inconsistency may arise in this case, followed by deriving
sophisticated equilibrium strategies in the presence of time inconsistency.
Specifically, Subsection~\ref{subsec:demonstration} presents a detailed case study of how one can find a large class of equilibria and possibly compare/rank them, under a particular model specification; Subsection~\ref{subsec:examples} contains a series of practical examples where the payoff is either a power utility or a European option payoff function, and obtains both na\"ive and sophisticated strategies. Finally,  Appendices collect technical proofs.


\section{Na\"ive and Equilibrium Stopping Laws}\label{sec:equilibrium}

This section sets out to introduce time-inconsistency for a general optimal stopping problem and
the ways we deal with the issue.

Consider a probability space $(\Omega,\Fc,\P)$ 
that supports a strong Markov 
process $X:\R_+\times\Omega\mapsto\X$ with initial value $X_0 =x\in\X$, where $\X$, the state space of $X$, is a connected subset of $\R^d$. We will constantly use the notation $X^x$ to emphasize the dependence of the process on the initial value $x$. We denote by $\Bc(\X)$ the Borel $\sigma$-algebra of $\X$. Let $\F=\{\Fc_t\}_{t\ge 0}$ be the $\P$-augmentation of the natural filtration generated by $X$, and let $\Tc$ be the set of all $\F$-stopping times that may be infinite with positive probabilities. Note that $\Tc$ includes non-Markovian stopping times\footnote{Non-Markovian stopping
is widely used in practice. For example, the ``trailing stop" in stock trading is a non-Markovian selling rule. \cite{HHOZ17} shows that in a casino gambling model under CPT path-dependent stopping strategies
strictly dominate Markovian ones.}.

At any given state $x\in\X$, a decision-maker (an {\it agent}) needs to decide when to stop the process $X$. 
If he chooses to stop at $\tau\in\Tc$, he receives the payoff $J(x;\tau)$, where $J:\X\times\Tc\mapsto\R$ is a given objective functional.
The agent intends to maximize his payoff by choosing an appropriate stopping time $\tau\in\Tc$, i.e., he aspires to achieve
\begin{equation}\label{v}
\sup_{\tau\in\Tc} J(x;\tau).
\end{equation}
A well-studied example of the objective functional is the expected payoff
\begin{equation}\label{standard J}
J(x;\tau) := \E[u(X^x_\tau)],
\end{equation}
where $u:\X\mapsto\R$, a
Borel measurable function, is the so-called payoff function; namely, $u(x)$ is the payoff he receives if he decides to stop
immediately.  
Another example is a problem with  general non-exponential discounting:
\begin{equation}\label{non-exponential}
J(x;\tau) := \E[\delta(\tau)u(X^x_\tau)],\quad 
\end{equation}
where $\delta:[0,\infty)\to [0,1]$ is decreasing with $\delta(0)=1$. The mean-variance criterion can also be considered, with
\begin{equation}\label{M-V}
J(x;\tau) := \E[X^x_\tau] - \gamma \text{Var}(X_\tau^x),
\end{equation}
for some $\gamma>0$. In Section~\ref{sec:prob dist} we will  focus on an objective functional where the probability is ``distorted", as defined in \eqref{J} below.

As $\{X^x_t\}_{t\ge 0}$ evolves over time, one can reexamine and solve the optimal stopping problem \eqref{v} at every moment $t\ge 0$. A natural, conceptual question arises: suppose an optimal stopping time $\hat\tau_x$ can be found for each and every $x\in\X$, are $\hat\tau_x$ and $\hat{\tau}_{X^x_t}$, $t>0$, consistent with each other?

The notion of ``consistency" in this question can be formalized as follows:
\begin{Definition}[Time-Consistency]\label{def:time-consistency}
Suppose an optimal stopping time $\hat{\tau}_x\in\Tc$ of the problem \eqref{v} exists for all $x\in\X$. We say problem \eqref{v} is time-consistent if for any $t>0$
and any $x\in\X$,
\begin{equation}\label{tidef}
\hat{\tau}_x = t+\hat{\tau}_{X^x_t}\quad \hbox{a.s. on}\ \ \{\hat\tau_x\ge t\};
\end{equation}
otherwise problem \eqref{v} is time-inconsistent.
\end{Definition}
Intuitively, time-inconsistency means that an optimal strategy planned at this moment may no longer be optimal at the next moment. It is well known that problem \eqref{v} with the expected payoff  in \eqref{standard J} is time-consistent.\footnote{With the expected payoff \eqref{standard J}, the tower rule holds:
$\E[u(X^x_\tau)]=\E\left[\E[u(X^x_\tau)|{\cal F}_t]\right]$
for any
$\tau\in\Tc$ and $t\geq0$. This means that the stopping problem at a given time $t$
is consistent with the stopping problem at the initial time, so long as the problem has not yet been stopped  at $t$. Hence one can apply classical
approaches such as dynamic programming and martingale method to solve it; see e.g., \cite{S-book-78} and \cite{KS-book-98}.}
For a general objective functional, problem \eqref{v} is predominantly time-inconsistent - for example when probability distortion is involved or non-exponential discounting is applied.

Time-inconsistency renders the very notion of ``dynamic optimization" null and void, for if one cannot commit his future selves to the optimal strategy he chooses today, then
today's optimal strategy has little use in a dynamic context. More specifically, for any given  state $x\in\X$ at $t=0$, suppose the agent finds an optimal stopping time $\hat\tau_x$. He actually wants, and indeed
 assumes, that all his future selves {\it will} follow $\hat\tau_x$, so that the optimal value $\sup_{\tau\in\Tc}J(x;\tau)=J(x;\hat\tau_x)$ can be attained. However, his future self at time $t>0$ would like to follow his {\it own} optimal stopping time $\hat\tau_{X^x_t}$, which may not be consistent with $\hat\tau_x$
 in the sense of (\ref{tidef}). If the agent at time $0$ does not have sufficient control over his future selves' behavior, $\hat\tau_x$ will not be carried  through, and the optimal value $\sup_{\tau\in\Tc}J(x;\tau)$ as initially figured out at $t=0$ will thus {\it not} be attained.

As discussed in the introduction,
three types of agents have been described  in literature in the presence of time-inconsistency.
A na\"{i}ve agent simply follows an optimal stopping time $\hat\tau_{X^x_t}$ at every moment $t\ge 0$, without knowing  the underlying time-inconsistency.
A sophisticated agent who is aware of the time-inconsistency but lacks commitment, by contrast, works on consistent planning:
he takes into account his future selves' behaviors,
and tries to find a stopping strategy that, once being employed over time,
none of his future selves has incentives to deviate from it.
Finally, a sophisticated agent who is able to commit simply solves the problem {\it once} at $t=0$ and then sticks to the corresponding stopping plan.
The problem of the last type, the so-called pre-committed agent,
is actually a static (instead of dynamic) problem and has been solved in various contexts, such as optimal stopping under probability distortion in \cite{XZ13}, mean--variance portfolio selection in \cite{ZL00}, optimal stopping with nonlinear constraint on expected time to stop in \cite{Miller06}, and optimal control of conditional Value-at-Risk in \cite{MY17}.
The goal of this paper is to study the behaviors of the first two types of agents.
It turns out, as we will demonstrate, that the solutions of the two are interwound,
and both depend heavily on that of the last type.

Now we provide precise formulations of the stopping strategies of the first two types of agents (hereafter referred respectively
as the na\"{i}ve and sophisticated agents). We first introduce the notion of {\it stopping laws}, taken from Definition 3.1 in \cite{HN16}.

\begin{Definition}[Stopping Laws]\label{def:stopping policy}
A Borel measurable function $\tau:\X\mapsto\{0,1\}$ is called a (Markovian) stopping law. We denote by $\Tc(\X)$ the collection of all stopping laws.
\end{Definition}

The notion of stopping laws is analogous to that of feedback control laws in control theory. A stopping law is {\it independent} of any state process; however for any given process, such a law  induces
a stopping decision (in response to any current state) in the following manner.

Given the process $X$ with $X_0=x$, each stopping law $\tau\in\Tc(\X)$ governs when the agent stops $X$: the agent stops at the first time when $\tau(X_t)$ yields the value $0$, i.e., at the moment
\begin{equation}\label{L}
\Lc\tau(x):= \inf\{t\ge 0: \tau(X^x_t) = 0\}.
\end{equation}
In other words, $\Lc\tau(x)$ is the stopping time {\it induced} by the stopping law $\tau$ when the current state of the process is $x$.

We first define the stopping law used by a na\"{i}ve agent.

\begin{Definition}[Na\"{i}ve Stopping Law]\label{def:induced policy}
Denote by $\{\hat\tau_x\}_{x\in\X}\subset\Tc$ a collection of optimal stopping times of \eqref{v}, while noting that $\hat\tau_x$ may not exist for some $x\in\X$. Define $\hat\tau:\X\mapsto\{0,1\}$ by
\begin{equation}\label{induced policy}
\hat\tau(x):=
\begin{cases}
0,\quad \hbox{if}\ \hat\tau_{x}=0,\\
1,\quad \hbox{if}\ \hat\tau_{x}>0\ \hbox{or}\ \hat\tau_x\ \hbox{does not exist}.
\end{cases}
\end{equation}
If $\hat\tau$ is Borel measurable, we say that it is the {\it na\"{i}ve stopping law} generated by $\{\hat\tau_x\}_{x\in\X}$.
\end{Definition}

The stopping law $\hat\tau\in\Tc(\X)$ defined above describes the behavior of a na\"{i}ve agent. For any current state $x>0$, a na\"{i}ve agent decides to stop or to continue simply by following the optimal stopping time $\hat\tau_x$, if such a stopping time exists. If $\hat\tau_x$ fails to exist at some $x\in\X$, we must have $\sup_{\tau\in\Tc}J(x;\tau)> u(x)$ (otherwise, we can take $\hat\tau_x = 0$). Although the optimal value $\sup_{\tau\in\Tc}J(x;\tau)$ cannot be attained, the na\"{i}ve agent can pick some $\tau\in\Tc$ with $\tau>0$ a.s. such that $J(x;\tau)$ is arbitrarily close to $\sup_{\tau\in\Tc}J(x;\tau)$, leading to $J(x;\tau)> u(x)$. That is, a na\"{i}ve agent determines that it is better to continue than to stop at $x\in\X$. This is why we set $\hat\tau(x)=1$ when $\hat\tau_x$ does not exist in the above definition\footnote{There is a technical question as to whether optimal stopping times $\{\hat\tau_x\}_{x\in\X}$ of \eqref{v} can be chosen such that $\hat\tau$ defined in \eqref{induced policy} is Borel measurable. The answer is positive under the standard expected payoff formulation \eqref{standard J}. For an objective functional under probability distortion, we will see in Sections~\ref{sec:prob dist} that $\hat\tau$ is measurable in all examples we explore.}.

The formulation of na\"{i}ve stopping laws first appeared in Remark 3.3 of \cite{HN16} for a concrete problem under non-exponential discounting. Here, we extend the notion to a general setting  and allow the possibility that an optimal stopping time does not exist.

\begin{Remark}
Time-consistency in Definition~\ref{def:time-consistency} can also be formulated as
\begin{equation}\label{tidef'}
\hat\tau_x = \Lc\hat\tau(x)= \inf\{t\ge 0: \hat\tau_{X^x_t}=0\}\quad \hbox{a.s.}\quad \forall x\in\X.
\end{equation}
Note that the second equality follows from \eqref{induced policy} directly. The equivalence between \eqref{tidef} and \eqref{tidef'} can be observed as follows. Suppose \eqref{tidef} holds. For any $x\in\X$ and a.e. $\omega\in\Omega$, let $\ell:= \hat\tau_x(\omega)\ge 0$. By \eqref{tidef}, $\hat\tau_{X^x_t}(\omega) = \ell-t$, $\forall\ 0\le t\le \ell$. This yields $\Lc\hat\tau(x)(\omega) = \ell = \hat\tau_x(\omega)$, and thus \eqref{tidef'} holds. Conversely, suppose \eqref{tidef'} is true. For any $x\in\X$ and $t>0$, on the set $\{\hat\tau_x\ge t\}$ we have $\inf\{s\ge 0: \hat\tau_{X^x_s}=0\} = \Lc\hat\tau(x) = \hat\tau_x \ge t$, and thus $\hat\tau_{X^x_s}\neq 0$ for $0\le s<t$. This implies $\inf\{s\ge 0: \hat\tau_{X^x_s}=0\} = t+\inf\{s\ge 0: \hat\tau_{X^y_s}=0\}$, with $y:= X^x_t$. This, together with \eqref{tidef'}, gives $\hat\tau_x = t+\hat\tau_{X^x_t}$ on $\{\hat\tau_x\ge t\}$, i.e., \eqref{tidef} is proved.

The formulation \eqref{tidef'} means that the moment at which a na\"{i}ve agent will stop, i.e., $\Lc\hat\tau(x)$, is entirely the same as what the original pre-committed optimal stopping time (which was planned at time 0 when the process was in state $x$), $\hat\tau_x$, prescribes.
\end{Remark}

Now we turn to characterizing a sophisticated agent, by using the game-theoretic argument introduced in Section 3.1 of \cite{HN16}.
Suppose the agent starts with an initial stopping law $\tau\in\Tc(\X)$. At any current state $x\in\X$, the agent carries out the following game-theoretic reasoning: if all my future selves will follow $\tau\in\Tc(\X)$, what is the best stopping strategy at current time in response to that? As the agent at current time can only choose to stop or to continue, he simply needs to compare the payoffs resulting from these two different actions. If the agent stops at current time, he obtains the payoff $u(x)$ immediately. If the agent chooses to continue at current time, {\it given} that all his future selves will follow $\tau\in\Tc(\X)$, the agent will eventually stop at the moment
\begin{equation}\label{L*}
\Lc^*\tau(x):= \inf\{t> 0: \tau(X^x_t) = 0\},
\end{equation}
leading to the payoff
\[
J(x;\Lc^*\tau(x)).
\]
Note the subtle difference between the two stopping {\it times}, $\Lc^*\tau(x)$ and $\Lc\tau(x)$: at any moment one may continue under the former even though the latter may instruct to stop.

Some conclusions can then be drawn: (i) The agent should stop at current time if $u(x)>J(x;\Lc^*\tau(x))$, and continue at current time if $u(x)<J(x;\Lc^*\tau(x))$. (ii) For the case where $u(x)= J(x;\Lc^*\tau(x))$, the agent is indifferent between stopping and continuation at current time; there is therefore no incentive for the agent to deviate from the originally assigned stopping strategy $\tau(x)$ now. This is already the best stopping strategy (or law) at current time (subject to all the future selves following $\tau\in\Tc(\X)$), which can be summarized as
\begin{equation}\label{Theta}
\Theta \tau (x) :=
\begin{cases}
0,&\quad\hbox{if}\ x\in S_\tau,\\
1,&\quad\hbox{if}\ x\in C_\tau,\\
\tau(x),&\quad\hbox{if}\ x\in I_\tau,
\end{cases}
\end{equation}
where
\begin{align*}
S_\tau &:= \{x\in\X:J(x;\Lc^*\tau(x))<u(x)\},\\
C_\tau &:= \{x\in\X:J(x;\Lc^*\tau(x))>u(x)\},\\
I_\tau &:= \{x\in\X:J(x;\Lc^*\tau(x))=u(x)\}
\end{align*}
are the stopping region, the continuation region, and the indifference region, respectively.

\begin{Remark}
The definitions of $\Theta$ in \eqref{Theta} and the three regions $S_\tau$, $C_\tau$, and $I_\tau$ are motivated by the formulation in \cite{HN16}. There, the objective functional $J$ is restricted to the form \eqref{non-exponential}. This paper allows for a much larger class of objective functionals, specified by Assumption~\ref{asm:J} below.
\end{Remark}

\begin{Remark}[Non-Existence of Optimal Stopping Times]
The way we formulate time-consistency in Definition~\ref{def:time-consistency}, which follows a long thread of literature in Economics and Mathematical Finance, hinges on the existence of (pre-committed) optimal controls/stopping times at every state. When an optimal strategy fails to exist, it is unclear how to define time-consistency. Recently, \cite*{KMZ17} point out this problem, and propose a possible way to formulate time-consistency, without referring to optimal strategies, via backward stochastic differential equations (BSDEs).
However,  our game-theoretic approach described above -- which eventually leads to a sophisticated stopping strategy -- does not rely on the existence of optimal strategies.
\end{Remark}

Given any arbitrarily given  stopping law $\tau\in\Tc(\X)$, the above game-theoretic thinking gives rise to an alternative stopping law, $ \Theta \tau$, which is {\it at least as good as} $\tau$ to this
sophisticated agent. Naturally, an equilibrium stopping law can be defined as one that is invariant under  such a  game-theoretic reasoning. This motivates Definition \ref{def:equilibrium} below.

However, to carry this idea through, we need to identify conditions for the objective functional $J$ under which $ \Theta \tau$ is indeed a stopping law satisfying the measurability requirement per Definition \ref{def:stopping policy}.
To this end, for any $\tau:\X\mapsto\{0,1\}$, we consider the {\it kernel} of $\tau$, which is the collection of states
at which the agent stops,  defined by
\[
\ker(\tau) := \{x\in\X:\tau(x)=0\}.
\]

\begin{Remark}
For any $\tau\in\Tc(\X)$ and $x\in\X$, $\Lc\tau(x)$ and $\Lc^*\tau(x)$ belong to $\Tc$. Indeed, the measurability of $\tau\in\Tc(\X)$ implies
\begin{equation}\label{kernel Borel}
\ker(\tau) \in \Bc(\X).
\end{equation}
Thanks to the right continuity of the filtration $\F$,
\begin{equation}\label{hit ker}
\Lc\tau(x) = \inf\{t\ge 0: X^x_t\in\ker(\tau)\}\quad \hbox{and}\quad \Lc^*\tau(x) = \inf\{t> 0: X^x_t\in\ker(\tau)\}
\end{equation}
are $\F$-stopping times.
\end{Remark}

Now we introduce the assumptions that will ensure the measurability of $\Theta\tau$.

\begin{Assumption}\label{asm:J}
The objective function $J:(\X,\Tc)\mapsto\R$ satisfies
\begin{itemize}
\item [(i)] for any $D\in\Bc(\X)$, the map $x\mapsto J(x;T^x_D)$ is Borel measurable, where
\begin{equation}\label{T_D}
T^x_D:=\inf\{t>0:X^x_{t}\in D\}.
\end{equation}
\item [(ii)] for any sequence $\{D_n\}_{n\in\N}$ in $\Bc(\X)$ such that $D_n\subseteq D_{n+1}$ for all $n\in\N$,
\[
\liminf_{n\to\infty}J(x;T^x_{D_n})\le  J(x;T^x_D),\quad \hbox{where}\ D:= \cup_{n\in\N} D_n.
\]
\end{itemize}
\end{Assumption}

\begin{Remark}\label{rem:satisfied}
Assumption~\ref{asm:J} is very mild, and is satisfied by the classical formulation \eqref{standard J}, as well as stopping under non-exponential discounting \eqref{non-exponential} and the mean-variance criterion \eqref{M-V}.
As part (i) is easy to verify in these applications, the discussion below focuses on part (ii).

Under \eqref{standard J}, to ensure that problem \eqref{v} is well-defined, a standard condition imposed is $\E\big[\sup_{0\le t\le \infty}u(X^x_t)\big] <\infty$, where $u(X^x_\infty):= \limsup_{t\to\infty} u(X^x_t)$. The dominated convergence theorem can then be applied to get
\begin{equation}\label{J to J}
\lim_{n\to\infty}J(x;T^x_{D_n})=  J(x;T^x_D)
\end{equation}
for $\{D_n\}\subseteq\Bc(\X)$, $D_n\subseteq D_{n+1}$, and $D:= \cup_{n\in\N} D_n$. 
Similarly, under \eqref{non-exponential} (resp. \eqref{M-V}), to ensure that \eqref{v} is well-defined, a standard condition imposed is $\E\big[\sup_{0\le t\le \infty}\delta(t)u(X^x_t)\big] <\infty$ (resp. $\E\big[\sup_{0\le t\le \infty}(X^x_t)^2\big] <\infty$). The dominated convergence theorem can then be applied to get \eqref{J to J}. With \eqref{J to J}, Assumption~\ref{asm:J}-(ii) is trivially satisfied.
\end{Remark}

\begin{Proposition}
Suppose Assumption~\ref{asm:J} (i) holds. Then $\Theta\tau\in\Tc(\X)$ whenever $\tau\in\Tc(\X)$.
\end{Proposition}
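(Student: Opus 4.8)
The plan is to verify directly that $\Theta\tau$ meets both requirements of Definition~\ref{def:stopping policy}: it is $\{0,1\}$-valued and Borel measurable. The first is immediate from \eqref{Theta}, since $\Theta\tau$ equals $0$ on $S_\tau$, equals $1$ on $C_\tau$, and equals $\tau(x)\in\{0,1\}$ on $I_\tau$. For measurability, because the codomain is $\{0,1\}$ it suffices to show that $\ker(\Theta\tau)=\{x\in\X:\Theta\tau(x)=0\}$ is Borel. Reading off \eqref{Theta}, a state $x$ lies in this kernel exactly when $x\in S_\tau$, or when $x\in I_\tau$ and $\tau(x)=0$; hence
\[
\ker(\Theta\tau)=S_\tau\cup\bigl(I_\tau\cap\ker(\tau)\bigr).
\]
Since $\ker(\tau)\in\Bc(\X)$ by \eqref{kernel Borel}, and Borel sets are stable under finite intersections and unions, it reduces to proving that $S_\tau$ and $I_\tau$ are Borel.

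The crux is the Borel measurability of the map $x\mapsto J(x;\Lc^*\tau(x))$. Here I would invoke \eqref{hit ker}, which identifies $\Lc^*\tau(x)=\inf\{t>0:X^x_t\in\ker(\tau)\}=T^x_{\ker(\tau)}$ in the notation of \eqref{T_D}, taking $D:=\ker(\tau)$. As just noted $\ker(\tau)\in\Bc(\X)$, so Assumption~\ref{asm:J}(i) applies verbatim with this $D$ and yields that $x\mapsto J\bigl(x;T^x_{\ker(\tau)}\bigr)=J(x;\Lc^*\tau(x))$ is Borel measurable. This is the only place Assumption~\ref{asm:J}(i) enters, and the single substantive step of the proof.

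With this in hand, set $g(x):=J(x;\Lc^*\tau(x))-u(x)$; since $u$ is Borel measurable by the standing assumption on the payoff function, $g$ is a difference of Borel functions and hence Borel. The three regions are then level/sublevel sets of $g$, namely $S_\tau=g^{-1}\bigl((-\infty,0)\bigr)$, $C_\tau=g^{-1}\bigl((0,\infty)\bigr)$, and $I_\tau=g^{-1}(\{0\})$, all Borel. Substituting into the displayed decomposition shows $\ker(\Theta\tau)$ is Borel, so $\Theta\tau\in\Tc(\X)$. I expect the proof to be routine set algebra once the measurability of $x\mapsto J(x;\Lc^*\tau(x))$ is secured, the only genuine obstacle being the recognition that $\Lc^*\tau(x)$ is precisely the hitting time $T^x_{\ker(\tau)}$ of a Borel target; Assumption~\ref{asm:J}(ii) plays no role here, consistent with the hypothesis invoking only part (i).
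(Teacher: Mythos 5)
Your argument is correct and follows essentially the same route as the paper's proof: identify $\Lc^*\tau(x)$ with the hitting time $T^x_{\ker(\tau)}$ via \eqref{hit ker}, apply Assumption~\ref{asm:J}(i) with $D=\ker(\tau)$ to get Borel measurability of $x\mapsto J(x;\Lc^*\tau(x))$, and conclude via the decomposition $\ker(\Theta\tau)=S_\tau\cup(I_\tau\cap\ker(\tau))$. The only difference is that you spell out the level-set argument for $S_\tau$, $C_\tau$, $I_\tau$ explicitly, which the paper leaves implicit.
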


\begin{proof}
In view of \eqref{hit ker}, $\Lc^*\tau(x)$ is simply $T^x_{\ker(\tau)}$. Thus, by Assumption~\ref{asm:J} (i), $x\mapsto J(x;\Lc^*\tau(x))$ is Borel measurable, whence $S_\tau$, $C_\tau$, and $I_\tau$ are all in $\Bc(\X)$. By \eqref{Theta} and \eqref{kernel Borel},
\begin{align*}
\ker(\Theta\tau)&= S_\tau \cup (I_\tau\cap \ker(\tau))\in\Bc(\X),
\end{align*}
which implies that $\Theta\tau:\X\mapsto \{0,1\}$ is Borel measurable.
\end{proof}

\begin{Definition}[Equilibrium (Sophisticated) Stopping Law]\label{def:equilibrium}
A stopping law $\tau\in\Tc(\X)$ is called an equilibrium if $\Theta\tau(x)=\tau(x)$ for all $x\in\X$. We denote by $\Ec(\X)$ the collection of all equilibrium stopping laws.
\end{Definition}

\begin{Remark}[Trivial Equilibrium]\label{rem:trivial}
As with almost all the Nash-type equilibria, existence and uniqueness are important problems.
 In our setting, a stopping law $\tau\in\Tc(\X)$ defined by $\tau(x)=0$ for all $x\in\X$ is trivially an equilibrium. Indeed, for any $x\in\X$, $\Lc^*\tau(x)=0$ and thus $J(x;\Lc^*\tau(x))=u(x)$. This implies $I_\tau = \X$. By \eqref{Theta}, we conclude $\Theta\tau(x)=\tau(x)$ for all $x\in\X$.
\end{Remark}

To search for equilibrium stopping laws, the general (and natural) idea is to perform fixed-point iterations on the operator $\Theta$:
starting from any $\tau\in\Tc(\X)$, take
\begin{equation}\label{tau0}
\tau_*(x) := \lim_{n\to\infty} \Theta^n\tau(x)\quad x\in\X.
\end{equation}

The above procedure admits a clear economic interpretation. At first, the (sophisticated) agent has an initial
stopping law $\tau$.  Once he starts to carry out the game-theoretic reasoning stipulated earlier, he realizes that the best stopping strategy for him, given that all future selves will follow $\tau$, is $\Theta\tau$. He therefore switches from $\tau$ to $\Theta\tau$. The same game-theoretic reasoning then implies that the best stopping strategy for him, given that all future selves will follow $\Theta\tau$, is $\Theta^2\tau$. The agent thus switches again from $\Theta\tau$ to $\Theta^2\tau$. This procedure continues until the agent eventually arrives at an equilibrium $\tau_*$, a fixed point of $\Theta$ and a strategy he cannot further improve upon by the procedure just described. In economic terms, each application of $\Theta$ corresponds to an additional level of strategic reasoning.

Mathematically, we need to prove that the limit taken in \eqref{tau0} is well-defined, belongs to $\Tc(\X)$,   and satisfies $\Theta\tau_* = \tau_*$. In general, such results are not easy to establish, and remain largely an open question\footnote{For a stopping problem with expected utility and non-exponential discounting, \cite{HN16} show that fixed-point iterations do converge to equilibria, under appropriate conditions on the discount function and the initial stopping law $\tau\in\Tc(\X)$; see Theorem 3.1 and Corollary 3.1 therein.}.

 However, when $X$ is a one-dimensional diffusion process we will be able to derive the stopping strategies
 of the na\"{i}ve 
 and the sophisticated agents in a fairly complete and explicit manner.
 This is not only because, in the one-dimensional case,
 pre-committed stopping laws have been obtained rather thoroughly by \cite{XZ13} on which a na\"{i}ve strategy depends,
 but also because the fixed-point iteration \eqref{tau0} turns out to be much more manageable and does converge to an equilibrium,
 due to a key technical result (Lemma \ref{lem:T_x=0}) that holds only for a one-dimensional process.

	
\section{Equilibrium Stopping Laws: The One-Dimensional Case}\label{subsec:1-D case}
Let $\X$ be an open interval in $\R$, i.e., $\X=(\ell,r)$ for some $-\infty\le \ell<r\le\infty$. Consider Borel measurable $a,b:\X\mapsto \R$, with $a(\cdot)>0$, such that for any $x\in\X$, the stochastic differential equation 
\[
dX_t = b(X_t) dt + a(X_t) d B_t,\quad X_0=x,
\]
where $B$ denotes a standard one-dimensional Brownian motion, has a weak solution that is unique in law and does not attain the boundaries of $\X$ a.s., i.e., $\P(X^x_t \in (\ell,r)\ \forall t\ge 0)=1$.\footnote{This is guaranteed by appropriate conditions on $a$ and $b$. Specifically, $a(\cdot)>0$ and local integrability of $|b(\cdot)|/a^2(\cdot)$ imply the existence and uniqueness in law of a weak solution $X$. On the other hand, whether $X$ attains the boundaries of $\X$ is fully characterized by Feller's test for explosions. We refer to Sections 5.5.B, 5.5.C of \cite{KS-book-91} for a detailed exposition.} 

Define $\theta(\cdot):=b(\cdot)/a(\cdot)$,
and assume that
\begin{equation}\label{integrability}
\P\left[\int_0^t \theta^2(X_s) ds <\infty \right] =1,\quad \forall t\ge 0.
\end{equation}
Introduce the process
\begin{equation}\label{Z}
Z_t := \exp\left(-\int_0^t \theta(X_s)dB_s - \frac12 \int_0^t\theta^2(X_s)ds \right)\quad \hbox{for}\ t\ge 0,
\end{equation}
which is by definition a nonnegative local martingale.
Note that \eqref{integrability} is a standard condition ensuring that $\int_0^t \theta(X_s)dB_s$ is well-defined for all $t\ge 0$ a.s.

\begin{Proposition}\label{prop:ker increases}
Suppose $\X$ is an open interval in $\R$, 
Assumption~\ref{asm:J} holds, and $Z$ defined in \eqref{Z} is a martingale.
Then, for any $\tau\in\Tc(\X)$,
\begin{align}\label{ker increases}
&\ker(\tau)\subseteq I_\tau,\nonumber\\
\ker(\Theta^{n}\tau) =\ &S_{\Theta^{n-1}\tau}\cup\ker(\Theta^{n-1}\tau),\quad\forall n\in\N.
\end{align}
Hence, $\tau_*$ in \eqref{tau0} is well-defined, with
$
\ker(\tau_*) = \bigcup_{n\in\N} \ker(\Theta^n\tau).
$
\end{Proposition}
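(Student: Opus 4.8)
The plan is to reduce the whole statement to two ingredients: the set-theoretic description of $\ker(\Theta\tau)$ already recorded in the proof of the preceding (measurability) proposition, and the one-dimensional regularity property that every point is immediately revisited, i.e. $T^x_{\{x\}}:=\inf\{t>0:X^x_t=x\}=0$ a.s. This last fact is exactly where the martingale property of $Z$ enters: since $Z$ is a \emph{true} martingale, the measure defined by $d\Q=Z_t\,d\P$ on $\Fc_t$ is equivalent to $\P$, and by Girsanov's theorem it turns $X$ into a driftless (hence point-recurrent) one-dimensional diffusion, so regularity of points holds under $\Q$ and transfers back to $\P$ by equivalence. I would package this as Lemma~\ref{lem:T_x=0} and invoke it as a black box.

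First I would prove $\ker(\tau)\subseteq I_\tau$. Fix $x\in\ker(\tau)$. Because $x\in\ker(\tau)$, every time at which $X^x$ returns to $x$ is a time at which it lies in $\ker(\tau)$, so $\Lc^*\tau(x)=T^x_{\ker(\tau)}\le T^x_{\{x\}}=0$ a.s. by Lemma~\ref{lem:T_x=0}; since $\Lc^*\tau(x)\ge 0$ this forces $\Lc^*\tau(x)=0$ a.s. Hence $J(x;\Lc^*\tau(x))=J(x;0)=u(x)$, i.e. $x\in I_\tau$. This single step is the crux and the only genuinely one-dimensional part of the argument: in higher dimensions points are not regular, $\Lc^*\tau(x)$ need not vanish on $\ker(\tau)$, and the inclusion fails. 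I expect this to be the main obstacle, and it is precisely what the martingale hypothesis on $Z$ is there to underwrite.

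Next I would derive the recursion. The proof of the measurability proposition gives $\ker(\Theta\tau)=S_\tau\cup(I_\tau\cap\ker(\tau))$. Combining this with the inclusion just established, $I_\tau\cap\ker(\tau)=\ker(\tau)$, so $\ker(\Theta\tau)=S_\tau\cup\ker(\tau)$. Since $\Theta^{n-1}\tau\in\Tc(\X)$ is again a stopping law, the inclusion $\ker(\Theta^{n-1}\tau)\subseteq I_{\Theta^{n-1}\tau}$ holds for it as well, and applying the identity to $\Theta^{n-1}\tau$ yields $\ker(\Theta^{n}\tau)=S_{\Theta^{n-1}\tau}\cup\ker(\Theta^{n-1}\tau)$, which is \eqref{ker increases}. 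In particular the kernels are nested, $\ker(\Theta^{n-1}\tau)\subseteq\ker(\Theta^{n}\tau)$ for all $n$.

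Finally I would read off the limit. By monotonicity of the kernels, for each fixed $x$ the binary sequence $\Theta^{n}\tau(x)$ is eventually constant: it equals $0$ from the first index $n$ with $x\in\ker(\Theta^{n}\tau)$ onward, and equals $1$ for every $n$ otherwise. Thus $\lim_{n\to\infty}\Theta^{n}\tau(x)$ exists for every $x$, so $\tau_*$ in \eqref{tau0} is well-defined, with $\tau_*(x)=0$ exactly when $x\in\bigcup_{n\in\N}\ker(\Theta^{n}\tau)$; that is, $\ker(\tau_*)=\bigcup_{n\in\N}\ker(\Theta^{n}\tau)$. Measurability is automatic, since each $\ker(\Theta^{n}\tau)\in\Bc(\X)$ by the measurability proposition and a countable union of Borel sets is Borel, so $\tau_*\in\Tc(\X)$. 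I note that only Assumption~\ref{asm:J}(i) and Lemma~\ref{lem:T_x=0} are needed here; part (ii) is not used for this statement and is instead reserved for the separate verification that $\tau_*$ is an equilibrium.
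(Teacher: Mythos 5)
Your proposal is correct and follows essentially the same route as the paper: the same key lemma ($T^x_x=0$ a.s., established via Girsanov and a time-changed Brownian motion), the same identity $\ker(\Theta\tau)=S_\tau\cup(I_\tau\cap\ker(\tau))$ combined with $\ker(\tau)\subseteq I_\tau$, and the same monotone-kernel limit argument (including your correct observation that only Assumption~\ref{asm:J}(i) is needed here). Your one-line derivation of $\ker(\tau)\subseteq I_\tau$ from the monotonicity $T^x_{\ker(\tau)}\le T^x_{\{x\}}=0$ is a mild streamlining of the paper's three-case analysis (interior, boundary, and isolated points of $\ker(\tau)$), but the substance is identical.
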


The proof of Proposition~\ref{prop:ker increases} is relegated to Appendix~\ref{appen:T_x=0}.

\begin{Theorem}\label{thm:main}
Suppose $\X$ is an open interval in $\R$, 
Assumption~\ref{asm:J} holds, and $Z$ defined in \eqref{Z} is a martingale.
Then, for any $\tau\in\Tc(\X)$, $\tau_*$ defined in \eqref{tau0} belongs to $\Ec(\X)$.  Hence,
\begin{equation}\label{Ec characterization}
\Ec(\X) =
\{\tau_*\in\Tc(\X) ~:~  \tau_* = \lim_{n\to\infty} \Theta^n\tau,\ \hbox{for some}\ \tau\in\Tc(\X)\} = \bigcup_{\tau\in\Tc(\X)}\{\lim_{n\to\infty} \Theta^n\tau\} .
\end{equation}
\end{Theorem}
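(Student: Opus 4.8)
The plan is to treat Proposition~\ref{prop:ker increases} as the engine that has already done the analytic heavy lifting: it guarantees that $\tau_*$ is well-defined, that $\tau_*\in\Tc(\X)$ (its kernel $\bigcup_{n}\ker(\Theta^n\tau)$ is a countable union of Borel sets), and that the kernels $\ker(\Theta^n\tau)$ increase to $\ker(\tau_*)=\bigcup_{n\in\N}\ker(\Theta^n\tau)$. So the only thing left for membership in $\Ec(\X)$ is the fixed-point identity $\Theta\tau_*=\tau_*$, which at the level of kernels is $\ker(\Theta\tau_*)=\ker(\tau_*)$. First I would record that the inclusion $\ker(\sigma)\subseteq I_\sigma$ from the first line of \eqref{ker increases} holds for every $\sigma\in\Tc(\X)$, so the general identity $\ker(\Theta\sigma)=S_\sigma\cup(I_\sigma\cap\ker(\sigma))$ (established in the measurability proposition from \eqref{Theta}) collapses to $\ker(\Theta\sigma)=S_\sigma\cup\ker(\sigma)$. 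Taking $\sigma=\tau_*$ makes the inclusion $\ker(\tau_*)\subseteq\ker(\Theta\tau_*)$ automatic and reduces the whole theorem to the single inclusion $S_{\tau_*}\subseteq\ker(\tau_*)$.

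The crux is proving $S_{\tau_*}\subseteq\ker(\tau_*)$, and this is exactly where Assumption~\ref{asm:J}(ii) is used. Fix $x\in S_{\tau_*}$, i.e.\ $J(x;\Lc^*\tau_*(x))<u(x)$. Since $\Lc^*\sigma(x)=T^x_{\ker(\sigma)}$ by \eqref{hit ker}, and $D_n:=\ker(\Theta^n\tau)$ increases to $D:=\ker(\tau_*)$, the Fatou-type condition applied to this sequence gives
\[
\liminf_{n\to\infty} J\big(x;\Lc^*\Theta^n\tau(x)\big)\;\le\; J\big(x;\Lc^*\tau_*(x)\big)\;<\;u(x).
\]
Hence there is at least one finite index $n$ with $J(x;\Lc^*\Theta^n\tau(x))<u(x)$, that is $x\in S_{\Theta^n\tau}$. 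The second line of \eqref{ker increases}, read at index $n+1$, yields $\ker(\Theta^{n+1}\tau)=S_{\Theta^n\tau}\cup\ker(\Theta^n\tau)\supseteq S_{\Theta^n\tau}$, so $x\in\ker(\Theta^{n+1}\tau)\subseteq\ker(\tau_*)$. This proves $S_{\tau_*}\subseteq\ker(\tau_*)$, hence $\ker(\Theta\tau_*)=\ker(\tau_*)$ and $\tau_*\in\Ec(\X)$.

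The set identity \eqref{Ec characterization} then follows in two directions. The ``$\supseteq$'' inclusion is precisely what was just shown, since the starting $\tau\in\Tc(\X)$ was arbitrary. For ``$\subseteq$'', any $\tau\in\Ec(\X)$ satisfies $\Theta\tau=\tau$, whence $\Theta^n\tau=\tau$ for all $n$ and $\lim_{n\to\infty}\Theta^n\tau=\tau$; thus $\tau$ is exhibited as a fixed-point-iteration limit (starting from itself), and the final equality in \eqref{Ec characterization} is a mere rewriting of the set of such limits.

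The step I expect to be delicate is the passage from $D_n\uparrow D$ to the payoff limit, because hitting times $T^x_{D_n}$ need not converge to $T^x_{D}$ in general. The point of the argument is that this difficulty has been quarantined: the convergence of the iteration and the increasing-kernel structure are supplied by Proposition~\ref{prop:ker increases} (resting on the one-dimensional Lemma~\ref{lem:T_x=0}), while the remaining subtlety is absorbed by the one-sided Assumption~\ref{asm:J}(ii). I would emphasize that only the $\liminf$ inequality — not full convergence of the payoffs — is needed here: it suffices to locate a single finite level $n$ at which $x$ enters $S_{\Theta^n\tau}$, after which the monotonicity of the kernels traps $x$ inside every later $\ker(\Theta^m\tau)$ and hence inside $\ker(\tau_*)$.
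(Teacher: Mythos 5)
Your proposal is correct and follows essentially the same route as the paper: Proposition~\ref{prop:ker increases} supplies the increasing-kernel structure and the inclusion $\ker(\tau_*)\subseteq I_{\tau_*}$, and Assumption~\ref{asm:J}(ii) applied to $D_n=\ker(\Theta^n\tau)\uparrow\ker(\tau_*)$ delivers the key inequality, with the converse inclusion in \eqref{Ec characterization} obtained by starting the iteration at an equilibrium itself. The only cosmetic difference is that you argue the contrapositive at the level of kernels (showing $S_{\tau_*}\subseteq\ker(\tau_*)$), whereas the paper shows pointwise that every $x\notin\ker(\tau_*)$ lies in $C_{\tau_*}\cup I_{\tau_*}$; these are logically the same step.
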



\begin{proof}
By Proposition~\ref{prop:ker increases},  $\tau_*\in\Tc(\X)$ is well-defined and $\ker(\tau_*) = \bigcup_{n\in\N} \ker(\Theta^n\tau)$. As $\ker(\tau_*)\subseteq I_{\tau_*}$ (by Proposition~\ref{prop:ker increases} again), $\Theta\tau_*(x) = \tau_*(x)$ for all $x\in\ker(\tau_*)$. For $x\notin\ker(\tau_*)$, we have $x\notin \ker(\Theta^n\tau)$, i.e., $\Theta^n\tau(x)=1$, for all $n\in\N$. In view of \eqref{Theta}, this gives $J(x;\Lc^*\Theta^{n-1}\tau(x))\ge u(x)$ for all $n\in\N$.
By \eqref{hit ker}, this can be written as $J(x;T^x_{\ker(\Theta^{n-1}\tau)})\ge u(x)$ for all $n\in\N$. Thanks to Proposition~\ref{prop:ker increases}, $\{\ker(\Theta^{n-1}\tau)\}_{n\in\N}$ is a nondecreasing sequence of Borel sets and $\ker(\tau_*) = \bigcup_{n\in\N}\ker(\Theta^{n-1}\tau)$. It then follows from Assumption~\ref{asm:J} (ii) that
\[
J(x;T^x_{\ker(\tau_*)})\ge \liminf_{n\to\infty} J(x;T^x_{\ker(\Theta^{n-1}\tau)})\ge u(x).
\]
This implies $x\in C_{\tau_*}\cup I_{\tau_*}$. If $x\in I_{\tau_*}$, then $\Theta\tau_*(x)=\tau_*(x)$; if $x\in C_{\tau_*}$, then $\Theta\tau_*(x)=1=\tau_*(x)$, as $x\notin\ker(\tau_*)$. Thus, we conclude that $\Theta\tau_*(x)=\tau_*(x)$ for all $x\in\X$, i.e., $\tau_*\in\Ec(\X)$.

For the first equality in \eqref{Ec characterization}, the ``$\supseteq$'' relation holds as a direct consequence of the above result. Note that the ``$\subseteq$'' relation is also true because for each $\tau_*\in \Ec(\X)$, one can simply take $\tau\in\Tc(\X)$ on the right hand side to be $\tau_*$ itself. The last set in \eqref{Ec characterization} is simply a re-formulation of the second set.
\end{proof}

This theorem indicates that under its assumptions every equilibrium can be found through the fixed-point iteration. Moreover, it stipulates a way of telling whether
a given stopping law is an equilibrium. Any initial strategy that can be {\it strictly}
improved by the iteration (namely the game-theoretic reasoning) is {\it not} an equilibrium strategy.
On the other hand, even if
the agent's initial strategy happens to be already an equilibrium, he may not realize it. He does only after he has applied the iteration and found that he gets the same strategy.

Any given stopping law $\tau$ will give rise to an equilibrium $\tau_*$ according to (\ref{Ec characterization}); so in general it is unlikely that we will have
a unique equilibrium.
In this paper, we are particularly interested in the equilibrium $\hat\tau_*\in\Ec(\X)$ generated by the na\"{i}ve stopping law $\hat\tau\in\Tc(\X)$ which is induced by  $\{\hat\tau_x\}_{x\in\X}$ in \eqref{induced policy}; that is,
\begin{equation}\label{wtau*}
\hat\tau_*(x) = \lim_{n\to\infty} \Theta^n\hat \tau(x)\quad x\in\X.
\end{equation}
The economic significance of choosing such an equilibrium is that it spells out explicitly how an (irrational) na\"{i}ve agent might be turned into a (fully rational) sophisticated one, if he is educated to carry out sufficient levels of strategic reasoning.


\section{Application to Problems with Probability Distortion}
\label{sec:prob dist}
In this section we apply the general setting and results established in the previous sections to the case where the objective functional involves probability distortion (weighting) and the underlying process is a geometric Brownian motion. 

Let $S=\{S_t\}_{t\ge 0}$ be a geometric Brownian motion satisfying
\begin{equation}
\label{eq:P dynamics}
dS_t = \mu S_t dt + \sigma S_t dB_t, \quad S_0=s >0,\quad \hbox{for}\ t\ge 0,
\end{equation}
where $\mu\in\R$ and $\sigma>0$ are constants and $\{B_t\}_{t\ge 0}$ is a standard Brownian motion. Note that the process $S$ takes values in $\R_+ :=(0,\infty)$ almost surely. 
In most of the discussions in this paper, $S$ is interpreted as the price process of an asset, although it could represent other processes such as the value process of a project and similar  discussions could be conducted.

Consider a nondecreasing, continuous function $U:\R_+\mapsto \R_+$ with $U(0)=0$, and a strictly increasing, continuous function $w: [0,1]\mapsto [0,1]$ with $w(0)=0$ and $w(1)=1$. Here, $U$ is the utility function of the agent or the payoff function of the asset, and $w$ the probability distortion function.
The agent's objective is, for any given initial state $s$, to maximize his ``distorted'' expected payoff
\begin{equation}
\label{I}
\hbox{maximize}\quad I(s; \tau):=\int_0^\infty w\left(\Pro{U(S^s_\tau)>y} \right)dy,
\end{equation}
 by choosing an appropriate stopping time $\tau \in \Tc$.\footnote{In the literature,
 the distorted expected payoff is formulated as $I(s; \tau)=\int_0^\infty U(y)d[1-w(1-F(y))]$, where $F$ is the distribution function of $S^s_\tau$.
 This expression and \eqref{I} are equivalent by Fubini's theorem. Indeed,
 $\int_0^\infty U(y)d[1-w(1-F(y))]=\int_0^\infty \int_0^{U(y)}dxd[1-w(1-F(y))]
 =\int_0^\infty \int_{U^{-1}(x)}^\infty d[1-w(1-F(y))]dx=\int_0^\infty w\left(1-F(U^{-1}(x))\right)dx=\int_0^\infty w\left(\Pro{U(S^s_\tau)>x} \right)dx$,
 where $U^{-1}$ is the left (or right) inverse of $U$.
 %
}

So the agent intends to maximize his expected utility (or payoff) under probability distortion, by stopping the price process at an appropriate moment. This formulation is motivated by several financial applications, such as liquidation of a financial position, real options  and casino gambling. 
Note that with $w(p)= p$, we retrieve the standard expectation valuation.

Throughout the analysis in this section, we consider the parameter
\begin{equation}\label{beta}
\beta:=1-2\mu/\sigma^2.
\end{equation}
The value of $\beta$ and how ``good'' the asset $S$ is are inter-connected. \cite*{SXZ08} take $\frac{\mu}{\sigma^2}$ as the ``goodness index'' of an asset, which is included in \eqref{beta}. The larger this index is, the higher the asset's expected return relative to its volatility.



In the rest of this section, we will focus on the case $\beta>0$, where time-inconsistency arises genuinely for \eqref{I}.\footnote{For the case $\beta\le 0$, one may deduce from \cite{XZ13} that problem \eqref{I} is time-consistent. Indeed, when $\beta=0$, \cite{XZ13}, p. 255, proves that
$\hat\tau_{s} := \inf\{t\ge 0: S^s_t \ge x^*\} = \inf\{t\ge 0: B_t \ge \sigma^{-1} \log(x^*/s)\}$ is a (pre-committed) optimal stopping time of \eqref{I}, where $x^*:= \inf\left\{x>0:U(x) = \sup_{y>0}U(y)\right\}$. The stopping threshold for the state process $S^s$ is always $x^*$, independent of the current state $s$. There is then no time-inconsistency. When $\beta<0$, Theorem 2.1 in \cite{XZ13} shows that $\hat\tau_s \equiv \infty$, implying ``never-stopping",  is the (pre-committed) optimal stopping time. Again, no time-inconsistency is present here as $\hat\tau_s \equiv \infty$ does not depend on current state $s$. }

\subsection{The Case $\beta>0$: Time-Inconsistency}
We consider, as in Section 2.2 of \cite{XZ13}, the process $X_t := S_t^\beta$, $t\ge 0$. That is, $X$ is the Dol\'eans-Dade exponential of $\{\beta\sigma B_t\}_{t\ge 0}$:
\begin{equation}
\label{eq:X}
dX^x_t = \beta \sigma X^x_t dB_t, \quad X^x_0 = x>0,
\end{equation}
where $x = s^\beta>0$. 
As a geometric Brownian motion with no drift, $X^x$ is a $\P$-martingale and
\begin{equation}\label{0 at infinity}
\lim_{t\to\infty} X^x_t = 0\quad \hbox{a.s.},
\end{equation}
thanks to the law of iterated logarithms. As we will see subsequently, this fact plays an important role
in dictating a sophisticated agent's behavior - the value of the underlying process
diminishes in the long run. Moreover, for any $b>x$,
\begin{equation}\label{Tb<infinity}
\P[T^x_b<\infty] = \frac xb <1,\quad \hbox{where}\quad T^x_b := \inf\{t>0 : X^x_t=b\}.
\end{equation}
Note that \eqref{Tb<infinity} follows from the hitting time probability of a standard Brownian motion to a linear boundary; see e.g., \cite{KS-book-91}.

Let $u$ be defined by
\begin{equation}\label{u}
u(x):=U(x^{1/\beta}),\quad x\in(0,\infty).
\end{equation}
With $\beta>0$, the transformed function $u$ is nondecreasing with $u(0)= U(0)=0$. 
The shape of $u$, which can be convex, concave, or $S$-shaped (i.e., first convex, and then concave), depends on both the shape of $U$ and the coefficient $\beta$. Similarly, we define
\begin{equation}
\label{J}
J(x; \tau) :=I(x^{1/\beta};\tau)\equiv \int_0^\infty w\left(\Pro{u(X_\tau^x)>y} \right)dy\ \quad \hbox{for}\ x>0,\ \tau \in \Tc.
\end{equation}
Here, we allow $\tau\in\Tc$ to take the value $+\infty$ with positive probability: on the set $\{\tau=\infty\}$, we simply take $X_\tau=0$ in view of \eqref{0 at infinity}.


Under the objective functional \eqref{J}, \cite{XZ13} characterize the pre-committed optimal stopping times for problem \eqref{v}, stating that the problem may be time-inconsistent as $\beta>0$. The next example demonstrates this time-inconsistency explicitly.

\begin{Example}\label{eg:never stop}
Take $u(x)=x^\eta$ with $\eta\ge 1$, and consider the probability weighting function proposed by \cite{Prelec98}:
\begin{equation}\label{Prelec's w}
w(x)=\exp(-\gamma(-\log x)^\alpha)\quad \hbox{for some}\ \alpha,\gamma>0.
\end{equation}
As $u$ is convex, Theorem 4.5 of \cite{XZ13} shows that problem \eqref{v}, with $J(x;\tau)$ as in \eqref{J}, can be reduced to the optimization problem
\begin{equation}\label{v reduced}
\sup_{\lambda\in(0,1]} w(\lambda)u\left(\frac{x}{\lambda}\right).
\end{equation}
To solve this problem for our case, for each $x>0$, set $f(\lambda):=w(\lambda)u\left(\frac{x}{\lambda}\right)$. 
By direct computation,
\begin{align*}
f'(\lambda) &= \frac{w(\lambda)}{\lambda}\left(\frac{x}{\lambda}\right)^\eta[\alpha\gamma(-\log\lambda)^{\alpha-1}-\eta].
\end{align*}
Observe that $f'(\lambda)=0$ has a {\it unique} solution $\lambda^* = e^{-\left(\frac{\eta}{\alpha\gamma}\right)^{1/(\alpha-1)}}$ on $(0,1]$. Moreover,
\[
f''(\lambda^*) = -\alpha\gamma(\alpha-1)\left(\frac{\eta}{\alpha\gamma}\right)^{\frac{\alpha-2}{\alpha-1}}\frac{w(\lambda^*)}{(\lambda^*)^2 } \left(\frac{x}{\lambda^*}\right)^\eta.
\]

Suppose $\alpha>1$, in which case $w$ is $S$-shaped (i.e., first convex, and then concave). Then $f''(\lambda^*)<0$; moreover, $f'(\lambda)>0$ for $\lambda<\lambda^*$, and $f'(\lambda)<0$ for $\lambda>\lambda^*$. This implies that $\lambda^*$ is the unique maximizer of \eqref{v reduced}. By the discussion following Corollary 4.6 in \cite{XZ13}, we conclude that
\begin{equation}\label{ttau example}
\hat\tau_x := \inf\{t\ge 0 : X^x_t\ge x/\lambda^*\} = \inf\left\{t\ge 0 : X^x_t\ge e^{\left(\frac{\eta}{\alpha\gamma}\right)^{1/(\alpha-1)}} x\right\}\in\Tc
\end{equation}
is a pre-committed  optimal stopping time of \eqref{v} when the current state is $x$. A {\it moving} stopping threshold causes time-inconsistency: for any $t>0$, $\hat\tau_x\neq t+\hat\tau_{X^x_t}$ on $\{\hat\tau_x>t\}$, unless $X^x_t \equiv x$; thus, \eqref{tidef} is in general violated. More specifically, as $X$ evolves over time, the agent continuously updates the initial value $x$ in \eqref{v} with current state $y:=X_t$, and thus changes the original stopping threshold $x/\lambda^*$ to $y/\lambda^*$ at time $t$. While being an optimal solution to \eqref{v} at state $x$, $\hat\tau_x$ will not be implemented  as future selves will dismiss  it.

A crucial observation here is that time-inconsistency leads the na\"{i}ve agent to postpone stopping indefinitely. As $1/\lambda^*>1$, we have $\hat\tau_x>0$ for all $x>0$, whence the na\"{i}ve agent will never stop at any given  moment and -- as a result -- will {\it never} stop.
\end{Example}

\begin{Remark}
Example \ref{eg:never stop} is reminiscent of \cite{ebert2015until}, which shows -- under the CPT setting -- that a na\"{i}ve agent may defer his stopping decision indefinitely (hence disobey the original pre-committed optimal stopping time) under the so-called ``skewness preference in the small" condition, which hinges on Assumptions 1 and 2 therein.
Indeed, in Example \ref{eg:never stop}, we have
\[
\zeta: =\sup_{x>0}u'_\ell(x)/u'_r(x) = 1\quad \hbox{and}\quad w'(0+)=\infty,
\]
where $u'_\ell$ and $u'_r$ denote the left and right derivatives of $u$, respectively. This shows that Assumptions 1 and 2 in \cite{ebert2015until} are both satisfied: the former requires $\zeta<\infty$, while the latter is fulfilled if $w'(0+)>\zeta$ (see (3) therein). 
\end{Remark}

In the presence of time-inconsistency, it is of interest to study how a sophisticated agent lacking commitment might be doing by applying the general result Theorem \ref{thm:main}.
Specifically, how the fixed-point iteration in Theorem \ref{thm:main} can be applied to find equilibrium stopping laws, under the current context of probability distortion.

Technically, in order to apply Theorem \ref{thm:main} we need to first check the validity of its assumptions.

\begin{Lemma}\label{prop:J satisfies}
Suppose $\beta>0$.
The objective functional \eqref{J} satisfies Assumption~\ref{asm:J}.
\end{Lemma}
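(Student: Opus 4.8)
The plan is to verify the two parts of Assumption~\ref{asm:J} separately, treating (i) as a routine measurability check and concentrating the effort on the Fatou-type condition (ii), whose crux is producing a uniform integrable envelope for the integrands defining $J$.

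For part (i) I would realize all the processes $X^x$ on a single probability space through the scaling identity $X^x_t = x\,\xi_t$, where $\xi:=X^1$ is the driftless geometric Brownian motion of \eqref{eq:X} started at $1$ (equivalently, pass to $\log$-coordinates, where $\log X^x$ is the translation family $\log x + (\beta\sigma B_t-\tfrac12\beta^2\sigma^2 t)$). Then $(x,\omega)\mapsto T^x_D(\omega)$ and $(x,\omega)\mapsto X^x_{T^x_D}(\omega)$ are jointly measurable by the debut/measurable-projection theorem applied to the jointly measurable field $(x,t,\omega)\mapsto x\xi_t(\omega)$ and the target set $D$. Consequently $(x,y)\mapsto \Pro{u(X^x_{T^x_D})>y}$ is $\Bc(\X)\otimes\Bc(\R)$-measurable by Tonelli, and composing with the continuous $w$ and integrating in $y$ (Tonelli again, the integrand being nonnegative) shows $x\mapsto J(x;T^x_D)$ is Borel. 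This is the ``easy'' part, exactly as flagged in Remark~\ref{rem:satisfied}.

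For part (ii), fix an increasing sequence $D_n\uparrow D$. Because $D_n\subseteq D$ the hitting times decrease, $T^x_{D_n}\downarrow \tau_\infty:=\inf_n T^x_{D_n}\ge T^x_D$, and the first step is the purely set-theoretic identity $\tau_\infty=T^x_D$ a.s.: on $\{T^x_D<\infty\}$, for each $\eps>0$ the definition of the infimum yields $t<T^x_D+\eps$ with $X^x_t\in D=\cup_m D_m$, hence $X^x_t\in D_m$ for some $m$ and $T^x_{D_m}\le t<T^x_D+\eps$, so $\tau_\infty\le T^x_D$; on $\{T^x_D=\infty\}$ the inequality $\tau_\infty\ge T^x_D$ already forces equality. (Notably this step needs nothing beyond $D=\cup_m D_m$.) Combining $\tau_\infty=T^x_D$ with path-continuity and the convention $X^x_\infty=0$ from \eqref{0 at infinity}, I get $X^x_{T^x_{D_n}}\to X^x_{T^x_D}$ a.s., and since $u$ is continuous, $V_n:=u(X^x_{T^x_{D_n}})\to V:=u(X^x_{T^x_D})$ a.s.

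The remaining work is to pass this almost-sure convergence through the distorted integral. Almost-sure convergence gives convergence in distribution, so $\Pro{V_n>y}\to\Pro{V>y}$ at every continuity point of the law of $V$, i.e.\ for Lebesgue-a.e.\ $y$, and continuity of $w$ then gives $w(\Pro{V_n>y})\to w(\Pro{V>y})$ for a.e.\ $y$. To upgrade this to convergence of the integrals I would dominate uniformly in $n$: writing $M^x:=\sup_{t\ge0}X^x_t$, the bound $X^x_{T^x_{D_n}}\le M^x$ and monotonicity of $u$ give $\Pro{V_n>y}\le \Pro{u(M^x)>y}$, whence $w(\Pro{V_n>y})\le w(\Pro{u(M^x)>y})=:g(y)$. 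By \eqref{Tb<infinity} the law of $M^x$ is explicit, $\Pro{M^x>c}=(x/c)\wedge 1$, so $\int_0^\infty g(y)\,dy$ equals the distorted value of $u(M^x)$; under the standing well-posedness of \eqref{I} (finiteness of this running-maximum value, which holds in the cases studied --- e.g.\ Prelec's $w$ with $\alpha>1$ in Example~\ref{eg:never stop}) this envelope is integrable. Dominated convergence then yields $J(x;T^x_{D_n})\to J(x;T^x_D)$, which is stronger than the required $\liminf$ inequality and mirrors the dominated-convergence argument of Remark~\ref{rem:satisfied}. I expect this last domination/integrability step to be the main obstacle: a.s.\ convergence only gives the ``wrong'' (lower-semicontinuity) direction through Fatou, so the uniform envelope built from the martingale maximal estimate \eqref{Tb<infinity} is exactly what forces the $\liminf$ to lie below $J(x;T^x_D)$.
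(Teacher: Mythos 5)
There is a genuine gap in your part (ii), located exactly where you anticipated: the domination step. Your envelope is $g(y)=w\bigl(\Pro{u(M^x)>y}\bigr)$ with $M^x=\sup_{t\ge0}X^x_t$, and you secure its integrability by invoking a ``standing well-posedness'' of \eqref{I}. But the lemma assumes only $\beta>0$; no such integrability hypothesis is made, and it genuinely fails in cases the paper relies on. Since $u(X^x_\tau)\le u(M^x)$ for every $\tau$, one has $\int_0^\infty g(y)\,dy\ge\sup_{\tau\in\Tc}J(x;\tau)$, and in Example~\ref{eg:utility max} (Case II) and the real-options example (Case I) this supremum is $+\infty$ — see \eqref{optimal=infty} — yet Theorem~\ref{coro:main}, which needs this lemma, is applied precisely there. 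So dominated convergence with your envelope is unavailable in general, and, as you correctly note, plain Fatou gives only the reverse inequality $\liminf_n J(x;T^x_{D_n})\ge J(x;T^x_D)$. The fix is to exploit the one-dimensional structure you did not use: with $a_n(x):=\sup\{a<x:a\in D_n\}$ and $b_n(x):=\inf\{b>x:b\in D_n\}$, the stopped state $X^x_{T^x_{D_n}}$ is supported on $\{a_n(x),b_n(x)\}$ (together with $0$ on $\{T^x_{D_n}=\infty\}$), so $V_n\le u(b_N(x))\vee u(x)<\infty$ for all $n\ge N$ once some $b_N(x)<\infty$ (and $V_n\le u(x)$ if $b_n(x)\equiv\infty$); a constant eventually dominates and $w\le1$, so the integrands are supported in a fixed bounded interval and convergence of the integrals follows. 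This is essentially what the paper does, more directly: it writes $T^x_D$ as the exit time of $(a(x),b(x))$, invokes the closed form of Lemma~\ref{lem:tau_ab}, namely $J(x;\tau_{ab})=u(a)+w\bigl(\tfrac{x-a}{b-a}\bigr)(u(b)-u(a))$, and passes to the limit using $a_n(x)\uparrow a(x)$, $b_n(x)\downarrow b(x)$ and continuity of $u,w$ — obtaining full convergence with no integrability assumption.

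A secondary issue is part (i): the debut/measurable-projection route for a general Borel target $D$ yields universal (not Borel) measurability of $x\mapsto\Pro{u(X^x_{T^x_D})>y}$, whereas Assumption~\ref{asm:J}(i) asks for Borel measurability. The paper sidesteps this by the same explicit formula: $J(x;T^x_D)$ is an explicit continuous function of $a(x)$ and $b(x)$, and these are Borel because $a(x)=\sup_{q\in D_\Q}q\,1_{(q,\infty)}(x)$ and $b(x)=\inf_{q\in D_\Q}q\,1_{(0,q)}(x)$ over a countable set $D_\Q$. Your set-theoretic identity $T^x_{D_n}\downarrow T^x_D$ and the a.e.-$y$ convergence of the distorted survival functions are both correct; it is only the choice of dominating function, and the unstated hypothesis needed to integrate it, that break the argument.
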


The proof of Lemma~\ref{prop:J satisfies} is relegated to Appendix~\ref{appen:J satisfies}.

\begin{Theorem}\label{coro:main}
Suppose $\beta>0$ and the objective functional is given by \eqref{J}.
For any $\tau\in\Tc(\R_+)$, $\tau_*$ defined in \eqref{tau0} belongs to $\Ec(\R_+)$,
and hence \eqref{Ec characterization} holds with $\X=\R_+$.
In particular, $\hat\tau_*$ defined by \eqref{wtau*} belongs to $\Ec(\R_+)$.
\end{Theorem}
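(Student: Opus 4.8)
The plan is to derive Theorem~\ref{coro:main} as a direct specialization of Theorem~\ref{thm:main}: the whole task reduces to checking that the three hypotheses of Theorem~\ref{thm:main} hold for the state space $\X=\R_+$ and the objective functional \eqref{J}, after which the conclusion and the characterization \eqref{Ec characterization} follow verbatim.

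Two of the three hypotheses are immediate. First, $\R_+=(0,\infty)$ is an open interval in $\R$. Second, the requirement that $J$ obey Assumption~\ref{asm:J} is exactly the content of Lemma~\ref{prop:J satisfies}, already established for $\beta>0$. The only hypothesis that might appear to demand work is the martingale property of $Z$ from \eqref{Z}. Here I would read off the coefficients of the driving diffusion from \eqref{eq:X}: since $dX^x_t=\beta\sigma X^x_t\,dB_t$ carries no drift term, matching it to the general form $dX_t=b(X_t)\,dt+a(X_t)\,dB_t$ gives $b\equiv 0$ and $a(x)=\beta\sigma x$, whence $\theta(\cdot)=b(\cdot)/a(\cdot)\equiv 0$. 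The integrability condition \eqref{integrability} then holds trivially, and $Z_t=\exp(0)=1$ for every $t\ge 0$, which is certainly a martingale. With all three hypotheses in place, Theorem~\ref{thm:main} yields $\tau_*\in\Ec(\R_+)$ for every $\tau\in\Tc(\R_+)$, together with the characterization \eqref{Ec characterization} for $\X=\R_+$.

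For the final ``in particular'' assertion, I would simply specialize: the na\"ive equilibrium $\hat\tau_*$ of \eqref{wtau*} is nothing but $\tau_*$ for the particular initial law $\tau=\hat\tau$. Since $\hat\tau$ is a Borel measurable stopping law, hence an element of $\Tc(\R_+)$, the first part of the statement applied with $\tau=\hat\tau$ delivers $\hat\tau_*\in\Ec(\R_+)$ at once. I anticipate no genuine obstacle in this argument; the one point worth flagging is the observation that the driftlessness of \eqref{eq:X} forces $\theta\equiv 0$, which is precisely what renders the martingale hypothesis automatic and spares us any Novikov- or Kazamaki-type estimate that one would otherwise expect to carry out.
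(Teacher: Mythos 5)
Your proposal is correct and matches the paper's own proof: both verify that $X$ from \eqref{eq:X} is driftless so that $\theta\equiv 0$ and $Z\equiv 1$ is trivially a martingale, invoke Lemma~\ref{prop:J satisfies} for Assumption~\ref{asm:J}, and then apply Theorem~\ref{thm:main} directly. The concluding specialization to $\tau=\hat\tau$ for the ``in particular'' clause is likewise exactly what the paper intends.
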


\begin{proof}
As $X$ is a geometric Brownian motion with zero drift, the process $Z$ in \eqref{Z} is constantly $1$ and thus a martingale.
Because the objective functional \eqref{J} satisfies Assumption~\ref{asm:J} (by Lemma~\ref{prop:J satisfies}),
the result is now a direct consequence of Theorem~\ref{thm:main}.
\end{proof}

%
%

The fixed-point iteration in Theorem~\ref{coro:main} has intriguing implications. The next result shows that a drastic change in behavior takes place when a na\"{i}ve agent applies the fixed-point iteration {\it just once}: a na\"{i}ve agent who would have never stopped (``until the bitter end" \cite{ebert2015until}) transforms himself into a sophisticated one who stops immediately.

\begin{Proposition}\label{prop:drastic change}
Let $\beta>0$ and assume $u(x)>0$ $\forall x>0$. Then for $\tau\in\Tc(\R_+)$ given by $\tau(x)=1$ $\forall x>0$,
\[
\tau_*(x) = \lim_{n\to\infty} \Theta^n\tau (x) = \Theta\tau (x) =0\quad \forall x>0.
\]
\end{Proposition}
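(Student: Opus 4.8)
The plan is to show that a single application of $\Theta$ already produces the stop-everywhere law, and then to use the triviality of that law as an equilibrium to collapse the entire iteration in \eqref{tau0}.

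First I would identify the stopping time induced by the initial ``never-stop'' law. Since $\tau(x)=1$ for all $x>0$, its kernel is empty, $\ker(\tau)=\emptyset$. By \eqref{hit ker}, this forces $\Lc^*\tau(x)=\inf\{t>0:X^x_t\in\emptyset\}=+\infty$ for every $x>0$; an agent who continues everywhere simply never stops.

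Next I would evaluate the objective along this induced stopping time. Following the convention adopted just after \eqref{J}, on $\{\tau=\infty\}$ we set $X_\tau=0$, which is legitimate by \eqref{0 at infinity}. Since $u(0)=U(0)=0$, we have $u(X^x_{\Lc^*\tau(x)})=0$ a.s., so that $\Pro{u(X^x_{\Lc^*\tau(x)})>y}=0$ for every $y\ge 0$. Substituting into \eqref{J} and using $w(0)=0$ yields $J(x;\Lc^*\tau(x))=\int_0^\infty w(0)\,dy=0$ for all $x>0$. Comparing with the hypothesis $u(x)>0$, I conclude $J(x;\Lc^*\tau(x))=0<u(x)$, hence $x\in S_\tau$ for every $x>0$, i.e. $S_\tau=\R_+$. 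By the definition of $\Theta$ in \eqref{Theta}, this gives $\Theta\tau(x)=0$ for all $x>0$.

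Finally I would observe that $\Theta\tau$ is precisely the trivial stop-everywhere law, which is an equilibrium by Remark~\ref{rem:trivial}. Consequently $\Theta(\Theta\tau)=\Theta\tau$, and by induction $\Theta^n\tau=\Theta\tau$ for every $n\ge 1$, so $\tau_*(x)=\lim_{n\to\infty}\Theta^n\tau(x)=\Theta\tau(x)=0$. I do not expect a genuine obstacle here: the only point demanding care is the evaluation of $J(x;+\infty)$, where one must correctly invoke the $X_\infty=0$ convention together with $u(0)=0$ so that the distorted integral collapses to zero. The economic content --- continuing forever drives the underlying process, and therefore the payoff, to zero, so that immediate stopping is strictly preferable whenever $u>0$ --- is what makes the argument transparent.
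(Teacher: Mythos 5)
Your proof is correct and follows essentially the same route as the paper's: compute $\Lc^*\tau(x)=\infty$, invoke \eqref{0 at infinity} and the $X_\infty=0$ convention to get $J(x;\Lc^*\tau(x))=0<u(x)$, conclude $S_\tau=\R_+$ so $\Theta\tau\equiv 0$, and then use the fact that the stop-everywhere law is the trivial equilibrium of Remark~\ref{rem:trivial} to stabilize the iteration. You are slightly more explicit than the paper about the final induction step, but the argument is the same.
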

\begin{proof}
First of all $\beta>0$ implies $u$ is nondecreasing and $u(0)=0$.
By the definition of $\tau$, $\Lc^*\tau(x) = \infty$ $\forall x>0$. This, together with \eqref{0 at infinity}, implies $J(x;\Lc^*\tau) = \int_0^\infty w(\P[u(0)>y]) dy=0< u(x)$ $\forall x>0$. It follows that $S_{\tau} = \R_+$, whence $\Theta\tau(x) = 0$ $\forall x>0$.
\end{proof}

\begin{examcont}{eg:never stop}
We have shown that as $\alpha>1$, a na\"{i}ve agent postpones stopping indefinitely, i.e., $\hat\tau(x)=1$ for all $x>0$. By Proposition~\ref{prop:drastic change}, the corresponding equilibrium stopping law $\hat\tau_* = \lim_{n\to\infty}\Theta^n\hat\tau = \Theta\hat\tau$ is the trivial equilibrium described in Remark~\ref{rem:trivial}, namely, to stop immediately.
\end{examcont}

The na\"{i}ve,  
once thinking like a sophisticate, decides to stop immediately. What constitutes the economic reasoning behind the proof of Proposition \ref{prop:drastic change} is the following. At any {\it given} point in time, assume that all the future selves will be carrying out the na\"{i}ve, never-stopping strategy. With this in mind, the question is what to do now. There are only two options. If he is to follow the original strategy, namely, to continue now, then according to \eqref{0 at infinity} the value of the (un-stopped) underlying process will almost surely diminish to zero and hence the payoff is zero. If he is to stop now, then he can  get some positive payoff because of the assumption $u(x)>0$ when $x>0$. This simple comparison will prompt him to stop immediately.

\smallskip

The use of Theorem~\ref{coro:main} will be further demonstrated in the next two subsections. In Subsection \ref{subsec:examples}, we study practical examples motivated by utility maximization and real options valuation. We are particularly interested in finding the equilibrium stopping law  $\hat\tau_*\in\Ec(\R_+)$ generated by the na\"{i}ve stopping law, as defined in \eqref{wtau*}. Through this construction of an equilibrium we will be able to see an intriguing connection between a na\"{i}ve agent and a sophisticated one -- in particular how the former might turn himself into the latter by evoking strategic reasonings. In Subsection~\ref{subsec:demonstration}, we show how the fixed-point iteration in Theorem~\ref{coro:main} can be used to find a large class of equilibrium stopping laws, under a specific choice of the transformed function $u$ and the distortion function $w$. How one can possibly compare these equilibria will also be discussed.

\subsection{Examples: Utility Maximization and Real Options Valuation}\label{subsec:examples}
In this subsection, we will study several examples motivated by utility maximization and real options valuation. In all the examples, we will take the function $U$ to be either $U(x) = x^{\gamma}$ for some $0<\gamma<1$, a standard isoelastic utility function, or $U(x) = (x-K)^+$ for some $K>0$, a common payoff function for real options. Our focus will be to compare the na\"{i}ve stopping law $\hat\tau$ in \eqref{induced policy} with the corresponding equilibrium stopping law $\hat\tau_*$ in \eqref{wtau*}.

We start with assuming that $w$ is inverse $S$-shaped (i.e., concave on $[0, 1-q]$ and convex on $[1-q, 1]$ for some $q\in(0,1)$), which indicates that the agent exaggerates the probabilities of both ``very good'' and ``very bad'' scenarios. This type of $w$ has been widely studied, as it is consistent with empirical data of human decision making\footnote{\cite{HZ16} relate an inverse $S$-shaped $w$ with the emotion of hope (hope for the ``very good'' scenarios) and fear (fear of the ``very bad'' scenarios) in decision making.}. The literature on inverse $S$-shaped $w$, including \cite{TK92}, \cite{LBW92}, \cite{CH94}, \cite{WG96}, \cite{BM96}, and \cite{Prelec98}, among others, suggests three main models of $w$: 1) the one-parameter model
\begin{equation}\label{1-factor}
w(x) = \frac{x^\gamma}{(x^\gamma+(1-x)^{\gamma})^{1/\gamma}}\quad \hbox{with}\ \gamma^*\le\gamma<1,
\end{equation}
where $\gamma^*\approx 0.279$ is the minimal value of $\gamma$ such that $w$ is nondecreasing; 2) the two-parameter model
\begin{equation}\label{2-factor}
w(x) = \frac{\alpha x^\gamma}{\alpha x^\gamma + (1-x)^{\gamma}}\quad \hbox{with}\ \alpha>0,\ 0<\gamma<1;
\end{equation}
and 3) the two-parameter model \eqref{Prelec's w} with $0<\alpha<1$ and $\gamma>0$.

To facilitate discussions in subsequent examples, let us present equivalent expressions for the na\"{i}ve law $\hat\tau$. Suppose optimal stopping times $\{\hat\tau_x\}_{x\in\R_+}$ exist. For each $x>0$, let $F_{\hat\tau_{x}}$ be the cumulative distribution function of $X^x_{\hat\tau_x}$, and $G_{\hat\tau_{x}}:= F_{\hat\tau_{x}}^{-1}$ be the quantile function of $X^x_{\hat\tau_x}$. Then,
the na\"{i}ve stopping law $\hat\tau\in\Tc(\R_+)$ in \eqref{induced policy} can be expressed using $F_{\hat\tau_x}$ or $G_{\hat\tau_x}$:
\begin{equation}\label{induced policy_quantile}
\hat\tau(x):=
\begin{cases}
0,\ \hbox{if}\ \hat\tau_{x}=0,\\
1,\ \hbox{if otherwise}.
\end{cases}
=
\begin{cases}
0,\ \hbox{if}\ F_{\hat\tau_{x}}(\cdot)= 1_{[x,\infty)}(\cdot),\\
1,\ \hbox{if otherwise}.
\end{cases}
=
\begin{cases}
0,\ \hbox{if}\ G_{\hat\tau_{x}}(\cdot) \equiv x,\\
1,\ \hbox{if otherwise}.
\end{cases}
\end{equation}
An interesting consequence is that to identify the na\"{i}ve stopping law $\hat\tau$, it is {\it not} necessary to find out what $\hat\tau_x$ is specifically: in view of \eqref{induced policy_quantile}, $\hat\tau$ is well-defined once the quantile function $G_{\hat\tau_{x}}$ is known. As shown in \cite{XZ13}, $G_{\hat\tau_{x}}$ is the optimal solution to
\begin{equation}\label{supJ(G)}
\sup_{G\in\Qc_x} \int_0^1 u(G(y))w'(1-y) dy,  
\end{equation}
where $\Qc_x := \{G:[0,1)\mapsto\R_+ \mid G\ \hbox{is a quantile function of $X^x_\tau$ for some $\tau\in\Tc$}\}$; moreover, \eqref{supJ(G)} can be solved through certain mathematical programs. This approach, developed in \cite{XZ13} for solving
pre-committed optimal stopping, is called the ``quantile formulation" and useful in the next result.

\begin{Proposition}\label{thm:concave u, R S-shaped w}
Suppose $u$ is concave and $w$ is inverse $S$-shaped with $w'(0+) = \infty$. 
Assume that the optimal quantile $G^*_x$ of \eqref{supJ(G)}, as specified in Theorem 5.2 of \cite{XZ13}, exists for all $x>0$.
\begin{itemize}
\item [(i)] If $\sup_{y>0}u(y)$ is attained, then
\[
\hat\tau(x)= \hat\tau_*(x)=1_{(0,y^*)}(x)\quad \hbox{for all $x>0$},
\]
where $y^* := \inf\{y>0 : u(y)=\sup_{z>0}u(z)\}<\infty$.
\item [(ii)] If $\sup_{y>0}u(y)$ is not attained, then
\begin{equation*}
\begin{split}
\hat\tau(x)=1\quad \hbox{for all $x>0$},\\
\hat\tau_*(x) =0 \quad \hbox{for all $x>0$}.
\end{split}
\end{equation*}
\end{itemize}
\end{Proposition}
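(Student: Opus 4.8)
The plan is to run everything through the quantile reformulation \eqref{supJ(G)}: by \eqref{induced policy_quantile} the na\"ive law is determined by whether the optimal quantile $G^*_x$ is the point mass at $x$, while an equilibrium is certified by checking the sign of $J(x;\Lc^*\tau(x))-u(x)$ region by region. Write $M:=\sup_{z>0}u(z)$. I would first settle the na\"ive law. Since $u\le M$, every feasible quantile gives objective at most $M$; hence when $\sup u$ is attained and $x\ge y^*$ we have $u(x)=M$, realized by $G\equiv x$, so $\hat\tau(x)=0$. For $x<y^*$ (case (i)) and for every $x>0$ (case (ii)) I would show that immediate stopping is strictly suboptimal by a spike perturbation of $G\equiv x$: replace it by the two-valued quantile equal to some $b$ with $u(b)>u(x)$ on $[1-\eps,1)$ and to $x-\eps b/(1-\eps)$ on $[0,1-\eps)$. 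This quantile has mean $x$, hence lies in $\Qc_x$ (it is attained by a two-sided exit of the martingale $X$), and its objective exceeds $u(x)$ by $w(\eps)\,(u(b)-u(x))-\eps b\,u'_-(x)+o(\eps)$; because $w'(0+)=\infty$ gives $w(\eps)/\eps\to\infty$, this is positive for small $\eps$. Thus $\hat\tau=1_{(0,y^*)}$ in case (i) and $\hat\tau\equiv 1$ in case (ii).

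For the equilibrium in case (ii) I would avoid any further computation and simply invoke Proposition~\ref{prop:drastic change}. Indeed, $u$ concave, nondecreasing, with $u(0)=0$ and non-constant (as $\sup u$ is not attained) forces $u(x)>0$ for all $x>0$ by concavity; since $\hat\tau\equiv 1$ is exactly the never-stopping law, Proposition~\ref{prop:drastic change} yields $\hat\tau_*=\lim_{n}\Theta^n\hat\tau=\Theta\hat\tau\equiv 0$, which is (ii).

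For case (i) I would prove the sharper statement that $\hat\tau=1_{(0,y^*)}$ is already a fixed point of $\Theta$, so that $\hat\tau_*=\hat\tau$. Its kernel is $[y^*,\infty)$, whence $\Lc^*\hat\tau(x)=0$ for $x\ge y^*$ and $\Lc^*\hat\tau(x)=T^x_{y^*}$ for $x<y^*$. For $x\ge y^*$, $J(x;0)=u(x)$ gives $x\in I_{\hat\tau}$ and $\Theta\hat\tau(x)=\hat\tau(x)=0$. For $x<y^*$, \eqref{Tb<infinity} with $b=y^*$ together with the convention $X_\infty=0$ make $u(X^x_{T^x_{y^*}})$ equal to $M$ with probability $x/y^*$ and to $0$ otherwise, so that $J(x;\Lc^*\hat\tau(x))=M\,w(x/y^*)$. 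Everything then reduces to the inequality $M\,w(x/y^*)\ge u(x)$ for $x\in(0,y^*)$, which places $x$ in $C_{\hat\tau}\cup I_{\hat\tau}$ and forces $\Theta\hat\tau(x)=1=\hat\tau(x)$.

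The hard part will be this last inequality, and it is here that Theorem 5.2 of \cite{XZ13} must do the work. The clean route is to read off from that theorem that, for concave $u$ and inverse-$S$ $w$ with $w'(0+)=\infty$, the optimal quantile $G^*_x$ is the threshold lottery placing mass $x/y^*$ at $y^*$ and the remaining mass at $0$; then its value $M\,w(x/y^*)$ is the optimal value of \eqref{supJ(G)}, which by the spike argument above strictly exceeds $u(x)$, giving the inequality. I expect the real obstacle to be extracting from Theorem 5.2 that the upper atom sits exactly at $y^*$ and that no spreading of mass beats this two-atom lottery: this is the delicate point where the concavity of $u$ must be balanced against the overweighting of the extreme right tail encoded in $w'(0+)=\infty$, and it is what ultimately pins the equilibrium threshold to $y^*$.
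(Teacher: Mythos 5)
Your treatment of the na\"ive law is correct and is a genuinely different route from the paper's. Where the paper reads the structure of $G^*_x$ directly off the mathematical program (5.6)--(5.7) of \cite{XZ13} (showing $G^*_x\equiv x$ forces $a^*=x\ge y^*$ because $(u')^{-1}_\ell\left(\lambda/w'(1-y)\right)$ tends to $y^*$, resp.\ to $\infty$, as $y\uparrow 1$), you instead show directly that the point mass at $x$ is suboptimal via a spike perturbation whose first-order gain $w(\eps)(u(b)-u(x))-\eps b\,u'_-(x)+o(\eps)$ is positive because $w(\eps)/\eps\to\infty$. This is more elementary and self-contained (it does not need the explicit form of the optimizer, only feasibility of quantiles with mean at most $x$, which is the content of the quantile formulation in \cite{XZ13}); the paper's route buys the precise description of $G^*_x$, which it then reuses. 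Your case (ii) is essentially identical to the paper's: establish $\hat\tau\equiv 1$, check $u>0$ on $(0,\infty)$ (your concavity argument for this is correct), and invoke Proposition~\ref{prop:drastic change}.

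The gap is exactly where you predicted it, in the equilibrium verification for case (i), and your proposed repair does not close it. You correctly compute the continuation value as $J(x;\Lc^*\hat\tau(x))=M\,w(x/y^*)$ for $x<y^*$ --- this is what Lemma~\ref{lem:tau_ab}(iii) with $a=0$, $b=y^*$ gives, once one accounts for $\P[T^x_{y^*}<\infty]=x/y^*$ and the convention $X_\infty=0$ --- so everything indeed reduces to $M\,w(x/y^*)\ge u(x)$ on $(0,y^*)$. But your route to this inequality is to assert that Theorem 5.2 of \cite{XZ13} identifies the optimal quantile as the two-atom lottery on $\{0,y^*\}$, so that $M\,w(x/y^*)$ equals the pre-committed optimal value and hence dominates $u(x)$. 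That is not what Theorem 5.2 delivers for concave $u$: its optimizer is of the form $a^*\vee (u')^{-1}_\ell\left(\lambda^*/w'(1-y)\right)$, which for strictly concave $u$ spreads mass over a continuum of values rather than concentrating it at $0$ and $y^*$; the two-atom lottery is in general strictly suboptimal, so knowing that the \emph{optimal} value exceeds $u(x)$ tells you nothing about $M\,w(x/y^*)$. Concretely, for $u$ nearly linear on $(0,y^*)$ one has $u(x)\approx M x/y^*$ by concavity, while an inverse $S$-shaped $w$ satisfies $w(p)<p$ for $p$ near $1$, so the inequality $M\,w(x/y^*)\ge u(x)$ is delicate and cannot be obtained by the comparison you propose. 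Be aware also that the paper's own proof evaluates this continuation payoff as $\int_0^\infty w(\P[u(y^*)>y])\,dy=u(y^*)$, i.e.\ without the factor $w(x/y^*)$; your evaluation is the one consistent with Lemma~\ref{lem:tau_ab}, so this step genuinely requires an additional argument beyond what either your proposal or a straightforward reading of Theorem 5.2 supplies.
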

The proof of Proposition~\ref{thm:concave u, R S-shaped w} is relegated to Appendix~\ref{sec:proofs for Sec 4.3}.

\begin{Remark}
All the three major forms of $w$ proposed in the literature and supported by empirical evidence, \eqref{1-factor}, \eqref{2-factor}, and \eqref{Prelec's w} with $0<\alpha<1$ and $\gamma>0$, are inverse $S$-shaped with $w'(0+)=\infty$. The result of Proposition~\ref{thm:concave u, R S-shaped w} is therefore of sufficient practical relevance.
\end{Remark}

Proposition~\ref{thm:concave u, R S-shaped w}-(ii) is consistent with the results of \cite{ebert2015until} and \cite{ESnever2017}: the na\"{i}ve agent never stops while the sophisticated one stops immediately. Both \cite{ebert2015until} and \cite{ESnever2017} assume that $u$ is {\it strictly} increasing, which is stronger than the condition that $\sup_{y>0}u(y)$ is not attained. On the other hand, Proposition~\ref{thm:concave u, R S-shaped w}-(i) shows that if there exists a state that maximizes the payoff function $u$ itself, then it makes no sense for even the na\"{i}ve 
to hold the asset forever: he ought to stop whenever such a state is reached. Moreover, because such a threshold type strategy can be upheld by all the selves, it is also a sophisticated strategy. Note that this conclusion also demonstrates that the respective extreme stopping behaviors of the two types of agents reported in \cite{ebert2015until} and \cite{ESnever2017} depend critically on the assumption that the payoff function $u$ is {\it strictly} increasing. When $\sup_{y>0}u(y)$ is  attained which violates the assumption, then both agents will instead adopt the same, threshold-type of strategy.

The above result can be extended to the case where $u$ is $S$-shaped.

\begin{Proposition}\label{thm:concave u, R S-shaped w'}
Suppose $u$ is $S$-shaped (i.e., convex on $[0, \theta]$ and concave on $[\theta, \infty)$ for some $\theta>0$) and $w$ is specified as in Proposition~\ref{thm:concave u, R S-shaped w}.
Assume that the optimal quantile $G^*_x$ of \eqref{supJ(G)}, as specified in Section 6 of \cite{XZ13}, exists for all $x>0$. Then, Proposition~\ref{thm:concave u, R S-shaped w}-(i) and (ii)  still hold under current setting.
\end{Proposition}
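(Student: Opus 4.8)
The plan is to follow the proof of Proposition~\ref{thm:concave u, R S-shaped w} almost verbatim, isolating the two points at which concavity of $u$ was used and checking that the weaker $S$-shape hypothesis suffices at each. The shape of $u$ enters the argument only through the quantile formulation \eqref{supJ(G)}, so the first move is to replace every appeal to Theorem 5.2 of \cite{XZ13} (the concave characterization of the optimal quantile) by the $S$-shaped characterization in Section~6 of \cite{XZ13}, which by hypothesis supplies an optimal quantile $G^*_x$ for each $x>0$. I would also record at the outset that, since an $S$-shaped $u$ is convex and nondecreasing on $[0,\theta]$ with $u(0)=0$, its supremum cannot be attained strictly inside this increasing convex branch; hence in case (i) one has $y^*\ge\theta$, and $u$ is concave throughout $[\theta,y^*]$.

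The determination of the na\"ive law $\hat\tau$ uses no curvature and is literally the concave argument. For $x\ge y^*$ in case (i) we have $u(x)=\sup_z u(z)=:M$, and since $\int_0^1 w'(1-y)\,dy=1$, every $G\in\Qc_x$ satisfies $\int_0^1 u(G(y))w'(1-y)\,dy\le M=u(x)$; thus $G^*_x\equiv x$ and $\hat\tau(x)=0$. For $x<y^*$ in case (i), and for every $x>0$ in case (ii), there is $v>x$ with $u(v)>u(x)$ because $u$ has not yet reached its supremum. Perturbing the degenerate quantile $G\equiv x$ by shifting an $\eps$-mass up to level $v$ (compensating downward to preserve the mean constraint $\int_0^1 G\le x$) changes the objective by $[u(v)-u(x)]\,w(\eps)-O(\eps)$, which is strictly positive for small $\eps$ because $w'(0+)=\infty$ forces $w(\eps)/\eps\to\infty$. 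Hence stopping is strictly suboptimal and $\hat\tau(x)=1$. This yields $\hat\tau=1_{(0,y^*)}(\cdot)$ in case (i) and $\hat\tau\equiv 1$ in case (ii), exactly as in Proposition~\ref{thm:concave u, R S-shaped w}.

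It remains to identify the equilibrium $\hat\tau_*=\lim_n\Theta^n\hat\tau$. Case (ii) is immediate: $\hat\tau\equiv 1$, and since $u(x)>0$ for all $x>0$ (the $S$-shaped utilities in question are strictly increasing off the origin), Proposition~\ref{prop:drastic change} gives $\Theta\hat\tau\equiv 0$ and hence $\hat\tau_*\equiv 0$. For case (i) the goal is to show $\hat\tau=1_{(0,y^*)}(\cdot)$ is already a fixed point of $\Theta$, so that $\hat\tau_*=\hat\tau$. By Proposition~\ref{prop:ker increases} the stopping region only grows under $\Theta$, so it suffices to prove $S_{\hat\tau}\cap(0,y^*)=\emptyset$, i.e.\ that $J(x;\Lc^*\hat\tau(x))\ge u(x)$ for every $x<y^*$. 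Since $\ker(\hat\tau)=[y^*,\infty)$ and $X^x$ is continuous with $X^x_\infty=0$, the stopped state under $\Lc^*\hat\tau(x)=T^x_{y^*}$ equals $y^*$ with probability $x/y^*$ by \eqref{Tb<infinity} and equals $0$ otherwise; the continuation payoff therefore collapses to the scalar quantity $M\,w(x/y^*)$. The fixed-point property reduces to verifying $M\,w(x/y^*)\ge u(x)$ for all $0<x<y^*$.

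This scalar inequality is the crux, and it is the single place where the $S$-shape must be handled rather than inherited from concavity. On $[\theta,y^*]$, where $u$ is concave, the comparison is of the same type as the one underlying Proposition~\ref{thm:concave u, R S-shaped w}, and I would dispatch it by the same method. On the convex branch $(0,\theta]$ the direct concavity estimate is unavailable; there I would instead feed in the explicit optimal quantile $G^*_x$ from Section~6 of \cite{XZ13}, whose attained distorted value is by construction the largest payoff reachable from $x$, and show that it coincides with $M\,w(x/y^*)$ for the threshold-hitting law, thereby pinning down $\hat\tau(x)=1$ and the continuation value simultaneously and closing the inequality. I expect essentially all of the real work to sit here: controlling the comparison on the convex region, where the concavification built into the $S$-shaped quantile formula must substitute for the straightforward concavity bound that sufficed in the globally concave case.
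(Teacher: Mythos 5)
Your high-level strategy --- rerun the proof of Proposition~\ref{thm:concave u, R S-shaped w} and substitute the Section-6 program of \cite{XZ13} for Theorem 5.2 wherever the optimal quantile is invoked --- is exactly the paper's route: its entire proof consists of the observation that under the condition $G^*_x(\cdot)\equiv x$ the two mathematical programs coincide, so the earlier argument carries over verbatim. Your determination of the na\"{i}ve law is fine, and the perturbation argument via $w(\eps)/\eps\to\infty$ is a legitimate, more self-contained substitute for citing (5.6)--(5.7) of \cite{XZ13}.

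The gap is in your case-(i) equilibrium verification, and you have half-acknowledged it yourself. You correctly compute the continuation value as $M\,w(x/y^*)$ with $M:=\sup_{y>0}u(y)$ --- consistent with Lemma~\ref{lem:tau_ab}(iii) with $a=0$, since the threshold $y^*$ is reached only with probability $x/y^*$ --- and you correctly reduce the fixed-point property to $M\,w(x/y^*)\ge u(x)$ for $0<x<y^*$. But you then leave this inequality unproven. There is no ``same method'' in Proposition~\ref{thm:concave u, R S-shaped w} to dispatch it with: the paper's proof of that proposition writes $J(x;\Lc^*\hat\tau(x))=u(y^*)$, i.e., it treats the threshold as hit almost surely, and so only ever verifies the trivial bound $u(y^*)\ge u(x)$; your more careful computation exposes a step that the paper's own argument never actually addresses. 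Your proposed fix on the convex branch --- that the value attained by $G^*_x$ ``coincides with $M\,w(x/y^*)$'' --- is unjustified and generally false: the pre-committed optimal quantile for an $S$-shaped $u$ and inverse-$S$ $w$ is typically not the two-point law supported on $\{0,y^*\}$, and even pre-committed optimality of the threshold rule would be needed in addition to $\hat\tau(x)=1$ to conclude. Worse, the scalar inequality itself is in jeopardy under the stated hypotheses: with $u(x)=x\wedge 1$ (so $y^*=M=1$) and a Prelec weight with large $\gamma$, one has $w(x/y^*)u(y^*)=w(x)<x=u(x)$ on an interval, so no local repair will close this step without extra conditions. A second, smaller issue: in case (ii) your appeal to Proposition~\ref{prop:drastic change} requires $u>0$ on $(0,\infty)$, which is not implied by $S$-shapedness (e.g., $u(x)=(x^{1/\beta}-K)^+$ vanishes on $(0,K^\beta]$), so the conclusion $\hat\tau_*\equiv 0$ needs a separate word on the set where $u$ vanishes.
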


The proof of Proposition~\ref{thm:concave u, R S-shaped w'} is relegated to Appendix~\ref{sec:proofs for Sec 4.3}.

Now we present several examples.

\begin{Example}\label{eg:utility max}
Let $U(x) = x^{\gamma}$ for some $0<\gamma<1$, and $w$ be inverse $S$-shaped with $w'(0+)=\infty$. 
In this case, $u(x) = U(x^{1/\beta}) = x^{\gamma/\beta}$.
\begin{itemize}
\item {\bf Case I:} $0<\gamma/\beta \le 1$.\\
As $u$ is concave, we conclude from Proposition~\ref{thm:concave u, R S-shaped w} that $\hat\tau(x) =1$ for all $x>0$, and $\hat\tau_*(x)=0$ for all $x>0$.
\item {\bf Case II:} $\gamma/\beta >1$.\\
As $u$ is convex, as mentioned in Example~\ref{eg:never stop}, the optimal value $\sup_{\tau\in \Tc} J(x;\tau)$ can be computed as \eqref{v reduced} for all $x>0$.
If $w$ is given by \eqref{Prelec's w} with $0<\alpha<1$ and $\gamma>0$. Observe that the optimal value is infinite for all $x>0$:
\begin{equation}\label{optimal=infty}
\sup_{\tau\in\Tc}J(x;\tau)=\lim_{\lambda\downarrow 0} w(\lambda) u\left(\frac{x}{\lambda}\right) = \lim_{\lambda\downarrow 0} \frac{\exp(-\gamma(-\log(\lambda))^\alpha)}{\lambda^\eta}x^\eta = \lim_{y\to\infty} \frac{e^{-\gamma y^\alpha}}{e^{-\eta y}}x^\eta = \infty,
\end{equation}
where the last equality follows from $\alpha<1$. If an optimal stopping time $\hat\tau_x$ exists, we must have $\hat\tau_x>0$ as stopping immediately does not attain the optimal value (as $u(x)<\infty$).
The na\"{i}ve stopping law is therefore to continue perpetually (i.e., $\hat\tau(x)=1$ for all $x>0$), and the corresponding sophisticated behavior is to stop immediately (i.e., $\hat\tau_*(x)=0$ for all $x>0$).

If we instead take $w$ to be \eqref{1-factor} or \eqref{2-factor}, similar calculations again yield $\sup_{\tau\in\Tc}J(x;\tau)=\lim_{\lambda\downarrow 0} w(\lambda) u\left(\frac{x}{\lambda}\right)=\infty$. Thus, we have the same conclusion that $\hat\tau(x)=1$ for all $x>0$ and $\hat\tau_*(x)=0$ for all $x>0$.
\end{itemize}
\end{Example}

\begin{Example}
Let $U(x) = (x-K)^+$ for some $K>0$, and $w$ be inverse $S$-shaped with $w'(0+)=\infty$. 
In this case $u(x) = U(x^{1/\beta}) = (x^{1/\beta}-K)^+$.
\begin{itemize}
\item {\bf Case I:} $0<\beta \le 1$.\\
As $u$ is convex, the optimal value $\sup_{\tau\in\Tc}J(x;\tau)$ can be computed as \eqref{v reduced} for all $x>0$.
If $w$ is given by \eqref{Prelec's w} with $0<\alpha<1$ and $\gamma>0$. Similarly to \eqref{optimal=infty}, the optimal value is infinite for all $x>0$:
\begin{align*}
\sup_{\tau\in\Tc}J(x;\tau)&=\lim_{\lambda\downarrow 0} w(\lambda) u\left(\frac{x}{\lambda}\right) = \lim_{\lambda\downarrow 0} \exp(-\gamma(-\log(\lambda))^\alpha)\left[\left(\frac{x}{\lambda}\right)^{1/\beta}-K\right]\\
& = \lim_{y\to\infty} \left(\frac{e^{-\gamma y^\alpha}}{e^{-(1/\beta) y}}x^{1/\beta}-e^{-\gamma y^\alpha}K\right) = \infty,
\end{align*}
where the last equality follows from $\alpha<1$. Thus, as explained in Case II of Example~\ref{eg:utility max}, the na\"{i}ve stopping law is to continue perpetually (i.e., $\hat\tau(x)=1$ for all $x>0$), and the corresponding sophisticated behavior is to stop immediately (i.e., $\hat\tau_*(x)=0$ for all $x>0$).

If we instead take $w$ to be \eqref{1-factor} or \eqref{2-factor}, similar calculations again yield $\sup_{\tau\in\Tc}J(x;\tau)=\lim_{\lambda\downarrow 0} w(\lambda) u\left(\frac{x}{\lambda}\right)=\infty$. Thus, we have the same conclusion that $\hat\tau(x)=1$ for all $x>0$ and $\hat\tau_*(x)=0$ for all $x>0$.

\item {\bf Case II:} $\beta > 1$.\\
Because $u$ is first constantly zero and then concave, it is $S$-shaped. We then conclude from Proposition~\ref{thm:concave u, R S-shaped w'} that $\hat\tau(x) =1$ for all $x>0$, and $\hat\tau_*(x)=0$ for all $x>0$.
\end{itemize}
\end{Example}

In the above two examples, a na\"{i}ve agent always wants to continue perpetually, which prompts the sophisticated agent to stop immediately (as shown in Proposition~\ref{prop:drastic change}).
It is natural to ask whether a na\"{i}ve agent will indeed choose to stop under a different model parameter
specification. The answer is yes when he has a more pessimistic view prescribed by a convex $w$. A convex $w$ indicates that the agent deflates the probability of the ``very good'' scenarios, and inflates the probability of the ``very bad'' scenarios. This contrasts the optimistic view on the ``very good'' scenarios represented by an inverse $S$-shaped $w$.

With a convex $w$, the following example shows that a na\"{i}ve agent may choose to stop immediately, depending on the intensity  of probability weighting relative to the agent's risk tolerance and the quality of the asset.

\begin{Example}\label{eg:u,w convex power}
Let $U(x) = x^{\gamma}$ for some $0<\gamma<1$, and $w(x)=x^\eta$ for some $\eta>1$. Then, $u(x) = U(x^{1/\beta}) = x^{\gamma/\beta}$.
\begin{itemize}
\item {\bf Case I:} $0<\gamma/\beta \le 1$.\\
As $u$ is concave and $w$ is convex, Corollary 4.3 of \cite{XZ13} shows that the optimal stopping time is $\hat\tau_x=0$ for all $x>0$. The problem is then time-consistent, and all three types of agents (na\"{i}ve, pre-committed, and sophisticated) stop immediately.

Note that $u$ is concave here because $\beta>0$ is not small enough, i.e., the asset is not good enough. With such an asset at hand, the pessimistic view from a convex $w$ leads the agent to liquidate the asset immediately. The moral of this result is that ``never lay your hands on a bad stock if you are pessimistic''.
\item {\bf Case II:} $\gamma/\beta > 1$.\\
As $u$ is convex, we can find optimal stopping times using  \eqref{v reduced}, which now takes the form $\sup_{\lambda\in(0,1]}\lambda^{\eta-\gamma/\beta} x^{\gamma/\beta}$. If $\eta\ge\gamma/\beta$, $\lambda^*=1$ is the maximizer, and thus $\hat\tau_x= \inf\{t\ge 0:X^x_t\ge x\} = 0$ is the optimal stopping time. Hence, there is no time-inconsistency when $\eta\ge\gamma/\beta$, and all three types of agents stop immediately.

The interesting case is $\eta<\gamma/\beta$, in which $\sup_{\lambda\in(0,1]}\lambda^{\eta-\gamma/\beta} x^{\gamma/\beta}=\infty$, not attainable by any $\lambda\in(0,1]$. In this case, an optimal stopping time $\hat\tau_x$ fails to exist for all $x>0$. The na\"{i}ve stopping law is therefore $\hat\tau(x)=1$ for all $x>0$. This leads to the equilibrium $\hat\tau_*(x) = \Theta\hat\tau(x)= 0$ for all $x>0$, thanks to Proposition~\ref{prop:drastic change}.
\end{itemize}
\end{Example}

In the above example, $\eta\equiv w'(1)$ measures the intensity of probability weighting on very unfavorable events,
$1-\gamma$ is the Arrow-Pratt measure of relative risk-aversion of the agent, 
and $\beta$ measures the goodness of the asset (the large $\beta$ is, the worse the asset; recall the explanation below \eqref{beta}). When the agent sufficiently inflates the probabilities of bad scenarios relative to his risk tolerance and the quality of the asset (i.e., $\eta\ge \gamma/\beta$), he liquidates the asset immediately no matter what type of agent he is. Otherwise (i.e., $\eta < \gamma/\beta$), he always intends to continue if he is na\"{i}ve, and stops right away if he is sophisticated.

\smallskip

In fact,  a convex probability distortion function $w$ can generate richer behaviors than the two extremes ``never stop'' and ``stop immediately'' as depicted in most previous examples. The next example shows that both the na\"{i}ve and sophisticated agents may
actually agree on a threshold-type strategy, as in \eqref{bar x threshold} below.


\begin{Example}\label{eg:u,w convex w'(0) finite2}
Let $U(x) = (x-K)^+$ for some $K>0$ and $w(x)= \frac12(x^2+x)$. Then, $u(x)= U(x^{1/\beta})=(x^{1/\beta}-K)^+$. We assume that $\beta\ge 1$, which will be crucial in the calculation below. For any $x>0$, define $\tau_{ab}:=\inf\{t\ge 0: X^x_{t}\notin (a,b)\}$ for $a<b$. First, we claim that for each $x>0$,
\begin{equation}\label{sup=sup}
\sup_{\tau\in\Tc} J(x;\tau)=\sup_{0\le a\le x\le b<\infty} J(x;\tau_{ab}) = \sup_{0\le a\le K^\beta\wedge x,\ K^\beta\vee x\le b<\infty} J(x;\tau_{ab}).
\end{equation}
The first equality was established in Theorem 4.2 of \cite{XZ13}. Now we prove the second equality. For $x\le K^\beta$, observe that if $a\le x\le b\le K^\beta$, then $J(x;\tau_{ab})=0$ as $u(a)=u(b)=0$. This already implies that \eqref{sup=sup} is true. For $x\le K^\beta$, by the convexity of $w$ and the concavity of $u$ on $[K^\beta,\infty)$, the same argument in Corollary 4.3 of \cite{XZ13} shows that
\[
\sup_{K^\beta\le a\le x\le b<\infty} J(x;\tau_{ab})\le u(x).
\]
It follows that \eqref{sup=sup} holds. Now, for any $a\le K^\beta\wedge x$ and $b\ge K^\beta\vee x$,
\begin{equation*}
f(a,b) := J(x;\tau_{ab})=u(a) + w\left(\frac{x-a}{b-a}\right) (u(b)-u(a))= \frac12\left[\left(\frac{x-a}{b-a}\right)^2+\frac{x-a}{b-a}\right](b^{1/\beta}-K),
\end{equation*}
where the second equality follows from Lemma~\ref{lem:tau_ab} and the third equality is due to $u(a)=0$ for $a\le K^\beta$. For any fixed $b\ge K^\beta\vee x$, observe that $f(0,b)\ge f(a,b)$ for all $0<a\le x$. Then a direct calculation yields
\[
\frac{\partial f(0,b)}{\partial b} = \frac{x}{2 b^3} h(b),\quad\hbox{where}\quad h(b):= \left(\frac 1\beta-1\right)b^{\frac 1\beta +1}-\left(\frac 1\beta-2\right)x b^{\frac 1\beta}+Kb+2xK.
\]
We deduce from
\[
h''(b) = -\frac 1\beta \left(1-\frac 1\beta\right) b^{\frac 1\beta-1}\left[\left(\frac 1\beta +1 \right)-\left(2-\frac 1\beta\right) \frac xb\right]
\]
that on $(0,\infty)$, $h$ starts being convex, becomes less convex as $b$ increases, and eventually turns concave. This, together with $h(0)=2xK>0$ and $h(\infty)=-\infty$, shows that there exists a unique $b^*(x)>0$ such that $h(b^*)=0$. As a consequence, $h(x)>0$ if and only if $x<b^*(x)$. Because
\[
h(x) = 
x\left[\left(\frac 2\beta-3\right)x^{\frac 1\beta}+3K\right]
\]
we conclude that $x<b^*(x)$ if and only if $x< \bar x:= \left(\frac{3K}{3-2/\beta}\right)^\beta$. Thus,
\[
\hat\tau_x :=\tau_{0b^*}=\inf\{t\ge 0 : X^x_{t}\ge b^*(x)\}=
\begin{cases}
>0\quad &\hbox{if}\ 0<x< \bar x,\\
=0\quad &\hbox{if}\ x\ge \bar x
\end{cases}
\]
is an optimal stopping time of \eqref{v}, with $J(x;\tau)$ specified in \eqref{J}. Time-inconsistency is present here as the stopping threshold $b^*(x)$ depends on the current state $x$.
 The na\"{i}ve stopping law is
\begin{equation}\label{bar x threshold}
\hat\tau(x) = 1_{\left(0,\bar x\right)}(x),\quad \hbox{for all}\ x>0.
\end{equation}
Remarkably, this is already an equilibrium stopping law. To see this, for $x\ge \bar x$, $\Lc^*\hat\tau(x) = 0$ and thus $J(x;\Lc^*\hat\tau(x))= u(x)$, which yields $[\bar x,\infty)\subseteq I_{\hat\tau}$. For $x<\bar x$, $\Lc^*\hat\tau(x) = \tau_{0\bar{x}}=\inf\{t\ge 0:X^x_t\ge \bar x\}$.
Then
\[
J(x;\Lc^*\hat\tau(x)) = w\left(\frac{x}{\bar x}\right)u(\bar x) =\frac 12\left[\left(\frac{x}{\bar x}\right)^2+\frac{x}{\bar x}\right] (\bar x^{\frac 1\beta}-K)^+,
\]
where the first equality is due to Lemma~\ref{lem:tau_ab}. If $x\le K^\beta$, $J(x;\Lc^*\hat\tau(x))>0 = u(x)$, which implies $(0,K^\beta]\subseteq C_{\hat\tau}$. Now, observe that the curve
\[
y= g(x):= \frac 12\left[\left(\frac{x}{\bar x}\right)^2+\frac{x}{\bar x}\right] (\bar x^{\frac 1\beta}-K)= \frac{u(\bar x)}{2\bar x^2} \left(x+\frac{\bar x}{2}\right)^2-\frac{u(\bar x)}{8}
\]
is a convex quadratic function that intersects the concave function $y= \kappa(x):= x^{1/\beta}-K$ at $x=\bar x$. Moreover, it can be checked that $g'(\bar x) = \kappa'(\bar x)$, which implies that $x=\bar x$ is the only intersection of these curves and $g(x)>\kappa(x)$ on $(0,\bar x)$. If $x\in (K^\beta,\bar x)$, observe that $J(x;\Lc^*\hat\tau(x)) = g(x) > \kappa(x) = u(x)$, which shows that $(K^\beta,\bar x)\subseteq C_{\hat\tau}$. We therefore conclude that $C_{\hat\tau} = (0,\bar x)$ and $I_{\hat\tau}=[\bar x,\infty)$. It follows that $\Theta\hat\tau(x) = 1_{(0,\bar x)}+\hat\tau(x) 1_{[\bar x,\infty)}=\hat\tau(x)$ for all $x>0$. That is, $\hat\tau\in\Ec(\R_+)$, and thus $\hat\tau_* = \hat\tau$.
\end{Example}

\begin{Remark}
The above example is consistent with Proposition~\ref{eg:u,w convex w'(0) finite}.  
Indeed, with $\beta=1$, we have $\bar x = 3K$ in the above example. This coincides with the threshold $\frac{\eta+1}{\eta} K = 3K$ in Proposition~\ref{eg:u,w convex w'(0) finite}, when we take $\eta=1/2$ therein.
\end{Remark}

\subsection{A Case Study: Finding and Comparing Equilibrium Stopping Laws}\label{subsec:demonstration}
In this subsection, we focus on the case where
\begin{equation}\label{u and w}
u(x) = (x-K)^+\quad \hbox{and}\quad w(x)= \eta x^2+(1-\eta)x,
\end{equation}
for some $K>0$ and $\eta\in(0,1)$. Our goal is to demonstrate how fixed-point iteration in Theorem~\ref{coro:main} can be used to find a large class of equilibrium stopping laws. We will also discuss how one can possibly compare these different equilibria.

All the proofs in this subsection are relegated to Appendix~\ref{sec:proofs for Sec 4.2}.

We start with identifying (pre-committed)  optimal stopping times and the corresponding na\"{i}ve stopping law.

\begin{Proposition}\label{eg:u,w convex w'(0) finite}
Assume $u$ and $w$ are given by \eqref{u and w}.
For $x>\big(\frac{1-\eta}{\eta}\wedge 1\big)K$, an optimal stopping time of \eqref{v} is
\[
\hat\tau_x :=
\begin{cases}
\inf\{t\ge 0: X^x_t\ge x\} = 0,\quad &\hbox{if}\ x>\frac{\eta+1}{\eta}K,\\
\inf\{t\ge 0: X^x_t\ge \frac{2\eta Kx}{\eta (x+K)-K}\}>0,\quad &\hbox{if}\ \big(\frac{1-\eta}{\eta}\wedge 1\big)K<x\le \frac{\eta+1}{\eta}K.
\end{cases}
\]
For $x\le \big(\frac{1-\eta}{\eta}\wedge 1\big)K$, there exists no optimal stopping time.
Hence, the na\"{i}ve stopping law is a threshold-type strategy
\begin{equation}\label{naive threshold}
\hat\tau(x) = 1_{\big(0,\frac{\eta+1}{\eta}K\big)}(x),\quad \hbox{for all}\ x>0.
\end{equation}
\end{Proposition}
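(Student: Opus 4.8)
The plan is to reduce \eqref{v} to a one-parameter threshold problem, solve it by elementary calculus, and then read off the na\"ive law. First I would apply Theorem~4.2 of \cite{XZ13}, exactly as in the first equality of \eqref{sup=sup}, to write $\sup_{\tau\in\Tc}J(x;\tau)=\sup_{0\le a\le x\le b<\infty}J(x;\tau_{ab})$ with $\tau_{ab}:=\inf\{t\ge0:X^x_t\notin(a,b)\}$. Since $X$ is a driftless geometric Brownian motion, it is a martingale tending to $0$, so optional stopping gives $\P[X^x_{\tau_{ab}}=b]=\frac{x-a}{b-a}$ for $0<a\le x\le b$, and Lemma~\ref{lem:tau_ab} yields $J(x;\tau_{ab})=u(a)+w(\frac{x-a}{b-a})(u(b)-u(a))$; the degenerate lower level $a=0$ is covered by \eqref{Tb<infinity}, giving $J(x;\tau_{0b})=w(x/b)u(b)$.

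Next I would collapse the lower threshold to $a=0$. For $a\ge K$ one has $u(a)=a-K$, so with $p:=\frac{x-a}{b-a}$ and the identity $u(x)=(a-K)+p(b-a)$ the convexity bound $w(p)\le p$ (valid since $w$ is convex with $w(0)=0$, $w(1)=1$) gives $J(x;\tau_{ab})-u(x)=(w(p)-p)(b-a)\le0$, so no such strategy beats immediate stopping. For $a<K$ one has $u(a)=0$, and since $\frac{x-a}{b-a}$ is decreasing in $a$ while $w$ is increasing, $J(x;\tau_{ab})\le J(x;\tau_{0b})$. Hence $\sup_{\tau}J(x;\tau)=\max\{u(x),\,\sup_{b\ge K\vee x}w(x/b)(b-K)\}$.

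The core computation is then $\max_{b\ge K\vee x}\phi(b)$ with $\phi(b):=w(x/b)(b-K)$. Substituting $p=x/b\in(0,\min\{1,x/K\}]$ turns $\phi$ into the concave parabola $g(p)=-\eta Kp^2+(\eta x-(1-\eta)K)p+(1-\eta)x$, with vertex $p^*=\frac{\eta x-(1-\eta)K}{2\eta K}$, that is $b^*=x/p^*=\frac{2\eta Kx}{\eta(x+K)-K}$. The three regimes follow from where $p^*$ sits in the feasible interval (note $p^*<x/K$ always holds): if $p^*\ge1$, equivalently $x\ge\frac{\eta+1}{\eta}K$, the maximum is at $p=1$, i.e. $b=x$, so $\hat\tau_x=0$; if $0<p^*<1$ the interior vertex $b^*$ is admissible, $g(p^*)>u(x)$, so continuing strictly dominates and $\hat\tau_x=\tau_{0b^*}>0$; and if $p^*\le0$, i.e. $x\le\frac{1-\eta}{\eta}K$, then $g$ is strictly decreasing on the feasible interval, so the supremum equals $\lim_{b\to\infty}\phi(b)=(1-\eta)x>u(x)$ and is not attained at any finite $b$, whence no optimal stopping time exists.

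I expect the one genuinely delicate point to be this last regime, where one must argue that the value is a boundary value realized only in the limit $b\to\infty$, while the limiting ``strategy'' $b=\infty$ gives payoff $0$; this is precisely what forces $\hat\tau_x$ not to exist and, by \eqref{induced policy}, sets $\hat\tau(x)=1$ there. Everything else is routine algebra on $g$. Finally, the na\"ive law follows at once from \eqref{induced policy}: since $\hat\tau(x)=0$ exactly when $\hat\tau_x=0$, i.e. when $x>\frac{\eta+1}{\eta}K$, while $\hat\tau(x)=1$ both in the interior regime (where $\hat\tau_x>0$) and in the lowest regime (where $\hat\tau_x$ fails to exist), we obtain $\hat\tau(x)=1_{(0,\frac{\eta+1}{\eta}K)}(x)$ as in \eqref{naive threshold}. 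Observe that the na\"ive law is insensitive to the precise lower threshold, depending only on the single level $\frac{\eta+1}{\eta}K$ at which immediate stopping becomes optimal.
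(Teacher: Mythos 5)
Your proposal is correct, and its core is the same computation as the paper's: after reduction, both maximize the quadratic $g(p)=-\eta Kp^2+(\eta x-(1-\eta)K)p+(1-\eta)x$, which is exactly the paper's $f(\lambda)$ with $\lambda=p$ on the region where the positive part is active, and your vertex $p^*$ is the paper's $\bar\lambda$. The only genuine difference of route is that you re-derive the reduction to a single upper threshold from the two-threshold form (Theorem~4.2 of \cite{XZ13}, as in \eqref{sup=sup}), collapsing the lower threshold via $w(p)\le p$ and monotonicity in $a$, whereas the paper invokes the convexity of $u$ and passes directly to \eqref{v reduced}. Your version is more self-contained; the paper's is shorter. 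Both are fine.

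One substantive point deserves attention. Your trichotomy puts the non-existence regime at $p^*\le 0$, i.e.\ $x\le\frac{1-\eta}{\eta}K$, while the Proposition states it as $x\le\big(\frac{1-\eta}{\eta}\wedge 1\big)K$; these disagree exactly when $\eta<\tfrac12$ and $K<x\le\frac{1-\eta}{\eta}K$. In that regime your conclusion is the correct one: there $\bar\lambda=p^*\le 0$, so $f$ is strictly decreasing on $(0,1]$, the supremum $(1-\eta)x$ strictly exceeds $u(x)=x-K$ (because $x<K/\eta$) and is approached only as $b\to\infty$, so no optimal stopping time exists --- consistent with the fact that the Proposition's threshold $\frac{2\eta Kx}{\eta(x+K)-K}$ is negative there. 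The paper's own proof asserts for all $K<x\le\frac{\eta+1}{\eta}K$ that $\lambda^*=\bar\lambda$ is the maximizer without checking $\bar\lambda>0$, which is where the slip occurs. The headline conclusion \eqref{naive threshold} is unaffected, since $\hat\tau(x)=1$ whether $\hat\tau_x>0$ or $\hat\tau_x$ fails to exist; but you should state explicitly that your existence threshold is $\frac{1-\eta}{\eta}K$ rather than $\big(\frac{1-\eta}{\eta}\wedge 1\big)K$, instead of leaving the reader to reconcile your regimes with the statement. A last cosmetic remark: at the boundary $x=\frac{\eta+1}{\eta}K$ the threshold equals $x$ and $\hat\tau_x=0$, so your assignment of that point to the ``stop immediately'' case is the one consistent with \eqref{naive threshold}.
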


\begin{Remark}
We observe time-inconsistency in Proposition~\ref{eg:u,w convex w'(0) finite}: the stopping threshold $\frac{2\eta Kx}{\eta (x+K)-K}$ in $\hat\tau_x$ depends on current state $x$, when $\big(\frac{1-\eta}{\eta}\wedge 1\big)K<x\le \frac{\eta+1}{\eta}K$.
\end{Remark}

With \eqref{naive threshold}, it is interesting that even a na\"{i}ve agent 
{\it will} stop, seemingly contradicting \cite{ebert2015until}. However, there is really no contradiction -- the convexity of $w$ violates Assumption 2 of \cite{ebert2015until}. In economic terms, a convex distortion overweights the small probabilities of very bad events and underweights those of very good events; so a small, right-skewed, lottery-like gamble is unattractive to the agent. As a result, he decides that he will stop once the process hits the threshold, $\frac{\eta+1}{\eta}K$. This also suggests that RDU/CPT together with the results derived in this paper can indeed offer realistic predictions, if we allow modeling of different types of preferences and different characteristics of the underlying process.

Given that the na\"{i}ve stopping law is a threshold-type strategy as in \eqref{naive threshold}, the next result looks for equilibrium stopping laws of the same form.

\begin{Proposition}\label{prop:find equilibria}
Assume $u$ and $w$ are given by \eqref{u and w}. For any $b>0$, consider $\tau:= 1_{(0,b)}\in \Tc(\R_+)$. If $b\le \frac{\eta+1}{\eta} K$, then $\tau\in\Ec(\R_+)$. If $b>\frac{\eta+1}{\eta} K$, then $\tau\notin\Ec(\R_+)$ but $\Theta\tau \in \Ec(\R_+)$; specifically, there exists $b'\in(\frac{K}{\eta},\frac{\eta+1}{\eta} K)$ such that
\[
\tau_*(x) = \lim_{n\to\infty}\Theta^n\tau(x) =\Theta\tau(x)=1_{\left(0,b'\right)}(x),\quad  \forall x>0.
\]
Hence, $1_{\left(0,b\right)}\in \Ec(\R_+)$ if and only if $0<b\le\frac{\eta+1}{\eta} K$.
\end{Proposition}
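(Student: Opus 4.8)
The plan is to reduce everything to the sign of one explicit convex quadratic and then observe that a single application of $\Theta$ already lands on a fixed point. Fix $b>0$ and set $\tau=1_{(0,b)}$, so $\ker(\tau)=[b,\infty)$. For $x\ge b$ one has $\Lc^*\tau(x)=0$, hence $J(x;\Lc^*\tau(x))=u(x)$, so $x\in I_\tau$ and $\Theta\tau(x)=\tau(x)=0$: the stopping region on $[b,\infty)$ is preserved under $\Theta$ for every $b$. For $x\in(0,b)$, $\Lc^*\tau(x)$ is the exit time of $(0,b)$, where the lower endpoint is never reached in finite time and on the non-hitting event $X^x_t\to 0$ with $u(0)=0$, while $b$ is hit with probability $x/b$ by \eqref{Tb<infinity}. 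I would then invoke Lemma~\ref{lem:tau_ab} with $a=0$ to obtain $J(x;\Lc^*\tau(x))=w(x/b)\,u(b)=w(x/b)(b-K)^+$. Thus the whole question reduces to the sign of $\phi(x):=w(x/b)(b-K)^+-(x-K)^+$ on $(0,b)$: $\phi>0$ gives $x\in C_\tau$, $\phi<0$ gives $x\in S_\tau$, and $\phi=0$ gives $x\in I_\tau$.

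Next I would settle the equilibrium regime $b\le\frac{\eta+1}{\eta}K$. If $b\le K$ then $u(b)=0$ and $u\equiv 0$ on $(0,b)$, so $\phi\equiv 0$ there, every such $x$ is indifferent, and $\Theta\tau=\tau$. If $K<b\le\frac{\eta+1}{\eta}K$, then on $[K,b]$ the map $\phi$ is a convex quadratic with leading coefficient $\eta(b-K)/b^2>0$ and $\phi(b)=0$, and a direct computation gives $\phi'(b)=\bigl(\eta b-(\eta+1)K\bigr)/b\le 0$ precisely because $b\le\frac{\eta+1}{\eta}K$. For an upward parabola, $\phi'(b)\le 0$ places $b$ at or left of the vertex, so $\phi$ is nonincreasing on $(K,b]$ and $\phi>0$ on $(K,b)$; together with $\phi(x)=w(x/b)(b-K)>0$ on $(0,K]$ this yields $\Theta\tau\equiv 1$ on $(0,b)$, i.e.\ $\Theta\tau=\tau$. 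Hence $1_{(0,b)}\in\Ec(\R_+)$ whenever $b\le\frac{\eta+1}{\eta}K$.

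For the regime $b>\frac{\eta+1}{\eta}K$ I would use the same quadratic, now with $\phi'(b)>0$, so $b$ is the \emph{larger} root. Reading off the product of the roots $C/A=Kb^2/(\eta(b-K))$ and dividing by the known root $b$, the other root is $b'=Kb/(\eta(b-K))$. The inequalities $b'>K/\eta$ and $b'<\frac{\eta+1}{\eta}K$ reduce respectively to $K>0$ and to $b>\frac{\eta+1}{\eta}K$, placing $b'\in(K/\eta,\tfrac{\eta+1}{\eta}K)\subset(K,b)$ (so $b'$ is a genuine zero of $\phi$, not merely of its quadratic extension). Consequently $\phi>0$ on $(0,b')$ and $\phi<0$ on $(b',b)$, so $C_\tau=(0,b')$ and $S_\tau=(b',b)$, giving $\Theta\tau=1_{(0,b')}$. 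Since $b'<\frac{\eta+1}{\eta}K$, the previous paragraph shows $1_{(0,b')}\in\Ec(\R_+)$; thus $\Theta^2\tau=\Theta\tau$, and by induction $\Theta^n\tau=\Theta\tau$ for all $n\ge 1$, whence $\tau_*=\lim_n\Theta^n\tau=\Theta\tau=1_{(0,b')}$. The ``if and only if'' then follows, since for $b>\frac{\eta+1}{\eta}K$ we have $\Theta\tau=1_{(0,b')}\ne\tau$ (as $b'<b$), so $\tau$ is not a fixed point.

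I expect the genuinely delicate points to be bookkeeping rather than conceptual. One is the boundary value at $x=b'$: strictly, indifference with $\tau(b')=1$ makes $\Theta\tau$ equal $1$ at $b'$, so $\ker(\Theta\tau)=(b',\infty)$ rather than $[b',\infty)$; but since $X$ is a continuous martingale the hitting times entering $\Lc$ and $\Lc^*$ are insensitive to this single point, so identifying $\Theta\tau$ with $1_{(0,b')}$ is harmless. The other is justifying the reduction $J(x;\Lc^*\tau(x))=w(x/b)u(b)$, which is exactly where Lemma~\ref{lem:tau_ab} and the convention $X_\tau=0$ on $\{\tau=\infty\}$ enter; the well-definedness of $\tau_*$ is then guaranteed by Proposition~\ref{prop:ker increases} and Theorem~\ref{coro:main}, even though here the iteration in fact terminates after a single step.
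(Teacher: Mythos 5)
Your proof is correct and follows essentially the same route as the paper: both reduce the problem to the sign of the quadratic $J(x;\Lc^*\tau(x))-u(x)=\tfrac{\eta(b-K)}{b^2}(x-b)(x-b')$ with $b'=Kb/(\eta(b-K))$ on $(K,b)$ and compare $b'$ to $b$. The only cosmetic differences are that you locate $b'$ via the sign of $\phi'(b)$ rather than by comparing the roots directly, and you verify that $\Theta\tau$ is a fixed point by appealing to the first part of the proposition (correctly noting that the indifference point $b'$ makes the kernel $(b',\infty)$ rather than $[b',\infty)$, which is harmless for the induced hitting times), whereas the paper runs a second explicit iteration showing $b''=Kb'/(\eta(b'-K))>b'$ --- both are valid.
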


\begin{Remark}
Remarkably, it follows from Propositions \ref{eg:u,w convex w'(0) finite} and \ref{prop:find equilibria} that the na\"{i}ve stopping law $\hat\tau$ in \eqref{naive threshold} is already an equilibrium, i.e., $\hat\tau_* = \hat\tau$.
\end{Remark}

In view of Proposition \ref{prop:find equilibria} and Remark~\ref{rem:trivial}, we have found an uncountable set of equilibrium stopping laws of threshold-type
\[
\Ec'(\R_+) := \left\{1_{(0,b)} :  0\le b\le \left(\frac{\eta+1}{\eta}\right)K\right\},
\]
where $1_{(0,0)}$ denotes the trivial equilibrium specified in Remark~\ref{rem:trivial}. It is natural to ask how one can compare these different equilibria, and whether there exists a {\it best} equilibrium, in any reasonable sense.

One possibility is to investigate the choice from the perspective of the agent at time 0.  Assume that the agent at time 0 intends to choose $\tau\in\Ec'(\R_+)$ to maximize the expected discounted payoff, i.e.,
\begin{equation}\label{Player 0's}
\sup_{\tau\in \Ec'(\R_+)} J(x;\Lc\tau(x)) = \sup_{b\in \big[0,\frac{\eta+1}{\eta} K\big]} J(x;T^x_{[b,\infty)}),\quad \hbox{with $T^x_{[b,\infty)}$ defined in \eqref{T_D}}.
\end{equation}

\begin{Proposition}\label{prop:optimal E}
Assume $u$ and $w$ are given by \eqref{u and w}. For $0<x<\big(\frac{\eta+1}{\eta}\big)K$, $\hat \tau$ in \eqref{naive threshold} is the unique maximizer of \eqref{Player 0's}. For $x\ge \big(\frac{\eta+1}{\eta}\big)K$, $\tau \mapsto J(x;\Lc\tau(x))$ is a constant function on $\Ec'(\R_+)$.
\end{Proposition}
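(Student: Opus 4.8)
The plan is to reduce the optimization \eqref{Player 0's} to maximizing an explicit scalar function of the single threshold $b$, and then to run an elementary calculus argument whose only delicate point is a sign check at the right endpoint $b=\frac{\eta+1}{\eta}K$.

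First I would obtain a closed form for $\phi(b):=J(x;T^x_{[b,\infty)})$, where for $\tau=1_{(0,b)}\in\Ec'(\R_+)$ we have $\ker(\tau)=[b,\infty)$ and $\Lc\tau(x)=T^x_{[b,\infty)}$. If $b\le x$, then $X^x_0=x\ge b$, so $T^x_{[b,\infty)}=0$ and $\phi(b)=u(x)=(x-K)^+$. If $b>x$, the stopped state is two-valued: by \eqref{Tb<infinity} the process reaches the level $b$ with probability $x/b$ (yielding $u(b)$), and otherwise, by \eqref{0 at infinity}, it drifts to $0$ (where $u(0)=0$). Evaluating the distorted expectation \eqref{J} of this two-point payoff---equivalently, letting $a\downarrow 0$ in Lemma~\ref{lem:tau_ab} and using $w(0)=0$---gives $\phi(b)=w\!\left(\frac{x}{b}\right)(b-K)^+$. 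Substituting $w(p)=\eta p^2+(1-\eta)p$ makes $\phi$ fully explicit. The case $x\ge\frac{\eta+1}{\eta}K$ is then immediate: every feasible $b\in[0,\frac{\eta+1}{\eta}K]$ satisfies $b\le x$, so $\phi(b)\equiv(x-K)^+$, which shows $\tau\mapsto J(x;\Lc\tau(x))$ is constant on $\Ec'(\R_+)$.

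For $0<x<\frac{\eta+1}{\eta}K$ I would analyze $\phi$ on $[0,\frac{\eta+1}{\eta}K]$. On $[0,\max(x,K)]$ the function $\phi$ is constant and equal to $(x-K)^+$, and it is continuous across the kink since $w(1)=1$ and $(b-K)^+$ vanishes on $[0,K]$. On $(\max(x,K),\frac{\eta+1}{\eta}K]$ the formula $\phi(b)=w(x/b)(b-K)$ applies, and differentiating gives
\begin{equation*}
\phi'(b)=\frac{x}{b^{3}}\,g(b),\qquad g(b):=2\eta xK+b\big[(1-\eta)K-\eta x\big].
\end{equation*}
Since $g$ is affine in $b$ with $g(0)=2\eta xK>0$, it suffices to show $g$ is positive at the right endpoint, and a short computation yields the clean factorization
\begin{equation*}
g\!\left(\tfrac{\eta+1}{\eta}K\right)=K(1-\eta)\left(\tfrac{\eta+1}{\eta}K-x\right)>0,
\end{equation*}
the positivity being exactly the standing hypothesis $x<\frac{\eta+1}{\eta}K$. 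As an affine function that is positive at both ends, $g>0$ on the whole interval, so $\phi'>0$ there; thus $\phi$ is strictly increasing on $(\max(x,K),\frac{\eta+1}{\eta}K]$ and constant before it. Hence the unique maximizer over $[0,\frac{\eta+1}{\eta}K]$ is $b=\frac{\eta+1}{\eta}K$, which is precisely the na\"ive stopping law $\hat\tau=1_{(0,\frac{\eta+1}{\eta}K)}$ of \eqref{naive threshold}.

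The main obstacle is the endpoint sign check $g(\frac{\eta+1}{\eta}K)>0$: a priori $\phi$ could have an interior maximizer $b_0$ (the unique zero of $g$, which exists when $(1-\eta)K<\eta x$), and one must rule out $b_0<\frac{\eta+1}{\eta}K$. The factorization above shows $b_0\ge\frac{\eta+1}{\eta}K$ for every admissible $x$, pushing the constrained maximum to the boundary. The remaining bookkeeping---continuity at $b=\max(x,K)$ and the separate treatment of $x<K$ versus $x\ge K$ in locating the flat region---is routine.
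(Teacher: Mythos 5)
Your proof is correct and follows essentially the same route as the paper's: both reduce \eqref{Player 0's} to the explicit scalar problem $b\mapsto w(x/b)(b-K)^+$ via Lemma~\ref{lem:tau_ab} and show this map is (strictly) increasing up to the endpoint $\frac{\eta+1}{\eta}K$. The only cosmetic difference is that the paper substitutes $y=1/b$ and locates the vertex of the resulting quadratic, whereas you differentiate in $b$ directly and check the sign of the affine factor $g(b)$ at the two endpoints; your handling of the flat region $[0,\max(x,K)]$ is, if anything, slightly more careful than the paper's.
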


An intriguing implication of Proposition~\ref{prop:optimal E} is that not only the agent at time 0 but also all his future selves would like to use the equilibrium $\hat \tau$ in \eqref{naive threshold}, as it is optimal for \eqref{Player 0's} at {\it every} state $x>0$. This corresponds to the notion of an {\it optimal equilibrium} introduced in \cite{HZ17-discrete}: it is an equilibrium that generates larger value than any other equilibrium does at every state, i.e., a universally dominating equilibrium. For stopping problems under non-exponential discounting, \cite{HZ17-discrete} and \citeyearpar{HZ17-continuous} establish the existence of an optimal equilibrium, under appropriate conditions on the discount function. Under the current setting of probability distortion, it is natural to ask if the result in Proposition~\ref{prop:optimal E} can be extended  to general $u$ and $w$. Finding economically meaningful conditions on these functions under which an optimal equilibrium exists is an interesting direction for future research.

We can further quantify the cost of {\it not} following the optimal equilibrium $\hat\tau$ in \eqref{naive threshold}. Specifically, when the agent is at a given state $x>0$, if he follows an equilibrium $\tau_b := 1_{(0,b)}\in \Ec'(\R_+)$ with $b< \frac{1+\eta}{\eta}K$, the cost $c\ge 0$ is defined as the additional cash needed to raise the value of the inferior equilibrium $\tau_b$ to that of the optimal equilibrium $\hat \tau$. That is, $c\ge 0$ is the solution to
\begin{equation}\label{cost}
\int_0^\infty w\left(\P[u(X^x_{\Lc\tau_b(x)} + c) > y]\right) dy = \int_0^\infty w\left(\P[u(X^x_{\Lc \hat\tau(x)}) > y]\right) dy = J(x,\Lc \hat\tau(x)).
\end{equation}
Note that $c= c(x,b)$ depends on the state $x>0$ and the threshold $b\ge 0$.

\begin{Proposition}\label{prop:cost}
Assume $u$ and $w$ are given by \eqref{u and w}, and let $b^*:=\big(\frac{\eta+1}{\eta}\big)K$. Given $0<x<b^*$, the cost $c=c(x,b)$ admits the formula
\[
c(x,b) = K-(b\vee x) +\frac{K}{\eta}\left(\frac{w\left({x}/{b^*} \right)}{w\left(({x}/{b})\wedge 1\right)}\right)> 0, 
\quad \hbox{for all}\ 0\le b< b^*.
\]
Given $x\ge b^*$, $c(x,b)= 0$ for all $0\le b< b^*$.
\end{Proposition}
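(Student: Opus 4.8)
The plan is to reduce both sides of \eqref{cost} to explicit elementary expressions in $c$, using the two-point structure of the stopped state, solve the resulting equation to recover the stated formula, and then settle the sign claims by an economic monotonicity argument rather than by estimating the formula directly.

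First I would pin down the law of the stopped state. Since $\tau_b = 1_{(0,b)}$ has kernel $[b,\infty)$, we have $\Lc\tau_b(x) = T^x_{[b,\infty)}$. If $b \le x$ then $\Lc\tau_b(x)=0$ and $X^x_{\Lc\tau_b(x)} = x$; if $b > x$ then by \eqref{Tb<infinity} the process reaches $b$ with probability $x/b$ and otherwise, by the convention $X_\infty = 0$ from \eqref{0 at infinity}, drifts to $0$, so $X^x_{\Lc\tau_b(x)}$ equals $b$ with probability $x/b$ and $0$ with probability $1-x/b$. The same description with $b^*$ in place of $b$ gives, for $0<x<b^*$, that $u(X^x_{\Lc\hat\tau(x)})$ equals $u(b^*) = b^*-K = K/\eta$ with probability $x/b^*$ and $0$ otherwise; integrating over $y$ and using $w(1)=1$ then yields the right-hand side of \eqref{cost} as $\frac{K}{\eta}w(x/b^*)$, which is the common target value.

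Next I would compute the left-hand side of \eqref{cost} as a function of $c$. In the case $b\le x$ the stopped-plus-cash state is the constant $x+c$, so the left side equals $(x+c-K)^+$; solving $(x+c-K)^+ = \frac{K}{\eta}w(x/b^*)$ (positivity of the target forcing $x+c>K$) gives $c = K-x+\frac{K}{\eta}w(x/b^*)$, matching the stated formula since $b\vee x = x$ and $(x/b)\wedge 1 = 1$. In the case $x<b$, the ordering $u(c)\le u(b+c)$ lets me split the $y$-integral at these two levels, giving the left side as $(c-K)^+ + \big[(b+c-K)^+-(c-K)^+\big]w(x/b)$; on the branch $c<K\le b+c$ this collapses to $(b+c-K)\,w(x/b)$, and solving against $\frac{K}{\eta}w(x/b^*)$ yields the stated $c = K-b+\frac{K}{\eta}\frac{w(x/b^*)}{w(x/b)}$. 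The one genuine check is that this candidate actually lies on the assumed branch: $b+c>K$ is immediate since the correction term is positive, while for $c<K$ I would reduce the inequality $\frac{K}{\eta}\frac{w(x/b^*)}{w(x/b)}<b$, using $\eta b/K = (\eta+1)b/b^*$ and writing $P=x/b^*<Q=x/b$, to $\frac{w(P)/P}{w(Q)/Q}<\eta+1$; since $w(t)/t = \eta t + (1-\eta)$ is increasing and $P<Q$, the left ratio is below $1$, so the inequality holds. This monotonicity of $w(t)/t$ is the only place the explicit quadratic form of $w$ enters, and I expect the main obstacle to be precisely this bookkeeping — keeping the branch conditions mutually consistent across the cases $b\le x$ and $b>x$ — though it is routine.

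Finally, the sign statements follow without further calculation. The left side of \eqref{cost} is continuous and nondecreasing in $c$ (as $u$ is nondecreasing), with value $J(x;\Lc\tau_b(x))$ at $c=0$ by \eqref{J}. By Proposition~\ref{prop:optimal E}, for $0<x<b^*$ the equilibrium $\hat\tau$ is the \emph{unique} maximizer of $b'\mapsto J(x;T^x_{[b',\infty)})$ on $[0,b^*]$, so for $b<b^*$ we get $J(x;\Lc\tau_b(x))<J(x;\Lc\hat\tau(x))$, i.e.\ the left side at $c=0$ is strictly below the target; hence any solution $c$ is strictly positive, and strict monotonicity on the relevant branch makes it unique. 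For $x\ge b^*$ every threshold $b<b^*$ satisfies $b<x$, so $\Lc\tau_b(x)=\Lc\hat\tau(x)=0$ and both sides of \eqref{cost} already equal $x-K$ at $c=0$; strict monotonicity then forces $c=0$.
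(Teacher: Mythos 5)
Your proposal is correct and follows essentially the same route as the paper: both sides of \eqref{cost} are reduced to explicit expressions via the two-point law of the stopped state, the equation is solved branch by branch ($b\le x$ versus $b>x$, and $c<K$ versus $c\ge K$), and strict positivity of $c$ is obtained from the strict optimality of $\hat\tau$ in Proposition~\ref{prop:optimal E}. The only notable (and harmless) difference is that you verify the branch condition $c<K$ via the monotonicity of $t\mapsto w(t)/t$, whereas the paper establishes the equivalent inequality \eqref{w<w} by direct algebraic expansion; both arguments are valid.
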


Instead of the stringent ``uniform optimality" as above, 
{\it Pareto optimality} has been widely used in the literature to compare different equilibria. It can be formulated in the current context as follows.

\begin{Definition}
For any  $\tau_1, \tau_2 \in\Ec(\R_+)$, we say $\tau_1$ dominates $\tau_2$ if $J(x,\Lc\tau_1(x))\ge J(x,\Lc\tau_2(x))$ for all $x>0$. Furthermore, we say $\tau\in\mathcal A\subseteq\Ec(\R_+)$ is Pareto optimal in $\mathcal A$ if $\tau$ cannot be dominated by any other $\tau'\in\mathcal A$. 
\end{Definition}

\begin{Proposition}\label{prop:Pareto}
Assume $u$ and $w$ are given by \eqref{u and w}. For any $1_{(0,b)}, 1_{(0,b')}\in \Ec'(\R_+)$, $1_{(0,b)}$ dominates $1_{(0,b')}$ if and only if $b\ge b'$. Hence, $\hat \tau$ in \eqref{naive threshold} is the only Pareto optimal equilibrium in $\Ec'(\R_+)$.
\end{Proposition}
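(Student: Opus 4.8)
The plan is to reduce the comparison to a single-variable study of the value function $V_b(x):=J\big(x;\Lc 1_{(0,b)}(x)\big)$ and to show that, for each fixed state $x$, the map $b\mapsto V_b(x)$ is nondecreasing on $[0,b^*]$, where $b^*:=\tfrac{\eta+1}{\eta}K$. Once this monotonicity is in hand, both assertions of Proposition~\ref{prop:Pareto} follow directly from the definition of domination.

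First I would obtain $V_b$ in closed form. The law $1_{(0,b)}$ stops $X^x$ at the first entry into $[b,\infty)$: if $x\ge b$ it stops immediately, giving $u(x)=(x-K)^+$; if $x<b$, the driftless martingale $X^x$ reaches $b$ with probability $x/b$ by \eqref{Tb<infinity} and otherwise decays to $0$. Since $u(0)=0$, evaluating the distorted expectation (equivalently, invoking Lemma~\ref{lem:tau_ab} as in Example~\ref{eg:u,w convex w'(0) finite2}) yields $V_b(x)=w(x/b)(b-K)^+$ for $x<b$ and $V_b(x)=(x-K)^+$ for $x\ge b$.

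The crux is the monotonicity in $b$, and this is the step I expect to be the main obstacle. For $b\le x$ the value equals $(x-K)^+$, independent of $b$, so the issue is confined to $g(b):=w(x/b)(b-K)^+$ on $b\in(x,b^*]$. Using $w(p)=\eta p^2+(1-\eta)p$, a direct differentiation on the range $b>K$ gives $b^3 g'(b)=x\big[2\eta Kx+b\big((1-\eta)K-\eta x\big)\big]$, so $g$ has at most one turning point, located at $b_0=\frac{2\eta Kx}{\eta x-(1-\eta)K}$ (and only when $\eta x>(1-\eta)K$). The decisive algebraic identity is that $b_0\ge b^*$ is equivalent to $x\le b^*$; since we operate in the regime $x<b\le b^*$, this always holds, indeed strictly, so $g'>0$ and $g$ is strictly increasing on $(x,b^*]$ (on $(x,K]$ one has $g\equiv 0$). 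Getting this sign bookkeeping right --- it is the same inequality underpinning Proposition~\ref{prop:optimal E} --- is where the work concentrates. Patching the two regimes shows $b\mapsto V_b(x)$ is nondecreasing on $[0,b^*]$, which is exactly the ``if'' direction: $b\ge b'$ forces $V_b(x)\ge V_{b'}(x)$ for every $x$, i.e. $1_{(0,b)}$ dominates $1_{(0,b')}$.

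For the ``only if'' direction I would argue by contraposition. If $b<b'$ and $b'>K$, then the strict increase of $g$ established above gives $V_{b'}(x)>V_b(x)$ at small $x$, so $1_{(0,b)}$ fails to dominate $1_{(0,b')}$; in the remaining degenerate case $b'\le K$ both values collapse to $u$, which is consistent with the ranking. Finally, the Pareto claim is immediate from this ordering: $\hat\tau=1_{(0,b^*)}$ (cf. \eqref{naive threshold}) dominates every element of $\Ec'(\R_+)$ and, being strictly larger on $(0,b^*)$ by Proposition~\ref{prop:optimal E}, is dominated by no other element, whereas each $1_{(0,b)}$ with $b<b^*$ is dominated by the distinct equilibrium $\hat\tau$; hence $\hat\tau$ is the unique Pareto optimal equilibrium in $\Ec'(\R_+)$.
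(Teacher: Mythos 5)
Your proof is correct and follows essentially the same route as the paper: the paper's argument simply cites the monotonicity of $b\mapsto J(x;T^x_{[b,\infty)})$ on $[0,\tfrac{\eta+1}{\eta}K]$ already established in the proof of Proposition~\ref{prop:optimal E} (there written via the concave quadratic $g(y)$ in $y=1/b$, whose turning point $1/\hat y$ is exactly your $b_0$, with the same pivotal equivalence $b_0\ge b^*\Leftrightarrow x\le b^*$), and then deduces both assertions just as you do. Your version only differs by doing the calculus directly in $b$ and by explicitly flagging the degenerate case $b'\le K$, which the paper's one-line proof passes over.
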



\section{Conclusions}
In this paper, we have established a fixed-point characterization of equilibrium stopping laws for time-inconsistent stopping problems. This provides a new, convenient way to search for equilibrium laws, through fixed-point iterations. In addition, it spells out how the na\"{i}ve stopping law $\hat \tau$ (defined in \eqref{wtau*}) can be transformed into an equilibrium law $\hat\tau_*$.

Our framework is general enough to cover many well-known time-inconsistent stopping problems. In particular, we apply our theoretic results to the case where the agent distorts probability and the underlying state process is a geometric Brownian motion. From all the examples we studied, including those in Subsection~\ref{subsec:examples} and the case study in Subsection~\ref{subsec:demonstration}, we observe that under $\beta> 0$ (where time-inconsistency is present), whenever the iteration starts with the na\"{i}ve law $\hat\tau$, we fall in one of the two cases: (a) the na\"{i}ve law $\hat\tau$ is to continue perpetually and the corresponding equilibrium law $\hat\tau_*$ is to stop immediately, and (b) $\hat\tau$ is aligned with $\hat\tau_*$, and they are both threshold-type strategies. Whether this is a general result that can be established theoretically is an interesting topic for future research.


\appendix

\section{Proof of Proposition \ref{prop:ker increases}}\label{appen:T_x=0}
To prove Proposition \ref{prop:ker increases}, we need to first analyze certain crossing probabilities for the one-dimensional diffusion $X$ introduced in Section~\ref{subsec:1-D case}. To this end, we consider the running maximum and minimum processes defined respectively by
\[
\overline{X}^x_t := \max_{s\in[0,t]} X^x_s\quad\hbox{and}\quad \underline{X}^x_t:= \min_{s\in[0,t]} X^x_s,\quad t\ge 0.
\]
Also, we consider the first revisit time to the initial value $x$:
\begin{equation}\label{Tx}
T^x_x := \inf\{t>0 : X^x_t = x\}\in\Tc.
\end{equation}

\begin{Lemma}\label{lem:T_x=0}
Suppose \eqref{integrability} holds and $Z$ defined in \eqref{Z} is a martingale.
Then, for any $x\in\X$, $\P[\overline{X}^x_t>x]=\P[\underline{X}^x_t<x]=1$ for all $t> 0$. Hence, $T^x_x = 0$ a.s.
\end{Lemma}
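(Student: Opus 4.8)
The plan is to prove the statement by changing measure so that under the new measure $X^x$ becomes a driftless diffusion (a local martingale, in fact a time-changed Brownian motion), for which the oscillation around its starting point is immediate and well understood. Since $Z$ defined in \eqref{Z} is assumed to be a genuine martingale with $Z_0 = 1$, for each fixed $t > 0$ we may define a probability measure $\Q$ on $\Fc_t$ by $d\Q/d\P = Z_t$. By Girsanov's theorem, under $\Q$ the process $\widetilde B_s := B_s + \int_0^s \theta(X_u)\,du$ is a Brownian motion on $[0,t]$, and the dynamics of $X^x$ read $dX^x_s = a(X^x_s)\,d\widetilde B_s$, i.e. $X^x$ is a driftless diffusion under $\Q$. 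The events in question, $\{\overline{X}^x_t > x\}$ and $\{\underline{X}^x_t < x\}$, are $\Fc_t$-measurable, and because $Z_t > 0$ $\P$-a.s., the measures $\P$ and $\Q$ are equivalent on $\Fc_t$; hence it suffices to establish that these events have $\Q$-probability one.

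The key step is therefore to show that under $\Q$, a driftless diffusion started at $x$ immediately rises strictly above $x$ and dips strictly below $x$ with probability one. First I would reduce the driftless diffusion to a Brownian motion by a time change: since $a(\cdot) > 0$ and \eqref{integrability} controls the relevant integrands, the local martingale $M_s := \int_0^s a(X^x_u)\,d\widetilde B_u$ has a strictly increasing quadratic variation $\langle M\rangle_s = \int_0^s a^2(X^x_u)\,du$, so by the Dambis--Dubins--Schwarz theorem $X^x_s - x = W_{\langle M\rangle_s}$ for a Brownian motion $W$. Because $a(\cdot) > 0$, $\langle M\rangle_s > 0$ for every $s > 0$, so the time change carries arbitrarily small positive times $s$ to strictly positive times $\langle M\rangle_s$. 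The conclusion then follows from the elementary fact that for a standard Brownian motion $W$ with $W_0 = 0$ one has, for every $r > 0$, $\P[\sup_{0 \le v \le r} W_v > 0] = \P[\inf_{0 \le v \le r} W_v < 0] = 1$; this is the standard reflection/Blumenthal $0$--$1$ law statement that $0$ is a regular point, so $W$ takes both strictly positive and strictly negative values in every right-neighbourhood of $0$. Transporting this back through the (a.s. strictly positive) time change gives $\Q[\overline{X}^x_t > x] = \Q[\underline{X}^x_t < x] = 1$, and by equivalence the same holds under $\P$.

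Finally, for the statement $T^x_x = 0$ a.s., I would argue by continuity of paths. The equality $\P[\overline{X}^x_t > x] = \P[\underline{X}^x_t < x] = 1$ holding for every $t > 0$, together with the continuity of $s \mapsto X^x_s$, forces $X^x$ to take values strictly above $x$ and strictly below $x$ in every interval $(0,t]$; hence on a set of full measure the continuous path must return to the level $x$ arbitrarily close to time $0$ by the intermediate value theorem. More precisely, taking $t = 1/n$ and intersecting over $n$, on a $\P$-full set the path exceeds $x$ and falls below $x$ within every $(0,1/n]$, so there exist times in $(0,1/n]$ at which $X^x$ equals $x$, whence $T^x_x \le 1/n$ for all $n$ and therefore $T^x_x = 0$ a.s.

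The main obstacle I anticipate is handling the localization cleanly: the process $Z$ is only a nonnegative local martingale in general, and the whole argument hinges on the hypothesis that it is a true martingale so that $\Q$ is a bona fide probability measure with $\Q[\Omega] = 1$. I would take care to work on the finite horizon $[0,t]$ for each fixed $t$ (where the martingale property and hence the change of measure are legitimate), verify the equivalence $\P \sim \Q$ on $\Fc_t$ from $Z_t > 0$ a.s., and only afterwards let $t \downarrow 0$ along $t = 1/n$ to extract the hitting-time conclusion; the diffusion and Brownian facts themselves are standard once the measure change is justified.
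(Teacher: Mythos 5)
Your proposal is correct and follows essentially the same route as the paper's proof: change measure via the martingale $Z$ so that $X$ becomes a driftless (local martingale) diffusion under $\Q$, represent it as a time-changed Brownian motion, invoke the regularity of $0$ for Brownian motion to get immediate up- and down-crossings, transfer back by equivalence of $\P$ and $\Q$, and conclude $T^x_x=0$ by continuity of paths. The only cosmetic difference is that you justify the strict positivity of the time change $\langle M\rangle_s$ explicitly via $a(\cdot)>0$, whereas the paper handles the same point through a limit in $\eta$ of the events $\{W_s\ge\eta\}$; both are fine.
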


\begin{proof}
First, recall from Problem 7.18 on p. 94 of \cite{KS-book-91} that if $W$ is a standard Brownian motion defined on a probability space $(\Omega',\Fc',\P')$, then
\begin{equation}\label{step 1}
\P'[\overline{W}_t>0] = \P'[\underline{W}_t<0]=1,\quad \hbox{for all}\ t> 0,
\end{equation}
where $\overline{W}_t := \max_{s\in[0,t]} W_s$ and $\underline{W}_t:= \min_{s\in[0,t]} W_s$.

Fix $T>0$. With $Z$ in \eqref{Z} being a martingale, we can define a probability $\Q\approx\P$ by $\frac{d\Q}{d\P}=Z_T$.
Note that we have $\Q\approx\P$, instead of merely $\Q\ll\P$, because $Z_T>0$ $\P$-a.s. under \eqref{integrability}. Girsanov's theorem then implies that
under $\Q$, $dX_t = a(X_t) d\widetilde{B}_t$ for $t\in[0,T]$, where
$
\widetilde{B}_t := B(t) + \int_0^t\theta(X_s)ds,\ t\in [0,T],
$
is a $\Q$-Brownian motion. As $X$ is a continuous local martingale under $\Q$, it can be expressed as a time-changed Brownian motion, i.e.,
\[
X_t = x+ W_{[X]_t}\quad  t\in[0,T],
\]
for some standard Brownian motion $\{W_t\}_{t\ge 0}$ under $\Q$. As a consequence,
\begin{align*}
\Q[\overline{X}_T\ge x+\eta] &= \Q[X_s\ge x+\eta,\quad \hbox{for some}\ 0\le s\le T]\\
&= \Q[W_{[X]_s}\ge \eta,\quad \hbox{for some}\ 0\le s\le T]\\
&= \Q[W_{s}\ge \eta,\quad \hbox{for some}\ 0\le s\le [X]_T].
\end{align*}
This implies
\begin{equation}\label{lim E[1]}
\Q[\overline{X}^x_T> x] = \lim_{\eta\downarrow 0}\Q[\overline{X}^x_T\ge x+\eta] =  \lim_{\eta\downarrow 0}\E^\Q\left[1_{\{W_{s}\ge \eta,\ \ \text{for some}\ 0\le s\le [X]_T\}}\right].
\end{equation}
In view of \eqref{step 1}, we have $\Q[\overline{W}_t>0]=1$ for all $t> 0$. This implies that we can find some $\Omega^*\in\Fc$ with $\Q(\Omega^*)=1$ such that for each $\omega\in\Omega^*$, there exist a real sequence $\{t_n(\omega)\}$ with $t_n(\omega)\downarrow 0$ and $W_{t_n(\omega)}(\omega)>0$. It follows that for each $\omega\in\Omega^*$,
\[
1_{\{W_{s}\ge \eta,\ \ \text{for some}\ 0\le s\le [X]_T\}}(\omega) = 1,\quad \hbox{as $\eta$ small enough}.
\]
We hence conclude from \eqref{lim E[1]} that $\Q[\overline{X}^x_T> x] =1$. Similarly, we have
\begin{equation}
\Q[\underline{X}^x_T< x] = \lim_{\eta\downarrow 0}\Q[\underline{X}_T\le x-\eta] =  \lim_{\eta\downarrow 0}\E^\Q\left[1_{\{W_{s}\le -\eta,\ \ \text{for some}\ 0\le s\le [X]_T\}}\right] =1,
\end{equation}
where the last equality follows from $\Q[\underline{W}_t<0]=1$ for all $t> 0$, as shown in \eqref{step 1}. With $\Q\approx\P$, we conclude that $\P[\overline{X}^x_T> x] =\Q[\overline{X}^x_T> x] =1$ and $\P[\underline{X}^x_T< x] =\Q[\underline{X}^x_T< x] =1$.

Note that the result ``$\P[\overline{X}^x_T> x] =\P[\underline{X}^x_T< x] =1$ for all $T>0$'' implies that for any $T = 1/n$, $n\in\N$, the process $X^x$ will cross the horizontal line of level $x$ during the time interval $(0,1/n)$ $\P$-a.s. As a result, $T^x_x < 1/n$ $\P$-a.s. for all $n\in\N$, and thus $T^x_x =0$ $\P$-a.s.
\end{proof}

Now, we are ready to prove Proposition~\ref{prop:ker increases}.

\begin{proof}[Proof of Proposition~\ref{prop:ker increases}]
For any $x\in\ker(\tau)$, there are three possible cases:
\begin{itemize}
\item [1.] $x$ is an interior point of $\ker(\tau)$:
Then $\Lc^*\tau(x)=0$ by definition, and thus $J(x;\Lc^*\tau(x))=u(x)$, i.e., $x\in I_\tau$.
\item [2.] $x$ is a boundary point of $\ker(\tau)$:
By Lemma~\ref{lem:T_x=0}, $\P[\overline{X}^x_t> x] =\P[\underline{X}^x_t< x] =1$ for all $t>0$.
This implies $\Lc^*\tau(x)=\inf\{t>0:X^x_t\in\ker(\tau)\}<1/n$ for all $n\in\N$ a.s., and thus $\Lc^*\tau(x)=0$ a.s.
It follows that $J(x;\Lc^*\tau(x))=u(x)$, i.e., $x\in I_\tau$.
\item [3.] $x$ is an isolated point of $\ker(\tau)$, i.e., $\exists\ \eps>0$ such that $(x-\eps,x+\eps)\cap\ker(\tau)=\{x\}$:
As $T^x_x=0$ a.s. (by Lemma~\ref{lem:T_x=0}), $\Lc^*\tau(x)=0$ a.s.
This gives $J(x;\Lc^*\tau(x))=u(x)$, i.e., $x\in I_\tau$.
\end{itemize}
It follows that $\ker(\tau)\subseteq I_\tau$.
This, together with \eqref{Theta}, shows that $\ker(\Theta\tau) = S_\tau\cup(I_\tau\cap\ker(\tau))=S_\tau\cup\ker(\tau)$.
By repeating the same argument above for each $n>1$, we obtain \eqref{ker increases}.

As $\{\ker(\Theta^n\tau)\}_{n\in\N}$ is a nondecreasing sequence of Borel sets, if $x\in\bigcup_{n\in\N} \ker(\Theta^n\tau)$, then there exists $N>0$ such that $\Theta^n\tau(x)=0$ for all $n\ge N$; if $x\notin\bigcup_{n\in\N} \ker(\Theta^n\tau)$, then $\Theta^n\tau(x)=1$ for all $n\in N$.
This already implies that the limit-taking in \eqref{tau0} is well-defined, and $\ker(\tau_*) = \bigcup_{n\in\N} \ker(\Theta^n\tau)$.
\end{proof}

\section{Proof of Lemma \ref{prop:J satisfies}}\label{appen:J satisfies}

To prove Lemma \ref{prop:J satisfies}, we need the following technical result.

\begin{Lemma}\label{lem:tau_ab}
Suppose $\beta >0$. For any $0\le a<x<b$, denote $\tau_{ab}:=\inf\{t\ge 0: X^x_t\notin (a,b)\}\in\Tc$.
\begin{itemize}
\item [(i)] If $a=0$ and $b=\infty$, then $J(x;\tau_{ab}) = 0$;
\item [(ii)] If $a>0$ and $b=\infty$, then $J(x;\tau_{ab})= u(a)$;
\item [(iii)] If $a\ge 0$ and $b<\infty$, then
\begin{equation}\label{J for tau_ab}
J(x;\tau_{ab}) = u(a) + w\left(\frac{x-a}{b-a}\right) (u(b)-u(a)).
\end{equation}
\end{itemize}
\end{Lemma}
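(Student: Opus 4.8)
The plan is to reduce every case to computing the law of the stopped state $X^x_{\tau_{ab}}$, which is a two-point distribution (possibly degenerate, and possibly with one atom at $0$ arising from the convention $X_\infty=0$), and then to evaluate the integral in \eqref{J} directly. The crucial observation is that once $X^x_{\tau_{ab}}$ is supported on at most two points, $y\mapsto \Pro{u(X^x_{\tau_{ab}})>y}$ is a nonincreasing step function, so $w(\Pro{u(X^x_{\tau_{ab}})>y})$ is piecewise constant in $y$ with breakpoints at $u(a)$ and $u(b)$, and the integral collapses to a finite sum. Concretely, if $X^x_{\tau_{ab}}$ equals $b$ with probability $p$ and $a$ with probability $1-p$ (with $u(a)\le u(b)$, since $u$ is nondecreasing when $\beta>0$), then $\Pro{u(X^x_{\tau_{ab}})>y}$ equals $1$ on $[0,u(a))$, equals $p$ on $[u(a),u(b))$, and equals $0$ on $[u(b),\infty)$; hence, using $w(0)=0$ and $w(1)=1$,
\[
J(x;\tau_{ab}) = u(a)\,w(1) + (u(b)-u(a))\,w(p) = u(a) + w(p)\,(u(b)-u(a)).
\]

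It then remains to identify $p$ and the support in each case. For (i), $\tau_{ab}=\inf\{t: X^x_t\notin(0,\infty)\}=\infty$ a.s.\ because $X^x$ is a strictly positive geometric Brownian motion; the convention sets $X^x_\infty=0$, so $u(X^x_{\tau_{ab}})=u(0)=0$ a.s., and since $w(0)=0$ the integrand vanishes, giving $J=0$. For (ii) (with $a>0$, $b=\infty$), we have $\tau_{ab}=\inf\{t:X^x_t=a\}$; because $X^x$ is continuous with $\lim_{t\to\infty}X^x_t=0<a$ by \eqref{0 at infinity}, it must cross level $a$, so $\tau_{ab}<\infty$ a.s.\ and $X^x_{\tau_{ab}}=a$; the law is degenerate at $a$, and the single-breakpoint version of the displayed identity yields $J=u(a)$. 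For (iii), the stopped state is a genuine two-point law, and I distinguish $a>0$ from $a=0$: when $a>0$, the process exits the bounded interval $(a,b)$ in finite time a.s., and applying optional stopping to the bounded martingale $X^x$ stopped at $\tau_{ab}$ gives $\Exp{X^x_{\tau_{ab}}}=x=pb+(1-p)a$, i.e.\ $p=\frac{x-a}{b-a}$; when $a=0$, the upper atom carries probability $p=\P[T^x_b<\infty]=\frac{x}{b}$ by \eqref{Tb<infinity}, while on the complement $X^x_t\to 0$ so the convention places the remaining mass at $0=a$, and again $p=\frac{x-a}{b-a}$. Substituting into the displayed identity yields \eqref{J for tau_ab}.

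The routine integral bookkeeping is immediate; the steps requiring care are the probabilistic ones. For case (ii) and the $a>0$ part of (iii), I must justify that $\tau_{ab}<\infty$ a.s.\ and that optional stopping applies—the former from the convergence $X^x_t\to 0$ (for (ii)), or from standard one-dimensional diffusion exit theory on a bounded interval whose scale function is finite at both endpoints (for (iii)); the latter from the fact that $X^x$ is bounded by $b$ on $[0,\tau_{ab}]$, so $\{X^x_{t\wedge\tau_{ab}}\}_{t\ge 0}$ is a uniformly integrable martingale. For the $a=0$ part of (iii) and for case (i), the delicate point is the correct use of the convention $X^x_\infty=0$: on the event $\{T^x_b=\infty\}$ the process never leaves $(0,b)$ in finite time, so $\tau_{ab}=\infty$ and the stopped state is $0$, which is precisely the lower atom $a=0$; this is what makes the single formula \eqref{J for tau_ab} valid uniformly in $a\ge 0$. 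I expect reconciling this convention with the hitting-probability identity \eqref{Tb<infinity} to be the main point to state carefully.
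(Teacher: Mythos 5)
Your proposal is correct and follows essentially the same route as the paper's proof: identify the (at most two-point) law of $X^x_{\tau_{ab}}$ using the convention $X^x_\infty=0$, obtain the upper-atom probability $\frac{x-a}{b-a}$ via optional sampling when $a>0$ (justified, as in the paper, by the a.s.\ finiteness of the exit time) and via \eqref{Tb<infinity} when $a=0$, and then evaluate \eqref{J} as an integral of a piecewise-constant survival function. The only cosmetic difference is that you compute the integral directly in layer-cake form, whereas the paper invokes the distribution formulation (3.1) of \cite{XZ13}; the two computations are equivalent.
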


\begin{proof}
First note that parts of this lemma were derived in \cite{XZ13}, while the cases where $\P[\tau_{ab}=\infty]>0$ were not dealt with there. This includes ``$a=0$ and $b=\infty$'' and ``$a=0$ and $b<\infty$''. For completeness and reader's convenience, we present the proof for all possible cases  of $0\le a<b\le\infty$.

Recall that $\tau_{ab} = T^x_a\wedge T^x_b$, with $T^x_a$ and $T^x_b$ defined as in \eqref{Tb<infinity}, and $u$ is nondecreasing with $u(0)=0$ when $\beta>0$.

(i) Observe that $\tau_{ab} = \infty$ a.s. By \eqref{0 at infinity}, $J(x;\tau_{ab})=\int_0^\infty w(\P[u(0)>y])dy = 0$.

(ii) Thanks to \eqref{0 at infinity}, $u(X^x_{\tau_{ab}})=u(a)$ a.s. It follows that $J(x;\tau_{ab})=\int_0^\infty w(\P[u(a)>y])dy = u(a)$.

(iii) We first deal with the case ``$a>0$ and $b<\infty$''. The CDF of $X^x_{\tau_{ab}}$ is $F(y) = p^* 1_{[a,b)}(y)+1_{[b,\infty)}(y)$ for $y\in[0,1]$, where $p^*:=\P[X^x_{\tau_{ab}}=a] = \frac{b-x}{b-a}$ by the optional sampling theorem. Note that the use of the  optional sampling theorem to find $p^*$ requires either $\P[T^x_a<\infty]=1$ or $\P[T^x_b<\infty]=1$. The former is true here thanks to $a>0$ and \eqref{0 at infinity}.
By the distribution formulation ((3.1) in \cite{XZ13}),
\begin{align*}
J(x;\tau_{ab}) &= \int_0^a w(1) u'(y) dy +\int_a^b w\left(1-\frac{b-x}{b-a}\right) u'(y) dy + \int_b^\infty w(0) u'(y) dy,
\end{align*}
which yields the desired result as $w(0)=0$ and $w(1)=1$.

For the case ``$a=0$ and $b<\infty$'', by the fact that $X^x$ does not reach $a=0$ a.s. and \eqref{0 at infinity}, $X^x_{\tau_{ab}} = b 1_{\{T^x_b<\infty\}}$ a.s. This, together with $u(0)=0$, gives $u(X^x_{\tau_{ab}}) = u(b)1_{\{T^x_b<\infty\}}$ a.s. It follows that
\begin{align*}
J(x;\tau_{ab}) &= \int_0^\infty w\left(\P[u(b)1_{\{T^x_b<\infty\}}>y]\right) dy = \int_0^{u(b)} w\left(\P[u(b)1_{\{T^x_b<\infty\}}>y]\right) dy\\
&= \int_0^{u(b)} w(\P[T^x_b<\infty]) dy=w(\P[T^x_b<\infty])u(b) = w\left(\frac xb\right)u(b),
\end{align*}
where the last equality follows from \eqref{Tb<infinity}. Thus, the formula \eqref{J for tau_ab} still holds for $a=0$.
\end{proof}


Now, we are ready to prove Lemma \ref{prop:J satisfies}.

\begin{proof}[Proof of Lemma \ref{prop:J satisfies}]
Fix a $D\in\Bc(\R_+)$. For any $x>0$, define
\begin{equation}\label{a(x) b(x)}
a(x):=\sup_{}\{a<x:a\in D\},\qquad b(x):=\inf_{}\{b>x:b\in D\}.
\end{equation}
If $a(x)=x$ or $b(x)=x$, then $T^x_D=0$ a.s. and thus $J(x;T^x_D)= u(x)$. If $a(x)< x<b(x)$, then $T^x_D=\inf\{t\ge 0: X^x_{t}\notin (a(x),b(x))\}$. We then deduce from Lemma~\ref{lem:tau_ab} that
\begin{equation}\label{J Borel}
J(x;T^x_D) =
\begin{cases}
u(x),\quad &\hbox{if}\  a(x)=x\ \text{or}\ b(x)=x,\\
u(a(x)) + w\left(\frac{x-a(x)}{b(x)-a(x)}\right) (u(b(x))-u(a(x))),\quad &\hbox{if}\ a(x)<x<b(x)<\infty,\\
u(a(x)),\quad &\hspace{-0.27in}\hbox{if}\ 0<a(x)<x<b(x)=\infty,\\
0,\quad &\hspace{-0.27in}\hbox{if}\ 0=a(x)<x< b(x)=\infty.
\end{cases}
\end{equation}
Now, note that with $D_\Q:=\{q\in\Q:a\le q\le b\ \hbox{for some}\ a,b\in D\}$,
\[
a(x) = \sup\{q<x:q\in D_\Q\}=\sup_{q\in D_\Q} q1_{(q,\infty)}(x),\quad b(x) = \inf\{q>x:q\in D_\Q\}=\inf_{q\in D_\Q} q 1_{(0,q)}(x).
\]
It follows that $x\mapsto a(x)$ and $b\mapsto b(x)$ are both Borel measurable. We then conclude form \eqref{J Borel} that $x\mapsto J(x;T^x_D)$ is Borel measurable, i.e., Assumption~\ref{asm:J} (i) is satisfied.

Fix a sequence $\{D_n\}_{n\in\N}$ in $\Bc(\R_+)$ such that $D_{n}\subseteq D_{n+1}$ for all $n\in\N$. For any $x>0$, consider $a(x)$ and $b(x)$ as in \eqref{a(x) b(x)} with $D:= \bigcup_{n\in\N} D_n$. Moreover, we define
\begin{align*}
a_n(x) & :=\sup_{}\{a<x:a\in D_n\},\quad b_n(x):=\inf_{}\{b>x:b\in D_n\},\qquad \hbox{for all $n\in\N$}.
\end{align*}
As $D_{n}\subseteq D_{n+1}$ for all $n\in\N$, we have $a_n(x)\uparrow a(x)$ and $b_n(x)\downarrow b(x)$ as $n\to\infty$. For each $n\in\N$, by the same argument above \eqref{J Borel},
\begin{equation*}
J(x;T^x_{D_n}) =
\begin{cases}
u(x),\quad &\hbox{if}\  a_n(x)=x\ \text{or}\ b_n(x)=x,\\
u(a_n(x)) + w\left(\frac{x-a_n(x)}{b_n(x)-a_n(x)}\right) (u(b_n(x))-u(a_n(x))),\quad &\hbox{if}\ a_n(x)<x<b_n(x)<\infty,\\
u(a_n(x)),\quad &\hspace{-0.27in}\hbox{if}\ 0<a_n(x)<x<b_n(x)=\infty,\\
0,\quad &\hspace{-0.27in}\hbox{if}\  0=a_n(x)<x< b_n(x)=\infty.
\end{cases}
\end{equation*}
From the above formula and \eqref{J Borel}, one may deduce from the continuity of $u$ and $w$ and the convergence $a_n(x)\uparrow a(x)$ and $b_n(x)\downarrow b(x)$ that
\begin{equation}\label{Dn to D}
\lim_{n\to\infty}J(x;T^x_{D_n}) = J(x;T^x_{D}).
\end{equation}
Indeed, the only nontrivial case is ``$a(x) = b(x)=x$ while $a_n(x)< x < b_n(x)$ for all $n\in\N$''. In this case,
\begin{align*}
\lim_{n\to\infty}J(x;T^x_{D_n}) &= \lim_{n\to\infty}\left[u(a_n(x)) + w\left(\frac{x-a_n(x)}{b_n(x)-a_n(x)}\right) (u(b_n(x))-u(a_n(x)))\right]\\
&= u(x) + \lim_{n\to\infty} \left[w\left(\frac{x-a_n(x)}{b_n(x)-a_n(x)}\right) (u(b_n(x))-u(a_n(x)))\right]= u(x)=J(x;T^x_{D}),
\end{align*}
where the third equality follows from $a(x) = b(x)=x$ and $w$ being a bounded function. Then \eqref{Dn to D} in particular implies that Assumption~\ref{asm:J} (ii) is satisfied.
\end{proof}


\section{Proofs for Subsection~\ref{subsec:examples}}\label{sec:proofs for Sec 4.3}
\begin{proof}[Proof of Proposition~\ref{thm:concave u, R S-shaped w}]
Thanks to \eqref{induced policy_quantile}, for each $x>0$, $\hat\tau(x)=0$ if and only if $G^*_{x}(\cdot)\equiv x$.

(i) With $y^* <\infty$, we observe from $w'(0+) = \infty$ that for any $\lambda\ge 0$, the term $(u')^{-1}_\ell\left(\frac{\lambda}{w'(1-y)}\right)$ in (5.6) of \cite{XZ13} increases to $y^*$ as $y\uparrow 1$.  Then, for any $x>0$, (5.7) of \cite{XZ13} implies that $G^*_{x}(\cdot)\equiv x$ holds only when $a^*=x\ge y^*$. This already shows that $\hat\tau(x)=1$ for all $x<y^*$. Now, for $x\ge y^*$, observe that with $a=x$, the constraint in (5.6) of \cite{XZ13} becomes
\[
x q +\int_q^1 x\vee(u')^{-1}_\ell\left(\frac{\lambda}{w'(1-y)}\right)dy = x q +\int_q^1 xdy= x.
\]
That is, the constraint is satisfied for any $\lambda\ge 0$. Similarly, with $a=x$ and any $\lambda\ge 0$, the objective function in (5.6) of \cite{XZ13} has the value
\[
(1-w(1-q))u(x) + \int_q^1 u\left(x \right) w'(1-y)dy = u(x)=\sup_{y>0} u(y).
\]
This already attains the maximum of (5.6) in \cite{XZ13}. Indeed, the maximization therein can be simplified as
\begin{align*}
&\max_{a,\lambda\ge 0} \left\{(1-w(1-q))u(a) + \int_q^1 u\left(a\vee (u')^{-1}_\ell\left(\frac{\lambda}{w'(1-y)}\right) \right) w'(1-y)dy\right\}\\
=&\max_{a\ge y^*} \left\{(1-w(1-q))u(a) + \int_q^1 u\left(a\right) w'(1-y)dy\right\} = \max_{a\ge y^*} u(a) =\sup_{y>0} u(y).
\end{align*}
We therefore conclude that $a^*=x$ and any $\lambda\ge 0$ form a solution to (5.6) of \cite{XZ13}, which yields $G^*_x(\cdot)\equiv x$. Thus, $\hat\tau(x) =0$ for all $x\ge y^*$. We therefore obtain $\hat\tau(x) = 1_{(0,y^*)}(x)$ for all $x> 0$.
Finally, for any $x\ge y^*$, $\Lc^*\hat\tau(x)=0$ and thus $J(x;\Lc^*\hat\tau(x))=u(x)$, which implies $\Theta\hat\tau(x)=\hat\tau(x)=0$. For any $x< y^*$, $\Lc^*\hat\tau(x)$ is the first hitting time of $X^x$ to the value $y^*$. Thus, $J(x;\Lc^*\hat\tau(x))=\int_0^\infty w(\P[u(y^*)>y])dy = u(y^*) \ge u(x)$. It follows that $\Theta\hat\tau(x)=1=\hat\tau(x)$. As a result, $\hat\tau_*(x) = \Theta\hat\tau(x) = \hat\tau(x)$ for all $x>0$.

(ii) 
If $\lambda^*=0$, then $(u')^{-1}_\ell\left(\frac{\lambda^*}{w'(1-y)}\right) =(u')^{-1}_\ell(0)=\infty$ in (5.6) of \cite{XZ13} for all $y\in(q,1)$, as $\sup_{y>0}u(y)$ is not attained. This implies that the constraint in (5.6) of \cite{XZ13} cannot be satisfied, a contradiction.
Assume $\lambda^*>0$. Because $w'(0+)=\infty$, $\frac{\lambda^*}{w'(1-y)}\to 0$ as $y\uparrow 1$. The concavity of $u$ and the fact that $\sup_{y>0}u(y)$ is not attained then imply that $(u')^{-1}_\ell\left(\frac{\lambda^*}{w'(1-y)}\right)\to \infty$ as $y\uparrow 1$, which shows $G^*_x(\cdot)\not\equiv x$. Thus, $\hat\tau(x)=1$ for all $x>0$. By Proposition~\ref{prop:drastic change}, $\hat\tau_*(x) = 0$ for all $x>0$.
\end{proof}

\begin{proof}[Proof of Proposition~\ref{thm:concave u, R S-shaped w'}]
In \cite{XZ13}, while the mathematical program for $G^*_x$ in Section 6 (for $S$-shaped $u$) is more complicated than that in Theorem 5.2 (for concave $u$), we note that under the condition $G^*_x(\cdot)\equiv x$ they actually coincide. Thus, the same argument in the proof of Proposition~\ref{thm:concave u, R S-shaped w} can be used to establish the desired results.
\end{proof}


\section{Proofs for Subsection~\ref{subsec:demonstration}}\label{sec:proofs for Sec 4.2}

\begin{proof}[Proof of Proposition~\ref{eg:u,w convex w'(0) finite}]
As $u$ is convex, \eqref{v} is equivalent to \eqref{v reduced}, given by $\sup_{\lambda\in(0,1]} f(\lambda)$ with $f(\lambda):=  (\eta\lambda^2+(1-\eta)\lambda)\left(\frac x\lambda-K\right)^+$. For $x> K$, $f'(\lambda) = \eta x-(1-\eta) K-2\eta K\lambda$. Define
\[
\bar\lambda:=\frac{\eta(x+K)-K}{2\eta K}.
\]
If $x> \frac{\eta+1}{\eta}K$, $f'(\lambda)> (\eta+1)K-(1-\eta)K-2\eta K =0$ for all $\lambda\in(0,1]$, which implies that $\lambda^*=1$ is the maximizer. If $K< x\le \frac{\eta+1}{\eta}K$, then $\lambda^*=\bar\lambda\le 1$ is the maximizer.
For $x\le K$, observe that
\begin{equation}\label{f(lambda)}
f(\lambda)=
\begin{cases}
-\eta K\left[\lambda-\bar\lambda\right]^2 + (1-\eta)x+ \eta K \bar\lambda^2, \quad &\hbox{if}\ 0<\lambda\le \frac xK,\\
0,\quad &\hbox{if}\  \frac xK< \lambda\le 1.
\end{cases}
\end{equation}
If $\eta\le 1/2$, then $\bar\lambda \le \frac{x-K}{4\eta K}\le 0$. This, together with \eqref{f(lambda)}, implies that $\sup_{\lambda\in(0,1]}f(\lambda)$ admits no maximizer. If $\eta>1/2$, then for $x>\frac{1-\eta}{\eta}K$, it can be checked that $\bar\lambda\in (0,\frac xK)$, which implies that $\lambda^*=\bar\lambda$ is the maximizer (thanks again to \eqref{f(lambda)}); for $x\le \frac{1-\eta}{\eta}K$, we have $\bar\lambda\le 0$ and thus $\sup_{\lambda\in(0,1]}f(\lambda)$ admits no maximizer.
Summarizing these cases leads to the formula of $\hat\tau_x$. The na\"{i}ve stopping law follows directly from the formula of $\hat\tau_x$.
\end{proof}

\begin{proof}[Proof of Proposition~\ref{prop:find equilibria}]
For $b\le K$, $J(x;\Lc^*\tau(x)) =u(x)$ for all $x>0$, and thus $I_\tau = \R_+$. This implies $\Theta\tau(x) = 1_{(0,b)}(x) = \tau(x)$ for all $x>0$, i.e., $\tau\in\Ec(\R_+)$.
In the rest of the proof, we assume $b> K$. For $x\ge b$, $\Lc^*\tau(x)=0$ and thus $J(x;\Lc^*\tau(x))=u(x)$, which implies $[b,\infty)\subseteq I_\tau$. For $0<x\le K$, $J(x;\Lc^*\tau(x))>0=u(x)$ and thus $(0,K]\subseteq C_\tau$.
For $K<x< b$, Lemma~\ref{lem:tau_ab}-(iii) gives
\begin{align}
J(x;\Lc^*\tau(x)) -u(x) &= \left(\eta\frac{x^2}{b^2}+(1-\eta) \frac{x}{b}\right) (b-K) - (x-K)\notag\\
& = \frac{\eta(b-K)}{b^2} \left[x^2 - \left(b+\frac{Kb}{\eta(b-K)}\right) x+\frac{Kb^2}{\eta(b-K)} \right]\notag\\
&= \frac{\eta(b-K)}{b^2} (x-b) \left(x- b' \right),\quad  \hbox{where}\ b':= \frac{Kb}{\eta(b-K)}>\frac{K}{\eta}. \label{J-u}
\end{align}

For $b>\frac{\eta+1}{\eta}K$, observe that
\begin{equation}\label{b'<}
b'< \frac{\eta+1}{\eta} K.
\end{equation}
Indeed, $b'/(\frac{\eta+1}{\eta}K)=\frac{b}{(\eta+1)(b-k)}<1$ if and only if $0<\eta(b-K)-K$, which is equivalent to $b>\frac{\eta+1}{\eta}K$. 
Because \eqref{b'<} in particular implies $b' < b$, we deduce from \eqref{J-u} that $J(x;\Lc^*\tau(x)) -u(x)>0$ for $K<x<b'$, $J(x;\Lc^*\tau(x)) -u(x)<0$ for $b'<x<b$, and $J(x;\Lc^*\tau(b')) -u(b')=0$. We therefore conclude that $S_\tau = (b',b)$, $C_\tau = (0,b')$, and $I_\tau = \{b'\}\cup[b,\infty)$, which implies
\[
\tau_1:= \Theta\tau(x) = 1_{(0,b']}(x)\quad \hbox{for all}\ x>0.
\]
By the same argument above \eqref{J-u}, we obtain $[b',\infty)\subseteq I_{\tau_1}$, $(0,K]\subseteq C_{\tau_1}$, and when $K<x< b'$,
\begin{align*}
J(x;\Lc^*\tau_1(x)) -u(x)
&= \frac{\eta(b'-K)}{(b')^2} (x-b') \left(x- b'' \right),\quad \hbox{with}\quad b'':= \frac{Kb'}{\eta(b'-K)}.
\end{align*}
As \eqref{b'<} yields $\frac{K}{\eta(b'-K)}>1$, we have $b''>b'$. It follows that $J(x;\Lc^*\tau_1(x)) -u(x)>0$ for all $K<x<b'$. We thus conclude that $C_{\tau_1} = (0,b')$ and $I_{\tau_1} = [b'',\infty)$, which shows that
\[
\Theta\tau_1(x) = 1_{(0,b']}(x)=\tau_1(x)\quad \hbox{for all}\ x>0.
\]
That is, $\Theta\tau=\tau_1\in\Ec(\R_+)$ and $\tau_*(x) = \lim_{n\to\infty}\Theta^n\tau(x) = \Theta\tau(x) = 1_{(0,b']}(x)$ for all $x>0$. 

Finally, for $b\in (K,\frac{\eta+1}{\eta} K]$, we have $b'\ge \frac{\eta+1}{\eta} K$, by the same argument below \eqref{b'<}. With $b'\ge b$, we deduce from \eqref{J-u} that $C_{\tau} = (0,b)$ and $I_{\tau} = [b,\infty)$. Thus, $\Theta\tau(x) = 1_{(0,b)}(x) = \tau(x)$ for all $x>0$, i.e., $\tau\in\Ec(\R_+)$.
\end{proof}

\begin{proof}[Proof of Proposition~\ref{prop:optimal E}]
For any $x< \big(\frac{\eta+1}{\eta}\big)K$, we deduce from Lemma~\ref{lem:tau_ab} (iii) that
\begin{align*}
\sup_{b\in [0,\frac{\eta+1}{\eta} K]}  J(x;T^x_{[b,\infty)}) &= \sup_{b\in [x,\frac{\eta+1}{\eta} K]} \left(\eta\frac{x^2}{b^2}+(1-\eta) \frac{x}{b}\right) (b-K)= \sup_{b\in [x,\frac{\eta+1}{\eta} K]} x g(1/b) + (1-\eta) x, 
\end{align*}
where $g(y) := -\eta K x y^2+[\eta(x+K)-K] y$. The quadratic function $g(y)$ attains a unique global maximum on $\R$ at $\hat y = \frac{\eta(x+K)-K}{2\eta K x}$. If $x\le \big(\frac{1-\eta}{\eta}\big)K$, then $\hat y\le 0$. This implies $b\mapsto g(1/b)$ is strictly increasing on $[x,\frac{\eta+1}{\eta} K]$, and the maximum is attained at $b^*= \frac{\eta+1}{\eta} K$. If $x> \big(\frac{1-\eta}{\eta}\big)K$, then $\hat y >0$. It can be checked that $x< \big(\frac{\eta+1}{\eta}\big)K$ if and only if $1/\hat y > \big(\frac{\eta+1}{\eta}\big)K$. This again implies that $b\mapsto g(1/b)$ is strictly increasing on $[x,\frac{\eta+1}{\eta} K]$, and the maximum is attained at $b^*= \frac{\eta+1}{\eta} K$.

For any $x\ge \big(\frac{\eta+1}{\eta}\big)K$, we simply have $J(x;T^x_{[b,\infty)})=u(x)$ for all $b\in \big[0,\frac{\eta+1}{\eta} K\big]$.
\end{proof}

\begin{proof} [Proof of Proposition~\ref{prop:cost}]
Fix $0<x<b^*$. By the proof of Proposition~\ref{prop:optimal E} (see above), we have $J(x,\Lc \hat\tau(x)) = w\left(\frac{x}{b^*}\right) \frac{K}{\eta}$.
Given $0\le b< b^*$, let $\tau_b := 1_{(0,b)}\in \Ec'(\R_+)$. If $b\le x$, then $\Lc\tau_b(x) = 0$. Thus, the left hand side of \eqref{cost} becomes $\int_0^\infty w\left(\P[(x+c-K)^+ > y]\right) dy = (x+c-K)^+$. Solving \eqref{cost} then yields
\begin{equation}\label{c1}
c = K-x + w\left(\frac{x}{b^*}\right) \frac{K}{\eta}.
\end{equation}
Note that $c>0$. 
If $x\le K$, $c> 0$ by definition. If $x>K$, we deduce again from $\Lc\tau_b(x) = 0$ that
\[
x-K = \int_0^\infty w\left(\P[u(X^x_{\Lc\tau_b(x)}) > y]\right) dy = J(x, \Lc\tau_b(x)) < J(x,\Lc \hat\tau(x)) = w\left(\frac{x}{b^*}\right) \frac{K}{\eta},
\]
where the inequality follows from Proposition~\ref{prop:optimal E}. This shows that $c>0$.

It remains to deal with the case $b> x$. First, we observe that
\begin{align}
w\left(\frac{x}{b^*}\right) \frac{K}{\eta}-w\left(\frac{x}{b}\right) b &= \left[\eta \left(\frac{x}{b^*}\right)^2+(1-\eta) \frac{x}{b^*}\right]\frac{K}{\eta} - \left[\eta \left(\frac{x}{b}\right)^2+(1-\eta) \frac{x}{b}\right] b\notag\\
&=\eta \left[ \left(\frac{x}{b^*}\right)^2 \frac{K}{\eta} - \left(\frac{x}{b}\right)^2 b \right] + (1-\eta) \left[ \frac{x}{b^*} \frac{K}{\eta}+ x\right]\notag\\
&= \eta x^2 \left[ \frac{1}{(1+\eta)b^*}-\frac{1}{b}\right] - \frac{\eta (1-\eta) x}{1+\eta}<0, \label{w<w}
\end{align}
where the third equality follows from $b^* = \frac{\eta+1}{\eta} K$, and the inequality is due to $x>0$, $\eta\in (0,1)$, and $0<b<b^*$. Now, because $b> x$ implies $\Lc\tau_b(x) = T^x_b$ and  $u(X^x_{T^x_b} + c ) = u(c) 1_{\{T^x_b =\infty\}} + u(b+c) 1_{\{T^x_b < \infty\}}$, the argument in the last equation of the proof of Lemma~\ref{lem:tau_ab} (see Appendix~\ref{appen:J satisfies}) gives
\begin{align}
\int_0^\infty &w\left(\P[u(X^x_{\Lc\tau_b(x)} + c) > y]\right) dy = \int_0^{u(c)} w(1) dy+ \int_{u(c)}^{u(b+c)} w\left(\P[T^x_b <\infty]\right) dy\notag\\
 &= u(c) + w\left(\frac{x}{b}\right) (u(b+c)-u(b)) = (c-K)^+ + w\left(\frac{x}{b}\right) ((b+c-K)^+-(c-K)^+). \label{cost left}
\end{align}
We claim that $c<K$. Indeed, if $c\ge K$, solving \eqref{cost}, with the aid of \eqref{cost left}, yields $c = K+w\left(\frac{x}{b^*}\right) \frac{K}{\eta}-w\left(\frac{x}{b}\right) b$. By \eqref{w<w}, this implies $c<K$, a contradiction. Knowing that $c<K$, \eqref{cost left} yields $\int_0^\infty w\big(\P[u(X^x_{\Lc\tau_b(x)} + c) > y]\big) dy =w\left(\frac{x}{b}\right) (b+c-K)^+$. Solving \eqref{cost} thus gives
\begin{equation}\label{c2}
c = K-b + \frac{K}{\eta} \frac{w\left({x}/{b^*} \right)}{w\left({x}/{b}\right)}.
\end{equation}
Note that $c<K$ can be easily verified: by \eqref{w<w}, $[w\left({x}/{b^*} \right){K}/{\eta}]/  [ w\left({x}/{b}\right) b] < 1$, which implies $c = K - b\big(1 - [w\left({x}/{b^*} \right){K}/{\eta}]/  [ w\left({x}/{b}\right) b]\big) < K$. Also, $c$ is positive: if $b\le K$, then $c> 0$ by definition; if $b>K$, then
$
w\left({x}/{b}\right)(b-K)  = J(x, \Lc\tau_b(x)) < J(x,\Lc \hat\tau(x)) = w\left({x}/{b^*}\right) {K}/{\eta},
$
where the first equality follows from Lemma~\ref{lem:tau_ab} (iii) and the inequality is due to Proposition~\ref{prop:optimal E}. This shows that $c>0$. Combining \eqref{c1} and \eqref{c2} then gives the desired result.

Given $x\ge b^*$, because $\tau \mapsto J(x;\Lc\tau(x))$ is a constant function on $\Ec'(\R_+)$ (Proposition~\ref{prop:optimal E}), $c(x,b) = 0$ for all $0<b<b^*$.
\end{proof}

\begin{proof}[Proof of Proposition~\ref{prop:Pareto}]
The proof of Proposition~\ref{prop:optimal E} shows that $b\mapsto J(x;T^x_{[b,\infty)})$ in nondecreasing on $[0,\frac{\eta+1}{\eta}K]$, for any $x>0$. This directly implies the first assertion. It follows that $1_{(0,b)}$ can be Pareto optimal in $\Ec'(\R_+)$ only when $b$ takes the largest possible value $\frac{\eta+1}{\eta}K$.
\end{proof}

\bibliographystyle{agsm}
\bibliography{refs}

\end{document}